%
%
\documentclass[10pt,a4paper]{article}

\usepackage{amsmath,amsgen,latexsym}
\usepackage{amstext,amssymb,amsfonts,latexsym}
\usepackage{theorem}
\usepackage{pifont}
\usepackage[hyphens]{url}
\usepackage{microtype}
\usepackage{graphicx}
\usepackage{pifont}

\setlength{\evensidemargin}{-0.1cm}
\setlength{\oddsidemargin}{-0.1cm}
\setlength{\topmargin}{-0.7cm}
\setlength{\textheight}{24.5cm}
\setlength{\textwidth}{16.5cm}

\setlength{\headsep}{0cm}
\setlength{\headheight}{0cm}
\setlength{\marginparwidth}{0cm}



 \newcommand{\bs}{\bigskip}
 \newcommand{\ms}{\medskip}
 \newcommand{\n}{\noindent}
 \newcommand{\s}{\smallskip}
 \newcommand{\hs}[1]{\hspace*{ #1 mm}}
 \newcommand{\vs}[1]{\vspace*{ #1 mm}}



 \newcommand{\setempty}{\varnothing}
 
 \newcommand{\nat}{\mathbb{N}}
 
 \newcommand{\integer}{\mathbb{Z}}





 \newcommand{\PP}{{\cal P}}











\theoremstyle{plain}
\theoremheaderfont{\bfseries}
\setlength{\theorempreskipamount}{3mm}
\setlength{\theorempostskipamount}{3mm}

 \newtheorem{theorem}{Theorem}[section]
 \newtheorem{lemma}[theorem]{Lemma}
 \newtheorem{proposition}[theorem]{{\bf Proposition}}
 \newtheorem{corollary}[theorem]{Corollary}

 \newenvironment{proof}{\par \noindent
            {\bf Proof. \hs{2}}}{\hfill$\Box$ \vspace*{3mm}}

 \newenvironment{proofof}[1]{\vspace*{5mm} \par \noindent
         {\bf Proof of #1.\hs{2}}}{\hfill$\Box$ \vspace*{3mm}}



\newcommand{\ignore}[1]{}

\newcommand{\cent}{{|}\!\!\mathrm{c}}
\newcommand{\dollar}{\$}



\newcommand{\mmid}{\!\mid\!}

\newcommand{\boldvec}[1]{\mbox{\boldmath $ #1 $}}




%
 \begin{document}

\pagestyle{plain}
\pagenumbering{arabic}
\setcounter{page}{1}
\setcounter{footnote}{0}

\begin{center}
{\Large {\bf The No Endmarker Theorem for One-Way Probabilistic Pushdown Automata}}
\bs\ms\\

{\sc Tomoyuki Yamakami}\footnote{Current Affiliation: Faculty of Engineering, University of Fukui, 3-9-1 Bunkyo, Fukui 910-8507,  Japan} \ms\\
\end{center}


\begin{abstract}
In various models of one-way pushdown automata, the explicit use of two designated endmarkers on a read-once input tape has proven to be    extremely useful for making a conscious, final decision on the acceptance/rejection of each input word immediately after reading the right endmarker. With no endmarkers, by contrast,
a machine must constantly stay in either accepting or rejecting states at any moment since it never notices the end of the input word. This situation, however, helps us analyze the behavior of the machine whose tape head makes the consecutive moves on all prefixes of a given extremely long input word.
Since those two machine formulations have their own advantages, it is natural to ask whether the endmarkers are truly necessary to correctly recognize languages.
In the deterministic and nondeterministic models, it is well-known that the endmarkers are removable without changing the acceptance criteria of each input word.
This paper proves that, for a more general model of one-way probabilistic pushdown automata, the endmarkers are always removable. This is proven by
employing probabilistic transformations from an ``endmarker'' machine to an equivalent ``no-endmarker'' machine at the cost of double exponential stack-state complexity without compromising its error probability.
By setting this error probability appropriately, our proof also provides an alternative proof to both the deterministic and the nondeterministic models as well.

\s
\n{\bf Keywords.} 
probabilistic pushdown automata, endmarker, descriptional complexity, acceptance criteria, stack-state complexity 
\end{abstract}

\sloppy

\section{With or Without Endmarkers, That is a Question}\label{sec:introduction}

\subsection{Two Different Formulations of One-Way Pushdown Automata}\label{sec:ppda-intro}

In automata theory, \emph{pushdown automata} are regarded as one of the most fundamental architectures operated with finite-state controls.
A pushdown automaton is a simple finite-controlled machine equipped with a special memory device, called a \emph{stack}, in which  information is stored and modified in the first-in last-out manner. Here, we are focused on one-way pushdown automata whose tape heads either move to the right or stay still at any step. In particular, a nondeterministic variant of those machines, called  \emph{one-way nondeterministic pushdown automata} (or 1npda's), are known to characterize context-free languages. Similarly, \emph{one-way deterministic pushdown automata} (or 1dpda's) introduce deterministic context-free languages. These machines can be seen as special cases of a much general model of
\emph{one-way probabilistic pushdown automata} (or 1ppda's). Macarie and Ogihara \cite{MO98} discussed the computational complexity of languages recognized  by 1ppda's with unbounded-error probability.
In recent literature, Hromkovi\v{c} and Schnitger \cite{HS10} and Yamakami \cite{Yam17} further discussed the limitations of the power of bounded-error 1ppda's.

In many textbooks and scientific papers, ignoring small variational deviations, various one-way pushdown automata have two distinct but widely-used formulations.
At the first glance, those formulations look quite different and it is not immediately obvious that they are essentially ``equivalent'' in recognition power. In the first formalism, an input string over a fixed alphabet is initially given to a read-only input tape together with two designated \emph{endmarkers}, which mark both ends of the input string. The machine starts at scanning the left endmarker $\cent$ and moves its tape head to the right whenever it accesses an input symbol until it enters a halting state  (i.e., either an accepting state or a rejecting state). When the machine enters a halting state, the machine is considered to ``halt'' and its computation terminates in acceptance or rejection depending on the type of halting states.
The machine may be allowed to
proceed without reading any input symbol, even after reading the right endmarker $\dollar$.
A series of such special steps is referred to as a final series of   \emph{$\lambda$-moves} or a \emph{$\lambda$-transitions}, which can help the machine, for example, empty its stack before halting.
Even if the machine halts even before reading the endmarker by simply entering halting states, it is possible to postpone the halting time until the machine reads the right endmarker because of the existence of the right endmarker.

In the second formalism, by sharp contrast, the input tape contains only a given input string without the presence of two endmarkers and the machine starts reading the input from left to right until it reads off the rightmost input symbol. Since the machine need to access the entire input string without knowing the end of the input, whenever the machine reads off the input, it is considered to ``halt'' and the current inner state of the machine determines the acceptance and rejection of the input. Here, the machine may be either allowed or disallowed to make a series of $\lambda$-moves just after reading the rightmost input symbol.

For convenience, in this paper, we call a model described by the first formalism an   \emph{endmarker model} and one by the second formalism a \emph{no-endmarker model}. See Fig.~\ref{fig:difference-1-ppda} for these two models.
Ka\c{n}eps, Geidmanis, and Freivalds \cite{KGF97}, Hromkovi\v{c} and Schnitger \cite{HS10}, and Yamakami \cite{Yam17} all used a no-endmarker model of 1ppda's (succinctly called \emph{no-endmarker 1ppda's}) in their analyses of machine's behaviors, whereas an endmarker model of  1ppda's (called \emph{endmarker 1ppda's}) was used by Macarie and Ogihara \cite{MO98} and Yamakami \cite{Yam19}. It is commonly assumed that endmarker 1ppda's are equivalent in recognition power to no-endmarker 1ppda's. Unfortunately, there is no ``formal''  proof for the equivalence between those two models \emph{without compromising error probability}.

The roles of two endmarkers, the left and the right endmarkers, are clear. In the presence of the right endmarker $\dollar$, in particular, whenever the machine reads $\dollar$, the machine surely notices the end of the input string and, based on this knowledge, it can make the final transition (followed by a possible series of $\lambda$-moves) before entering either accepting or rejecting states.
In the presence of $\dollar$, moreover, we can make a series of $\lambda$-moves to empty the stack before halting.
Without the right endmarker, however, the machine must be always in a ``constantly halting'' inner state (either an accepting state or a rejecting state). The acceptance or rejection of the machine is determined by only  such an inner state obtained just after reading off the entire input string and following a (possible) series of $\lambda$-moves.
This halting condition certainly helps us analyze the behavior of the machine simply by tracing down the changes of accepting and rejecting states on all prefixes of any extremely long input string. For instance, some of the results of Hromkovi\v{c} and Schnitger \cite{HS10} and Yamakami \cite{Yam17} take advantages of this property.

For 1dpda's as well as 1npda's, Hopcroft and Ullman \cite{HU79} used  in their textbook the no-endmarker model to define pushdown automata whereas Lewis and Papadimitriou \cite{LP98} suggested in their textbook the use of the endmarker model.
For those basic pushdown automata, the endmarker model and the no-endmarker model are in fact well-known to be ``equivalent'' in their computational power.
We succinctly call this assertion the \emph{no endmarker theorem} throughout this paper.

For 1npda's, for instance, we can easily convert one model to the other without compromising its acceptance/rejection criteria by first transforming a 1npda $M$ to its equivalent context-free grammar, converting it to Greibach Normal Form, and then translating it back to its equivalent 1npda (see, e.g., \cite{HU79}).

The question of whether the endmarker models of one-way pushdown automata can be computationally equivalent to the no endmarker models of the same machine types is so fundamental and useful, particularly, in the study of various types of one-way pushdown automata.
This paper extends our attention to 1ppda's---a probabilistic variant of pushdown automata.
For such 1ppda's, we wish to argue whether the no endmarker theorem is indeed true because, unfortunately, not all types of pushdown automata enjoy the no endmarker theorem.
The right endmarker can be eliminated if our machine model is closed under right quotient with regular languages (see, e.g., \cite{HU79}). Since 1dpda's and 1npda's satisfy such a closure property \cite{GG66}, the no endmarker theorem holds for those machine models.
For counter machines (i.e., pushdown automata with single letter stack alphabets except for the bottom marker), on the contrary, the endmarkers are generally unremovable \cite{EIM19}; more precisely, the deterministic reversal-bounded multi-counter machines with endmarkers are generally more powerful than the same machines with no endmarkers. This latter fact demonstrates the crucial role of the endmarkers
for pushdown automata in general.
It is therefore of importance to discuss the removability of the endmarkers for bounded-error and unbounded-error 1ppda's.

\subsection{The No Endmarker Theorem}\label{sec:intro-no-endmarker}

This paper presents the proof of the \emph{no endmarker theorem} for 1ppda's with arbitrary error probability. More precisely, we prove the following statement, which allows us to safely remove the two endmarkers of 1ppda's without compromising error probability.

\begin{theorem}\label{no-endmarker}[No Endmarker Theorem]
Let $\Sigma$ be any alphabet and let $\varepsilon:\Sigma^*\to[0,1/2)$  be any error-bound parameter (in the case of one-sided error, we can take $\varepsilon(x)\in[0,1)$ instead). For any language $L$ over $\Sigma$, the following two statements are logically equivalent.
\begin{enumerate}\vs{-2}
  \setlength{\topsep}{-2mm}%
  \setlength{\itemsep}{0mm}%
  \setlength{\parskip}{0cm}%

\item There exists a 1ppda with two endmarkers that recognizes $L$ with error probability $\varepsilon(x)$ on every input $x$.

\item There exists a 1ppda with no endmarker that recognizes $L$ with  error probability $\varepsilon(x)$ on every input $x$.
\end{enumerate}
\end{theorem}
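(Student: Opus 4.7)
The plan is to dispatch direction (2) $\Rightarrow$ (1) syntactically and to invest the real work in (1) $\Rightarrow$ (2). For the easy direction, given a no-endmarker 1ppda $M'$, I would construct an endmarker 1ppda $M$ whose $\cent$-transition prepares $M'$'s initial configuration, whose transitions on input symbols copy those of $M'$ verbatim, and whose $\dollar$-transition enters an accepting (respectively rejecting) halting state whenever $M'$ is currently in an accepting (respectively rejecting) state. Since this mirrors $M'$'s computation on every prefix step by step, the acceptance probability, and hence the error bound $\varepsilon(x)$, is preserved pointwise.

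The hard direction (1) $\Rightarrow$ (2) is where the work lies. Let $M$ be an endmarker 1ppda; after a standard normal form, I assume each non-$\lambda$ transition performs exactly one of push-one, pop, or no-op on the stack. The first step is to extract from $M$ the \emph{post-endmarker surface probabilities}: for every $(p, Z, q) \in Q \times \Gamma \times Q$, the probabilities $\alpha(p, Z)$, $\beta(p, Z)$, $r(p, Z, q)$ that, started from state $p$ with $Z$ on top of the stack and running only the post-$\dollar$ $\lambda$-dynamics, $M$ accepts before popping $Z$, rejects before popping $Z$, or first pops $Z$ while emerging in state $q$; these are the least nonnegative solution of a natural polynomial fixed-point system and depend on $M$ alone. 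The second step is to construct a no-endmarker 1ppda $M'$ whose state is a pair $(q, F) \in Q \times \{\mathrm{a}, \mathrm{r}, \mathrm{h}\}^Q$ and whose stack symbol is a pair $(Z, F) \in \Gamma \times \{\mathrm{a}, \mathrm{r}, \mathrm{h}\}^Q$, with the intended invariant that $F$ encodes, as a function of $M$'s current state, a correctly distributed \emph{sample} of the virtual outcome if $\dollar$ were read at this moment. The accepting states of $M'$ are exactly those $(q, F)$ with $F(q) = \mathrm{a}$. The simulation rule is: $\lambda$- and state-only moves of $M$ leave $F$ intact; a pop by $M$ additionally pops $(Z, F_{\text{old}})$ and resets the $F$-component to $F_{\text{old}}$; a push of $Z$ by $M$ independently samples, for each $q \in Q$, an outcome $\sigma(q) \in \{\mathrm{a}, \mathrm{r}, \mathrm{h}\} \cup Q$ from the distribution induced by $(\alpha(q,Z), \beta(q,Z), r(q,Z,\cdot))$ and the residual non-halting mass, then pushes $(Z, F_{\text{old}})$ onto $M'$'s stack and updates $F_{\text{new}}(q)$ to $\sigma(q)$ if $\sigma(q) \in \{\mathrm{a}, \mathrm{r}, \mathrm{h}\}$ and to $F_{\text{old}}(\sigma(q))$ otherwise. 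The initial configuration of $M'$ absorbs $M$'s $\cent$-move and samples an analogous $F_\bot$ for the bottom marker.

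Correctness follows by a short induction keyed to the compositional identity $f(q, Z\gamma) = \alpha(q,Z) + \sum_{q'} r(q,Z,q') f(q',\gamma)$ for $M$'s post-$\dollar$ acceptance probability; the push rule implements exactly this recursion by composing the freshly sampled $\sigma$ with the $F$ already stored below, so after any prefix $w$ of the input the probability that $M'$ is in an accepting state coincides with $M$'s acceptance probability on $w\dollar$, preserving $\varepsilon(x)$ pointwise. Specializing $\varepsilon$ to zero (two-sided) or to any value in $[0, 1)$ (one-sided) then recovers the deterministic and nondeterministic versions as corollaries. The part I expect to be hardest is the careful handling of the $\lambda$-dynamics: the surface probabilities $\alpha, \beta, r$ are in general algebraic irrationals even when $M$'s own transitions are rational, so $M'$ must be allowed to use such transition probabilities (or a separate rational-arithmetic argument is needed); and the non-halting outcome $\mathrm{h}$ has to be bundled into rejection in a way that is consistent with $M$'s acceptance-versus-halting convention, so that the committed value of $F(q)$ truly reflects $M$'s acceptance probability rather than merely its halting probability.
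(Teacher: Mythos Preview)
Your high-level architecture matches the paper's: the easy direction is handled identically, and for the hard direction both you and the paper maintain auxiliary state encoding ``what would happen if $\dollar$ were read now,'' store rollback information on the stack so that pops can restore it, and update it on pushes. The paper decomposes the hard direction into separate lemmas (remove $\cent$; convert to an ideal shape; eliminate the post-$\dollar$ $\lambda$-moves; finally eliminate the $\dollar$-transition itself), whereas you fold most of this into one construction. The substantive technical difference is how the auxiliary state is realized. You sample an outcome table $F\in\{a,r,h\}^Q$ directly from the surface probabilities $\alpha,\beta,r$ and update by composition. The paper instead invokes a classical result of Nasu and Honda: the post-$\dollar$ $\lambda$-dynamics is a 1pfa $N$ reading the stack content top-to-bottom, and Nasu--Honda produces a 1pfa $K$ on state space $\PP(Q)$ that realizes the same probabilities while reading bottom-to-top; the paper then runs $K$ alongside $M$, feeding $K$ each pushed symbol and rolling $K$ back on pops via the state stored on the stack. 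This detour buys the paper something your direct-sampling route does not: $K$'s transition probabilities are finite rational combinations (sums, differences, min, max) of $M$'s, so rationality of the transition probabilities is preserved, whereas your $\alpha,\beta,r$ are in general algebraic irrationals---the very issue you flag at the end.

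There is also one genuine gap in your construction. Your $\alpha,\beta,r$ are defined from the post-$\dollar$ $\lambda$-dynamics only, so the compositional identity $f(q,Z\gamma)=\alpha(q,Z)+\sum_{q'}r(q,Z,q')f(q',\gamma)$ is correct for that $f$, and hence your invariant makes $\Pr[F(q)=a]$ equal the probability that the \emph{$\lambda$-dynamics alone}, started from $(q,\gamma)$, accepts. But your accepting condition $F(q)=a$ and your stated conclusion (``$M'$'s acceptance probability equals $M$'s on $w\dollar$'') require absorbing the single non-$\lambda$ $\dollar$-transition as well. You cannot simply fold that transition into $\alpha,\beta,r$, because then the recursive occurrence $f(q',\gamma)$ would re-apply the $\dollar$-move at every stack level, which is wrong. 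The paper handles exactly this point in a separate lemma, by speculatively applying the $\dollar$-transition after each input symbol, recording the guess in the stack, and cancelling it at the next symbol. You need either that device, or a preliminary normalization making $M$'s $\dollar$-transition a pure state-tag $q\mapsto q^{(\dollar)}$ with no stack effect, after which the accepting states of $M'$ should be those $(q,F)$ with $F(q^{(\dollar)})=a$.
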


The first part of Theorem \ref{no-endmarker} for 1ppda's asserts that we can safely eliminate the two endmarkers from each 1ppda without changing the original error  probability. The invariance of this error probability is important because  this invariance makes it possible to apply the same proof of ours to both 1dpda's and 1npda's as special cases by setting $\varepsilon(x)=0$ for all $x\in\Sigma^*$ and by making $\varepsilon(x)=0$ for all $x\in L$ and $\varepsilon(x)<1$ for all $x\in\overline{L}$, respectively. As a corollary, we immediately obtain the well-known fact that the endmarkers are removable for 1dpda's and 1npda's.

\begin{corollary}\label{main-corollary}
Theorem \ref{no-endmarker} holds also for 1dpda's as well as 1npda's.
\end{corollary}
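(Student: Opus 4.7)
The plan is to derive the corollary by specializing Theorem \ref{no-endmarker} to carefully chosen error-bound parameters, treating 1dpda's and 1npda's as syntactically restricted 1ppda's whose restricted character is inherited by the no-endmarker machine produced by the theorem.

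First, I would view a 1dpda $M$ as a 1ppda in which every applicable transition rule fires with probability $1$. Such a machine recognizes its language $L(M)$ with $\varepsilon(x)=0$ for every $x\in\Sigma^*$. Applying Theorem \ref{no-endmarker} with this zero error parameter yields a no-endmarker 1ppda $M'$ that accepts each $x\in L(M)$ with probability $1$ and rejects each $x\notin L(M)$ with probability $1$. Since the outcome on every input is completely determined, the sub-machine of $M'$ consisting of configurations reachable with positive probability must admit a canonical single-choice successor at every step, and pruning $M'$ to those choices yields a no-endmarker 1dpda recognizing $L$.

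Next, I would view a 1npda $N$ as a 1ppda whose transitions distribute positive probability across the nondeterministic alternatives, with $x\in L(N)$ iff $N$ accepts $x$ with positive probability. This matches the one-sided error setting of Theorem \ref{no-endmarker} with $\varepsilon(x)=0$ for $x\in L$ and $\varepsilon(x)<1$ for $x\notin L$. The theorem delivers a no-endmarker 1ppda $M'$ preserving this profile, and reinterpreting the positive-probability branches of $M'$ as nondeterministic alternatives recovers a no-endmarker 1npda for $L$.

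The hard part will be confirming that the construction underlying Theorem \ref{no-endmarker}, which in its general form introduces probabilistic bookkeeping to simulate the right endmarker, does not inject authentic two-sided randomness when the input machine is already deterministic or nondeterministic. This is ensured precisely by the per-input invariance of $\varepsilon(x)$: any freshly introduced coin flip that affected the final outcome on some input would produce a nonzero two-sided error there, contradicting the preservation of the zero (respectively, one-sided) error parameter. Hence the restricted structural character of the input automaton is necessarily inherited by $M'$, completing the derivation of the corollary from Theorem \ref{no-endmarker}.
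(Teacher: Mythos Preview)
Your approach matches the paper's: the paper derives the corollary in the paragraph immediately preceding it, by specializing Theorem~\ref{no-endmarker} with $\varepsilon(x)=0$ everywhere for the deterministic case and with the one-sided parameter $\varepsilon(x)=0$ on one side and $\varepsilon(x)<1$ on the other for the nondeterministic case. You go slightly further than the paper by explicitly noting that the no-endmarker 1ppda produced by Theorem~\ref{no-endmarker} must then be converted back to a syntactic 1dpda (pick any branch, since error $0$ forces all branches to agree) or 1npda (reinterpret positive-probability transitions as nondeterministic choices); the paper leaves this final step implicit, but your justification via preservation of $\varepsilon(x)$ is exactly the intended one.
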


Since the proof of Theorem \ref{no-endmarker} is constructive, it is possible to discuss the  increase of the number of inner states and of stack alphabet size in the construction of new machines. Another important factor is the maximal size of stack strings stored by ``push'' operations. We call this number the \emph{push size} of the machine. Throughout this paper, altogether of those three factors are succinctly referred to as the \emph{stack-state  complexity} of transforming one model to the other.

For brevity, we say that two 1ppda's (with or without endmarkers) are \emph{error-equivalent} if their outputs agree with each other on all inputs with exactly the same error probability.

\begin{proposition}\label{adding-endmarker}
Given an $n$-state no-endmarker 1ppda with stack alphabet size $m$ and push size $e$, there is an error-equivalent endmarker 1ppda of stack alphabet size $m+2$ and push size $e$ with at most $2n+1$ states.
\end{proposition}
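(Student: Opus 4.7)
The plan is to directly simulate the given no-endmarker 1ppda $M$ by an endmarker 1ppda $M'$ running on $\cent x\dollar$, with $M'$ using $M$'s own states during the simulation phase and a disjoint copy of those states as halting states that record the outcome once $\dollar$ is scanned.

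Write $M=(Q,\Sigma,\Gamma,\delta,q_0,Z_0,Q_{acc})$ with $|Q|=n$, $|\Gamma|=m$, push size $e$, and $Q=Q_{acc}\cup Q_{rej}$. I build $M'$ with state set $Q'=\{q_{init}\}\cup Q\cup\tilde Q$, where $\tilde Q=\{\tilde q:q\in Q\}$ is a disjoint copy of $Q$, yielding $|Q'|=2n+1$; I declare $\tilde q$ accepting in $M'$ iff $q\in Q_{acc}$ and rejecting otherwise, and make every state in $\{q_{init}\}\cup Q$ non-halting. I enlarge the stack alphabet to $\Gamma'$ of size $m+2$ by adjoining two fresh symbols: a new bottom marker $\bot$ used as $M'$'s initial stack content, and an auxiliary symbol employed when installing $M$'s starting stack configuration above $\bot$. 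The transitions of $M'$ come in three layers. On reading $\cent$ from $(q_{init},\bot)$, $M'$ deterministically sets up $M$'s initial stack and enters $q_0$. On reading $\sigma\in\Sigma$ or performing a $\lambda$-move from any state of $Q$, $M'$ copies $\delta$ verbatim. On reading $\dollar$ from any $q\in Q$ with any stack top, $M'$ deterministically enters $\tilde q$ and leaves the stack unchanged. No transitions leave $\tilde Q$. The push size stays at $e$ because the only newly added stack action pushes at most one symbol.

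Correctness is a step-by-step coupling argument. By induction on the length of a prefix $w$ of $x$, the joint distribution of $(q,\text{stack})\in Q\times(\Gamma')^*$ reached by $M'$ after scanning $\cent w$ equals, under the obvious embedding, the joint distribution of $(q,\text{stack})\in Q\times\Gamma^*$ reached by $M$ after reading $w$. The final $\dollar$-step then transports this distribution bijectively onto $\tilde Q\times(\Gamma')^*$ through the map $q\mapsto\tilde q$, so $M'$ halts in an accepting state on $\cent x\dollar$ with probability exactly equal to $M$'s probability of being in an accepting state upon finishing $x$; hence the error probability is preserved on every input.

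The main obstacle I anticipate is the correct treatment of any $\lambda$-moves that $M$ performs after consuming the last symbol of $x$. Because $M'$ reuses $\delta$'s $\lambda$-transitions on $Q$ verbatim, such $\lambda$-moves remain available to $M'$ in the window between scanning the last input symbol and committing to read $\dollar$. The delicate step is to set up the probabilistic formalism so that the newly introduced $\dollar$-transition draws only from whatever probability mass is left over after the $\lambda$-transitions $\delta(q,\lambda,\gamma)$ have been assigned, guaranteeing that introducing the $\dollar$-reading step never siphons probability away from a $\lambda$-move present in $M$. That is the single place where the coupling could drift, and it is where the bookkeeping in the proof must be done with care.
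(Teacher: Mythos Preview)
Your approach is essentially the same as the paper's: add one fresh initial state $q'_0$, keep $Q$ as non-halting working states, add a disjoint copy $\hat Q$ of $Q$ as the new halting states, have the $\cent$-move pass control to $q_0$, copy $\delta$ verbatim on $\Sigma_\lambda$, and let the $\dollar$-move send $q$ to its copy $\hat q$. Your anticipated obstacle is exactly the one the paper singles out; the paper resolves it by defining $\delta'(q,\dollar,a\mid \hat q,a)=1$ only when $\delta[q,\lambda,a]=0$, which is the precise formalization of your ``leftover probability mass'' remark.

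One small simplification worth noting: the paper does not introduce a new bottom marker. It keeps $Z_0$ as the bottom marker of the endmarker machine and sets $\delta'(q'_0,\cent,Z_0\mid q_0,Z_0)=1$, so the $\cent$-step changes nothing in the stack. This makes the push-size claim immediate even when $e=1$, whereas your plan to install $Z_0$ above a fresh $\bot$ would, taken literally, replace $\bot$ by $Z_0\bot$ and hence need push size $2$. The two extra stack symbols in the paper's $\Gamma'=\Gamma\cup\{\cent,\dollar\}$ are in fact never pushed; they are there only to match the stated bound $m+2$.
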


\begin{proposition}\label{deleting-endmarker}
Given an $n$-state endmarker 1ppda with stack alphabet size $m$ and push size $e$ over input alphabet $\Sigma$, there is an error-equivalent no-endmarker 1ppda with at most $64|\Sigma|^2d^22^{2d}$ states, at most  stack alphabet size $32|\Sigma|^4d^42^{4d}$, and push size $3$, where $d=97536e^2n^3m^3(2m)^{16enm}$.
\end{proposition}

Since Theorem \ref{no-endmarker} follows directly from Propositions \ref{adding-endmarker}--\ref{deleting-endmarker}, we will concentrate our efforts on proving these propositions in the rest of this paper.
To simplify the complexity description of a pushdown automaton, similarly to the \emph{state complexity} of a finite automaton, we introduce the notion of the \emph{stack-state complexity} of a pushdown automaton equipped with a set $Q$ of inner states, a set $\Gamma$ of stack symbols, and push size $e$ by taking the value $|Q||\Gamma^{\leq e}|$, where $\Gamma^{\leq e}=\{w\in\Gamma^*\mid |w|\leq e\}$.
From the proposition, we observe that the stack-state complexity of transforming an endmarker 1ppda to another error-equivalent no-endmarker 1ppda is $O(|\Sigma|^{18}d^{18}2^{18d})$.
This suggests that endmarker 1ppda's are likely to be ``double-exponentially'' more succinct in descriptional complexity than no-endmarker 1ppda's. We are not sure, however, that this bound can be significantly improved.
For more concrete, it is still unknown that we can reduce this double-exponential factor to an exponential factor or even a polynomial factor.

\paragraph{Organization of This Paper.}
In Section \ref{sec:1ppda}, we will start with the formal definition of 1ppda's (and their variants, 1dpda's and 1npda's) with or without endmarkers. In Section \ref{sec:adding-endmarker}, we will prove Proposition \ref{adding-endmarker}.  Proposition \ref{deleting-endmarker} will be proven in Section \ref{sec:final-proof}. To simplify the proof of this proposition, we will transform each standard 1ppda into another special 1ppda that does not halt before the right endmarker and takes a ``push-pop-controlled'' form, called an \emph{ideal shape}, which turns out to be quite useful in proving various properties of languages.
As a concrete example of usefulness, we will demonstrate in Section \ref{sec:usefulness} the closure property of language families induced by unbounded-error endmarker 1ppda's under ``reversal''. Lastly, a few open questions regarding the subjects of this paper will be discussed in Section \ref{sec:open-problem}.

\section{Two Formulations of One-Way Pushdown Automata}\label{sec:1ppda}

Let us review two machine models of 1ppda's in details, focusing on the use of the two endmarkers.
A \emph{one-way probabilistic pushdown automaton} (abbreviated as 1ppda) $M$ runs essentially in a way similar to a one-way nondeterministic pushdown automaton (or a 1npda)  except that, instead of making a nondeterministic choice  at each step of forming a number of computation paths, $M$ randomly chooses one of all possible transitions and then branches out to produce multiple computation paths. The probability of such a computation path is determined by a series of  random choices made along this computation path.
The past literature has taken two different formulations of 1ppda's,  \emph{with} or \emph{without} two endmarkers. We will explain these two formulations in the subsequent subsections.

\subsection{Numbers, Strings, and Languages}\label{sec:numbers}

Let $\nat$ denote the set of all \emph{natural numbers}, including $0$, and set $\nat^{+}$ to be $\nat-\{0\}$. Given a number $n\in\nat^{+}$,  $[n]$ denotes the set $\{1,2,\ldots,n\}$.
We also use the real intervals, such as $[0,1]$, $[0,1/2)$, etc.
Given a set $A$, $\PP(A)$ denotes the \emph{power set} of $A$; namely, the set of all subsets of $A$.

An \emph{alphabet} is a finite nonempty set of ``letters'' or ``symbols.'' Given an alphabet $\Sigma$, a \emph{string} over $\Sigma$ is a finite sequence of symbols in $\Sigma$. Conventionally, we use $\lambda$ to express the \emph{empty string} as well as the pop operand for a stack. The \emph{length} of a string $x$, expressed as $|x|$, is the total number of symbols in $x$.
Given a non-empty string $x$ and an integer $i\in[|x|]$, $x_{(i)}$ denotes the $i$th symbol of $x$. For convenience, we also set $x_{(0)}=\lambda$. It follows that $x$ coincides with $x_{(1)}x_{(2)}\cdots x_{(|x|)}$.
For a string $x=x_{(1)}x_{(2)}\cdots x_{(n-1)}x_{(n)}$ over $\Sigma$ with $n=|x|$, the \emph{reverse} of $x$ is the string $x_{(n)}x_{(n-1)}\cdots x_{(2)}x_{(1)}$ and is denoted by $x^R$. In particular, $\lambda^R$ is the same as $\lambda$.

The notation $\Sigma^*$ indicates the set of all strings over $\Sigma$ whereas $\Sigma^{+}$ expresses $\Sigma^*-\{\lambda\}$. A \emph{language} over $\Sigma$ is a subset of $\Sigma^*$.
For two languages $A$ and $B$, $AB$ denotes the \emph{concatenation} of $A$ and $B$, that is, $\{xy\mid x\in A,y\in B\}$. In particular, when $A=\{w\}$, we write $wB$ instead of $\{w\}B$. Similarly, when $B=\{w\}$, we use the succinct notation $Aw$.
For a number $k\in\nat$, $A^k$ expresses the set $\{x_1x_2\cdots x_k \mid x_1,x_2,\ldots,x_k\in A\}$.
Moreover, we use the notation $[A,B]$ for two sets $A$ and $B$ to denote the set $\{[a,b]\mid a\in A,b\in B\}$ of ``bracketed'' ordered pairs. Similarly, we use $[A,B,C]$ for the set $\{[a,b,c]\mid a\in A,b\in B,c\in C\}$.

\subsection{The First Formulation}\label{sec:first-formulation}

We start with explaining the first formulation of 1ppda's whose input tapes are marked by two designated endmarkers. We always assume that these endmarkers are not included in any input alphabet. As discussed in, e.g., \cite{LP98}, a 1ppda with two endmarkers (which we call an \emph{endmarker 1ppda}) has a read-once semi-infinite input tape, on which an input string is initially placed, surrounded by two endmarkers
$\cent$ (left endmarker) and $\dollar$ (right endmarker). Fig.~\ref{fig:difference-1-ppda}(1) illustrates an endmarker 1ppda.
The tape region is called an \emph{input region} if it contains $\cent x\dollar$ including an input string $x$.
To clarify the use of those two endmarkers, we explicitly include them in the description of a machine as
$(Q,\Sigma,\{\cent,\dollar\},\Gamma,\Theta_{\Gamma}, \delta,q_0,Z_0,Q_{acc},Q_{rej})$, where $Q$ is a finite set of inner  states, $\Sigma$ is an input alphabet, $\Gamma$ is a stack alphabet, $\Theta_{\Gamma}$ is a finite subset of $\Gamma^*$ with $\lambda\in \Theta_{\Gamma}$, $\delta$ is a \emph{probabilistic transition function} from $(Q-Q_{halt})\times \check{\Sigma}_{\lambda} \times\Gamma \times Q\times  \Theta_{\Gamma}$ to $[0,1]$ (with $\check{\Sigma}=\Sigma\cup\{\cent,\dollar\}$ and $\check{\Sigma}_{\lambda} =\check{\Sigma}\cup\{\lambda\}$) , $q_0$ ($\in Q$) is the initial state, $Z_0$ ($\in\Gamma$) is the bottom marker, $Q_{acc}$ ($\subseteq Q$) is a set of accepting states, and $Q_{rej}$ ($\subseteq Q$) is a set of rejecting states,
where $Q_{halt}$ denotes $Q_{acc}\cup Q_{rej}$.
Any element of $Q_{halt}$ is called a \emph{halting state}.
As customary, we assume that $Q_{acc}\cap Q_{rej}=\setempty$. For convenience, let $\Gamma^{(-)}=\Gamma-\{Z_0\}$.
The minimum positive integer $e$ for which $\Theta_{\Gamma}\subseteq \Gamma^{\leq e}$ is called the \emph{push size} of $M$.
By the definition of $\delta$, $M$ is considered to ``halt'' when it enters a halting state.
Each value $\delta(q,\sigma,a, p,u)$ expresses the probability that, when $M$ reads $\sigma$ on the input tape and $a$ in the top of the stack in inner state $q$, $M$ changes $q$ to another inner state $p$, and replaces $a$ by a string $u$.
For the clarity reason, we express $\delta(q,\sigma,a,p,u)$ as $\delta(q,\sigma,a \mmid p,u)$, because this emphasizes the circumstance that $(q,a)$ is changed into $(p,u)$ after scanning $\sigma$.

All tape cells on an input tape are indexed sequentially by natural numbers from left to right in such a way that $\cent$ is located in the $0$th cell (called the \emph{start cell}), a given input $x$ is placed from cell $1$ to cell $|x|$, and $\dollar$ is in the $|x|+1$st cell.

We demand that $Z_0$ cannot be removed from the stack or replaced by any other symbols; that is, whenever $\delta(q,\sigma,a \mmid p,u)>0$,   $u\in(\Gamma^{(-)})^*$ holds for any $a\neq Z_0$, and $u\in (\Gamma^{(-)})^*Z_0$ holds for $a=Z_0$. If $\sigma=\lambda$, then the tape head must stay still and we call this  transition a \emph{$\lambda$-move} (or a \emph{$\lambda$-transition}); otherwise, the tape head must move to an adjacent cell. Notice that, unless making a $\lambda$-move, $M$ always moves its tape head in one direction, from left to right.

In the literature, after reading $\dollar$, pushdown automata are either allowed or disallowed to make any series of $\lambda$-moves until they halt. In this paper, such a series of $\lambda$-moves is distinctively called a \emph{final series of $\lambda$-moves}.
For a general treatment of 1ppda's, unless otherwise stated, this paper ``allows'' the machine to make a series of $\lambda$-moves even after reading $\dollar$ until it finally enters a halting state.
After reading $\dollar$, the tape head is considered to \emph{move off} (or \emph{leave}) an input region.

To deal with  1npda's $\lambda$-moves, for any segment $x$ of input $\cent w\dollar$, we say that \emph{$x$ is completely read} (or \emph{read off}) if  $M$ reads all symbols in $x$, moves its tape head off the string $x$, and makes all (possible) $\lambda$-moves just after reading the last symbol of $x$.
At each step, $M$ probabilistically selects either a $\lambda$-move or a non-$\lambda$-move, or both.
Conveniently, we define $\delta[q,\sigma,a] = \sum_{(p,u)\in Q\times \Theta_{\Gamma}} \delta(q,\sigma,a \mmid p,u)$ for any triplet $(q,\sigma,a)\in Q\times \check{\Sigma}\times\Gamma$.
We demand that $\delta$ satisfies the following \emph{probability requirement}:  $\delta[q,\sigma,a]+\delta[q,\lambda,a]=1$ for any $(q,\sigma,a)\in Q\times \check{\Sigma} \times \Gamma$.

To describe the behaviors of a stack, we follow \cite{Yam16} for the basic terminology. A \emph{stack content} means a series $z=z_mz_{m-1}\cdots z_1z_0$ of stack symbols in $\Gamma$, which are stored inside the stack sequentially from the topmost symbol $z_m$ of the stack to the lowest symbol $z_0$ ($=Z_0$).
Since the bottom marker $Z_0$ cannot be popped, we say that the stack is \emph{empty} if there is no symbol in the stack except for $Z_0$.

A \emph{(surface) configuration} of $M$ is a triplet $(q,i,w)$, which indicates that $M$ is in inner state $q$, $M$'s tape head scans the $i$th cell, and $w$ is $M$'s current stack content. The \emph{initial configuration} of $M$ is $(q_0,0,Z_0)$. An \emph{accepting} (resp., a \emph{rejecting}) \emph{configuration} is a configuration with an accepting (resp., a rejecting) state and a \emph{halting configuration} is either an accepting or a rejecting configuration.
Given a fixed input $x\in\Sigma^*$, we say that a configuration $(p,j,uw)$ \emph{follows} another configuration $(q,i,aw)$ with probability $\delta(q,\sigma,a \mmid p,u)$ if $\sigma$ is the $i$th symbol of $x$, and $j=i$ if $\sigma=\lambda$ and $j=i+1$ if $\sigma\neq\lambda$.
For two configurations $(q,i_0,w)$ and $(p,j,u)$ of $M$, we write $(q,i,w)\vdash^{\varepsilon}_{M} (p,j,u)$ if $(p,j,u)$ follows $(q,i,w)$ by a single transition of $M$ with probability $\varepsilon$. When $\varepsilon>0$, we instead write $(q,i,w)\vdash^{>0}_{M} (p,j,u)$.

A \emph{computation path} of length $k$ is a series of $k+1$ configurations, which describes a history of consecutive ``moves'' chosen by $M$ on input $x$, starting at the initial configuration with probability $p_0=1$  and, for each index $i\in[0,k-1]_{\integer}$,  the $i+1$st configuration follows the $i$th configuration with probability $p_{i+1}$, ending at a halting configuration with probability $p_{k}$.
A finite computation path of $M$ with positive probability is described as a series $(q_0,i_0,w_0), (q_1,i_1,w_1),(q_2,i_2,w_2),\ldots, (q_k,i_k,w_k)$ of configurations satisfying $(q_j,i_j,w_j)\vdash^{p_j}_{M} (q_{i+1},i_{j+1},w_{j+1})$ for any index $j\in[0,k-1]_{\integer}$.
To such a computation path, we assign the probability $\Pi_{i\in[k]}p_i$.
A computation path is called \emph{accepting} (resp., \emph{rejecting}) if the path ends with an accepting configuration (resp., a rejecting configuration).
Generally, a 1ppda may produce an extremely long computation path or even an infinite computation path; therefore, we must restrict our attention to finite computation paths.

Hromkovi\v{c} and Schnitger \cite{HS10} and Ka\c{n}eps, Geidmanis, and Freivalds \cite{KGF97} all used a model of 1ppda's whose \emph{computation paths all halt eventually (i.e., in finitely many steps) on every input}. We also adopt this convention\footnote{Even if we allow infinitely long computation paths with a relative small probability, we can still simulate two models as stated in Theorem \ref{no-endmarker}, and thus the theorem still holds.} in the rest of this paper.
In what follows, we always assume that all 1ppda's should satisfy this requirement. Standard definitions of 1dpda's and 1npda's do not have such a runtime bound, because we can easily convert those machines to ones that halt within $O(n)$ time (e.g., \cite{HU79,Yam08,Yam16}).

The \emph{acceptance probability} of $M$ on input $x$ is the sum of all probabilities of accepting computation paths of $M$, starting with $\cent{x}\dollar$ on its input tape.  We express by $p_{M,acc}(x)$ the acceptance probability of $M$ on $x$. Similarly, we define $p_{M,rej}(x)$ to be the \emph{rejection probability} of $M$ on $x$.
Whenever $M$ is clear from the context, we often omit script ``$M$'' entirely and write, e.g., $p_{acc}(x)$ instead of $p_{M,acc}(x)$.  We further say that $M$ \emph{accepts} (resp., \emph{rejects}) $x$ if the acceptance (resp., rejection) probability $p_{M,acc}(x)$ (resp., $p_{M,rej}(x)$) is more than $1/2$ (resp., at least $1/2$). Since  all computation paths are assumed to  halt in linear time, for any given string $x$, either $M$ accepts it or $M$ rejects it. The notation $L(M)$ stands for the set of all strings $x$ accepted by $M$; that is,  $L(M)=\{x\in\Sigma^* \mid p_{M,acc}(x)>1/2\}$. Given a language $L$, $M$ \emph{recognizes} $L$ if $L$ coincides with the set $L(M)$.
The 1ppda $M$ is said to make \emph{unbounded error} if $p_{M,acc}(x)>1/2$ or $p_{M,rej}(x)\geq 1/2$.
We say that $M$ makes \emph{bounded error} if there exists a real constant $\varepsilon\in[0,1/2)$ (called an \emph{error bound}) such that, for every input $x$, either $p_{M,acc}(x)\geq 1-\varepsilon$ or $p_{M,rej}(x)\geq 1-\varepsilon$.

Regarding the behavioral equivalence of probabilistic machines,
two 1ppda's $M_1$ and $M_2$ are said to be \emph{error-equivalent} to each other if $p_{M_1,\tau}(x)=p_{M_2,\tau}(x)$ holds for all $x$ and $\tau\in\{acc,rej\}$.
In this paper, we concern ourselves with the descriptional complexity of machine models.
We use the following three complexity measures to describe each machine $M$. The \emph{state size} of $M$ is $|Q|$, the \emph{stack alphabet size} of $M$ is $|\Gamma|$, and the \emph{push size} of $M$ is the maximum length of any string in $\Theta_{\Gamma}$. It then follows that $|\Theta_{\Gamma}|\leq m^e$ if $M$ has stack alphabet size $m$ and push size $e$. To combine them all, it is worth considering the \emph{stack-state complexity} of $M$, which is defined to be $|Q||\Gamma^{\leq e}|$. This notion is a natural extension of the state complexity of a finite automaton.


\begin{figure}[t]
\centering
\includegraphics*[height=3.0cm]{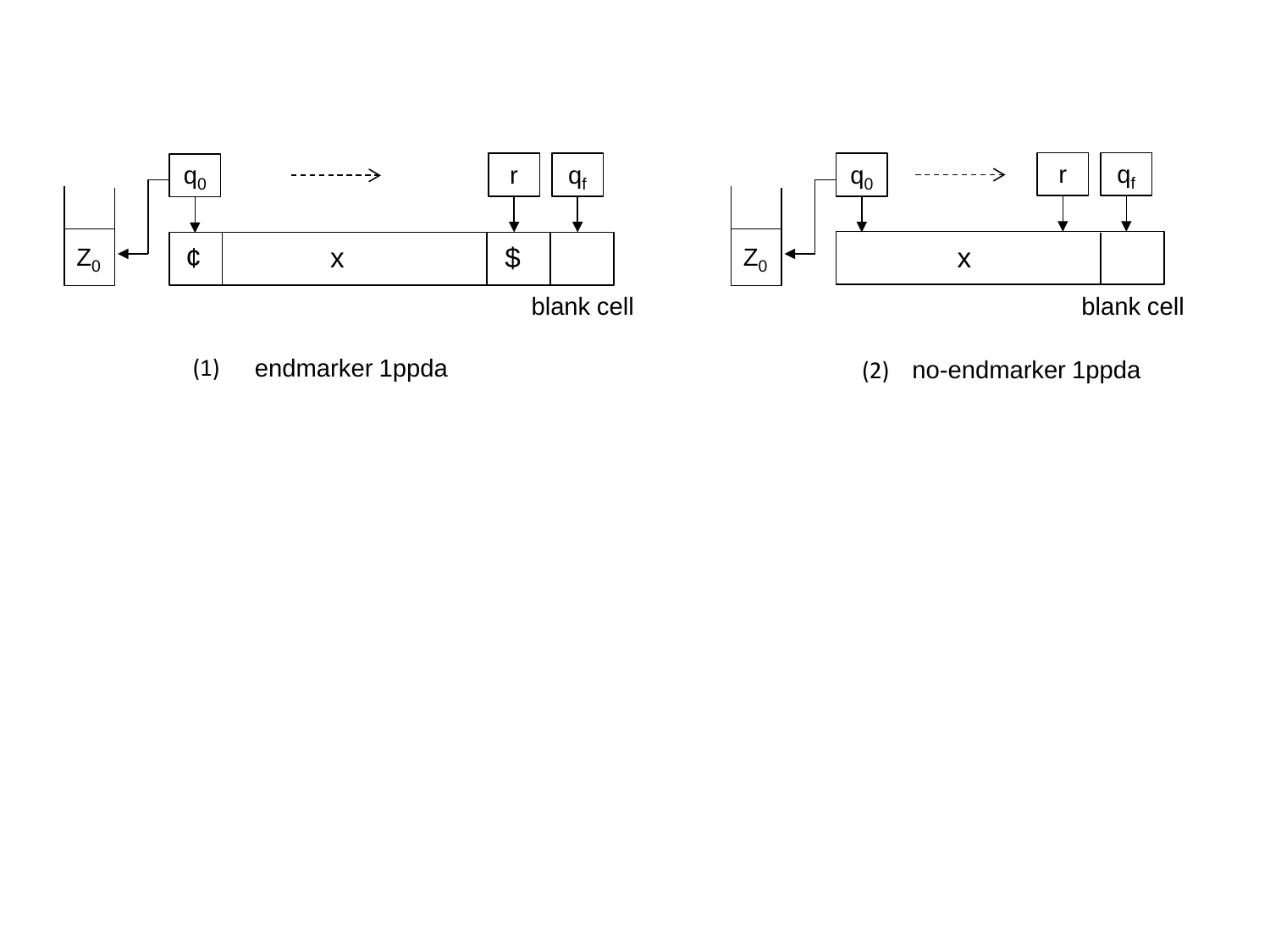}
\caption{(1) An endmarker 1ppda starts its computation by reading $\cent$. For each finite computation path, if the 1ppda does not halt before reading $\dollar$, then it reads $\dollar$, makes a (possible) final series of $\lambda$-moves, and enters a halting state. In this case, however, no blank cell outside of the scope between $\cent$ and $\dollar$ should be read. (2)  A no-endmarker 1ppda starts reading the leftmost symbol of $x$ and continues reading all symbols of $x$. The final acceptance and rejection of the 1ppda is determined after a (possible) final series of $\lambda$-moves, however, no blank cell outside of the input $x$ should be read.}\label{fig:difference-1-ppda}
\end{figure}


\subsection{The Second Formulation}\label{sec:second-formulation}

In comparison with the first formulation, let us consider 1ppda's with no endmarker. Such machines are succinctly called \emph{no-endmarker 1ppda's} int his paper and they are naturally obtained from all the definitions stated in Section \ref{sec:first-formulation} except for removing the entire use of $\cent$ and $\dollar$. See Fig.~\ref{fig:difference-1-ppda} for a non-endmarker 1ppda.
For example, an \emph{input region} is now the cells that contain $x$, instead of $\cent x \dollar$. We express such a machine as $(Q,\Sigma,\Theta_{\Gamma},\delta,q_0,Z_0,Q_{acc},Q_{rej})$ without $\cent$ and $\dollar$. For such a no-endmarker 1ppda, its probabilistic transition function $\delta$ maps $Q\times \Sigma_{\lambda}\times \Gamma\times Q\times\Theta_{\Gamma}$ to $[0,1]$, where $\Sigma_{\lambda}=\Sigma\cup\{\lambda\}$.
Similarly to an endmarker 1ppda, in general, a no-endmarker 1ppda either allows or disallows any final series of $\lambda$-moves (i.e., any series of $\lambda$-moves after completely reading an input).
The acceptance and rejection of such no-endmarker 1ppda's are determined by whether the 1ppda's are respectively in accepting states or in rejecting states just after reading off the entire input string, and leaving the input region, and making a (possible) series of subsequent $\lambda$-moves.
To ensure this, $Q$ should be partitioned into $Q_{acc}$ and $Q_{rej}$; namely, $Q=Q_{acc}\cup Q_{rej}$ and $Q_{acc}\cap Q_{rej}=\setempty$.

Concerning the behavior of the very first step, we demand in this paper that every 1ppda in the initial state should read the leftmost symbol of a given non-empty input string written on the input tape.
By contrast, when an input is the empty string $\lambda$, for any machine that has the right endmarker but no left endmarker, it can still read $\dollar$ at the first step. In contrast, since a no-endmarker machine cannot ``read'' any blank symbol, we need to allow the machine to make a $\lambda$-move at the first step.

\subsection{Deterministic and Nondeterministic Variants}

Deterministic and nondeterministic variants of 1ppda's can be easily obtained by slightly modifying the definition of the 1ppda's.
To obtain a \emph{one-way deterministic pushdown automaton} (or a 1dpda), we additionally require $\delta(q,\sigma,a \mmid p,u)\in\{0,1\}$ (instead of the unit real interval $[0,1]$) for all tuples $(q,\sigma,a,p,u)$.
Similarly, we can obtain \emph{one-way nondeterministic pushdown automata} (or 1npda's) if
we pick a possible next move uniformly at random and
take the following acceptance/rejection criteria: a 1npda $M$ accepts input $x$ if $p_{M,acc}(x)>0$ and $M$ rejects $x$ if $p_{M,acc}(x)=0$. It is not difficult to see that these ``probabilistic'' definitions of 1dpda's and 1npda's coincide with the ``standard'' definitions presented in many textbooks. Because of these definitions, Corollary \ref{main-corollary} comes directly from Theorem \ref{no-endmarker}.

\section{From No-Endmarker 1ppda's to Endmarker 1ppda's}\label{sec:adding-endmarker}

Proposition \ref{adding-endmarker} asserts a linear increase of the stack-state complexity of transforming no-endmarker 1ppda's into error-equivalent endmarker 1ppda's.
Although the transformation seems rather easy, we briefly describe
how to construct from any no-endmarker 1ppda $M$ an error-equivalent endmarker 1ppda $N$.

\begin{proofof}{Proposition \ref{adding-endmarker}}
Given any no-endmarker 1ppda  $M=(Q,\Sigma,\Gamma,\Theta_{\Gamma}, \delta,q_0, Z_0,Q_{acc},Q_{rej})$, we wish to construct an equivalent endmarker 1ppda $N$ of the form $(Q',\Sigma,\{\cent,\dollar\},\Gamma',\Theta_{\Gamma'}, \delta', q'_0,Z_0, Q'_{acc},Q'_{rej})$, where $Q'_{acc}=\{\hat{q}\mid q\in Q_{acc}\}$,  $Q'_{rej}=\{\hat{q}\mid q\in Q_{rej}\}$, $Q' = Q\cup \{q'_0\}\cup Q'_{acc}\cup Q'_{rej}$, and $\Gamma' = \Gamma\cup \{\cent,\dollar\}$. It then follows that $|Q'|\leq 2n+1$ and $|\Gamma'|=m+2$.
From $\delta$, we formally define the probabilistic transition function $\delta'$ as follows. Let $\Sigma_{\lambda} = \Sigma\cup\{\lambda\}$. We use $\hat{q}$ as a new halting state for each $q\in Q_{halt}$.

\begin{enumerate}\vs{-1}
  \setlength{\topsep}{-2mm}%
  \setlength{\itemsep}{0mm}%
  \setlength{\parskip}{0cm}%

\item $\delta'(q'_0,\cent,Z_0 \mmid q_0,Z_0) =1$

\item $\delta'(q,\sigma,a \mmid p,w) = \delta(q,\sigma,a \mmid p,w)$ for $q\in Q$,  $a\in\Gamma$, and $\sigma\in\Sigma_{\lambda}$.

\item $\delta'(q,\dollar,a \mmid \hat{q},a)=1$ for $q\in Q_{halt}$ and $a\in\Gamma$ if $\delta[q,\lambda,a]=0$.
\end{enumerate}\vs{-1}

Intuitively speaking, the behavior of $N$ is described as follows.
Assume that $x$ is a non-empty input.
Let us consider any halting computation path  of $M$ having the form: $(q_0,i_0,w_0), (q_1,i_1,w_1),\ldots, (q_m,i_m,w_m)$, where $m\geq1$, $i_0=1$, $w_0=Z_0$, $i_m=|x|$, and $q_m\in Q_{halt}$.
We remark that $M$ is considered to ``halt'' after reading off $x$ and  making a (possible) series of subsequent $\lambda$-moves.

In the first move, $N$ uses the new initial state $q'_0$ to read the left endmarker $\cent$ and then deterministically enters $M$'s initial state $q_0$ (Line 1). This step adds an extra transition of the form  $(q'_0,i'_0,w'_0)\vdash_{M}^{>0} (q_0,i_0,w_0)$  with $i'_0=0$ and  $w'_0=Z_0$. After the first step, $N$ continues simulating each step of $M$ (Line 2).

Assume that $M$ reads the rightmost input symbol $x_{|x|}$, makes a (possible) series of $\lambda$-moves, and then halts.
By following the corresponding moves of $M$, $N$ reads $x_{|x|}$, moves to the adjacent cell containing $\dollar$, and makes a (possible) series of $\lambda$-moves (Line 2).
By scanning $\dollar$,
$N$ deterministically enters the corresponding halting state without changing the stack content (in Line 3). These additional steps introduce $(q_m,i_m,w_m)\vdash^{>0}_{N} (q'_{m+1},i'_{m+1},w'_{m+1})$, where $q'_{m+1}=\hat{q}_m$, $i'_{m+1}=|x\dollar|$, and $w'_{m+1}=w_m$. Thus, the given halting computation path of $M$ is converted into the corresponding halting computation path of $N$ of the form: $(q'_0,i'_0,w'_0), (q_0,i_0,w_0), (q_1,i_1,w_1),\ldots, (q_m,i_m,w_m), (q'_{m+1},i'_{m+1},w'_{m+1})$. Note that this computation path has the same probability as $M$'s. Therefore, with the same error probability, $N$ can simulate $M$ correctly.  By the definition of $\delta'$, the push size of $N$ is the same as that of $M$. It is important to note that $N$ makes no $\lambda$-moves after reading $\dollar$. This completes the proof of Proposition \ref{adding-endmarker}.
\end{proofof}

\section{Preparatory Machine Modifications}\label{sec:modifications}

The rest of this paper is dedicated to proving Proposition \ref{deleting-endmarker}. A key to our proof of this proposition is the step to ``normalize'' the wild behaviors of endmarker 1ppda's without compromising their error probability.
This preparatory step is quite crucial and it will help us prove the proposition significantly more easily in Section \ref{sec:deleting-endmarker}.
In the nondeterministic model, it is always possible to eliminate all $\lambda$-moves of 1npda's and restrict the set $Q$ of inner states on  $Q_{simple}=\{q_0,q,q_{acc}\}$ (this is a byproduct of translating a context-free grammar in Greibach Normal Form to a corresponding 1npda). For 1ppda's, however, we can neither eliminate $\lambda$-moves nor restrict $Q$ on $Q_{simple}$.
Moreover, we cannot control the number of consecutive $\lambda$-moves. Despite all these difficulties, nevertheless, we can still curtail certain behaviors of 1ppda's to control the execution of push and pop operations during their computations. Such a push-pop-controlled form is succinctly called ``ideal shape.''

For 1npda's and 1dpda's, there are a few precursors in this direction: Hopcroft and Ullman \cite[Chapter 10]{HU79} and Pighizzini and Pisoni \cite[Section 5]{PP15}. We intend to utilize the basic ideas of them and further expand the ideas to prove that all 1ppda's can be transformed into their ideal-shape form.

\subsection{No Halting Before Reading \$}

We first modify a given endmarker 1ppda to another one with a certain nice property without compromising its error probability. In particular, we want to eliminate the possibility of premature halting before reading $\dollar$; namely,  to force the machine to halt only on or after reading $\dollar$ with a (possible) series of subsequent $\lambda$-moves. Recall that a stack is \emph{empty} when it contains only the bottom marker.

\begin{lemma}\label{postpone-halting}
Given any endmarker 1ppda $M$ with $n$ states, stack alphabet size $m$, and  push size $e$, there exists an error-equivalent endmarker 1ppda $N$ having $2n+2$ states with stack alphabet size $m$ and push size $e$ such that it halts only on or after reading $\dollar$; moreover,
when $N$ halts, its stack becomes empty.
\end{lemma}

An underlying proof idea of Lemma \ref{postpone-halting} is to simulate $M$ step by step until $M$ either enters a halting state or reads $\dollar$. When $M$ enters a halting state before reading $\dollar$, $N$ remembers this state, empties the stack, continues reading the rest of input symbols, and finally enters a ``true'' halting state. On the contrary, when $M$ reaches $\dollar$, $N$ remembers the passing of $\dollar$, continues the simulation of $M$'s $\lambda$-moves. Once $M$ enters a halting state, $N$ remembers this state, empties the stack by making $\lambda$-moves, and enters a ``true''  halting state.

For the clarity of the state complexity term of $2n+2$ in Lemma \ref{postpone-halting} and also for the later use of the lemma's proof in Section \ref{sec:ideal-shape-lemma}, we include the detailed proof of the lemma.

\begin{proofof}{Lemma \ref{postpone-halting}}
Let $M=(Q,\Sigma,\{\cent,\dollar\},\Gamma,\Theta_{\Gamma}, \delta,q_0, Q_{acc},Q_{rej})$ denote any endmarker 1ppda with $|Q|=n$, $|\Gamma| =m$, and push size $e$. In the following, we define the desired 1ppda $N=(Q',\Sigma,\{\cent,\dollar\},\Gamma', \Theta_{\Gamma'}, \delta',q'_0, Q'_{acc},Q'_{rej})$, which can simulate $M$ with the same error probability.
Firstly, we prepare a new non-halting inner state $\bar{q}$ for each inner state $q$ in $Q_{halt}$, and two extra accepting and rejecting states $q'_{acc}$ and $q'_{rej}$. We then set $Q^{(0)} = \{\bar{q}\mid q\in Q_{rej}\}$ and $Q^{(1)} = \{\bar{q}\mid q\in Q_{acc}\}$. We also prepare $\bar{q}^{(\dollar)}$ for each inner state $q$ in $Q$ and define $Q^{(\dollar)} =\{\bar{q}^{(\dollar)}\mid q\in Q\}$.
Finally, we define $Q'_{acc} =  \{q'_{acc}\}$, $Q'_{rej} = \{q'_{rej}\}$,  $Q'= (Q-Q_{halt})\cup Q^{(\dollar)} \cup Q^{(0)}\cup Q^{(1)} \cup Q'_{acc}\cup Q'_{rej}$, and $\Gamma'=\Gamma$.
Recall from Section \ref{sec:first-formulation} the notation $\check{\Sigma} = \Sigma\cup\{\cent,\dollar\}$ and $\check{\Sigma}_{\lambda} = \check{\Sigma}\cup\{\lambda\}$.
In what follows, we describe how to define $\delta'$, which is intended to  simulate $M$ step by step on a given input.
Let us consider two cases, depending on whether or not $M$ is scanning $\dollar$.

\s
(1) Assume that $M$ does not yet read $\dollar$. As long as $M$ stays in no-halting states, $N$ precisely simulates $M$'s move (Line 1). When $M$ enters a halting state $q$ from a non-halting state $p$ before reading $\dollar$, $N$ first enters its  associated state $\bar{q}$ (Line 2), deterministically empties the stack (Line 3), and continues reading the rest of the input string (Line 4).
When $N$ finally reaches $\dollar$ in inner state $\bar{q}$, $N$ changes it to a designated state $\hat{q}$ and then halts (Line 5), where $\hat{q}=q'_{acc}$ if $q\in Q_{acc}$, and $\hat{q}=q'_{rej}$ if $q\in Q_{rej}$.
The transition of $\delta'$ is formally defined as follows. Let $q\in Q$, $p\in Q'$, $a\in\Gamma$, $\sigma,\tau\in\check{\Sigma}_{\lambda}$, and $w\in\Theta_{\Gamma}$.

\begin{enumerate}\vs{-1}
  \setlength{\topsep}{-2mm}%
  \setlength{\itemsep}{0mm}%
  \setlength{\parskip}{0cm}%

\item[1.] $\delta'(p,\sigma,a \mmid q,w) = \delta(p,\sigma,a \mmid q,w)$ if $q\notin Q_{halt}$ and $\sigma\in\check{\Sigma}_{\lambda}$.

\item[2.] $\delta'(p,\sigma,a \mmid \bar{q},w)= \delta(p,\sigma,a \mmid q,w)$ if $\sigma\in\Sigma_{\lambda}\cup\{\cent\}$ and $q\in Q_{halt}$.

\item[3.] $\delta'(\bar{q},\lambda,a \mmid \bar{q},\lambda)=1$ if  $a\in\Gamma^{(-)}$ and $q\in Q_{halt}$.

\item[4.] $\delta'(\bar{q},\tau,Z_0 \mmid \bar{q},Z_0)=1$ if  $\tau\in \Sigma$ and $q\in Q_{halt}$.

\item[5.] $\delta'(\bar{q},\dollar,Z_0 \mmid \hat{q},Z_0)=1$.
\end{enumerate}\vs{-1}

(2) On the contrary, when $M$ reads $\dollar$ in a non-halting state $p$ and enters another inner state $q$, the new machine $N$ enters $\bar{q}^{(\dollar)}$ from $p$ (Line 6) and then simulates a (possible) series of $M$'s $\lambda$-moves using their corresponding states in $Q^{(\dollar)}$ (Line 7). When $M$ finally enters a halting state, say, $q$, $N$ also enters its corresponding state $\bar{q}^{(\dollar)}$, deterministically empties the stack (Line 8) in the same inner state $\bar{q}^{(\dollar)}$, and finally enters the halting state $\hat{q}$  (Line 9),  where $\hat{q}=q'_{acc}$ if $q\in Q_{acc}$, and $\hat{q}=q'_{rej}$ otherwise.
Formally, $\delta'$ is defined as follows.
Let $p,q,r\in Q$, $a\in\Gamma$, and $w\in\Theta_{\Gamma}$.

\begin{enumerate}\vs{-1}
  \setlength{\topsep}{-2mm}%
  \setlength{\itemsep}{0mm}%
  \setlength{\parskip}{0cm}%

\item[6.] $\delta'(p,\dollar,a \mmid \bar{q}^{(\dollar)},w) = \delta(p,\dollar,a \mmid q,w)$ if $p\notin Q_{halt}$.

\item[7.] $\delta'(\bar{q}^{(\dollar)},\lambda,a \mmid \bar{r}^{(\dollar)},w)= \delta(q,\lambda,a \mmid r,w)$ if $q\notin Q_{halt}$.

\item[8.] $\delta'(\bar{q}^{(\dollar)},\lambda,a \mmid \bar{q}^{(\dollar)},\lambda)=1$ if  $a\in\Gamma^{(-)}$ and $q\in Q_{halt}$.

\item[9.] $\delta'(\bar{q}^{(\dollar)},\lambda,Z_0 \mmid \hat{q},Z_0)=1$ if $q\in Q_{halt}$, where $\hat{q}=q'_{acc}$ if $q\in Q_{acc}$, and $\hat{q}=q'_{rej}$ otherwise.
\end{enumerate}\vs{-1}

Note that all newly added transitions of $N$ (Lines 3--5, 8--9) are deterministic ones. Therefore, for any halting computation path of $M$ on $x$, there always exists its associated computation path of $N$ leading to the corresponding halting state with the same probability.  Thus,  $N$ simulates $M$ on every input and $N$ errs with the same probability as $M$ does. Note that $|Q'|=2n+2$ and that both the stack alphabet size and the push size are unaltered.
\end{proofof}

\subsection{Transforming 1ppda's to the Ideal-Shape Form}\label{sec:ideal-shape-lemma}

We next convert an endmarker 1ppda into a specific form, called an ``ideal shape,'' where the 1ppda takes a ``push-pop-controlled'' form, in which  pop operations always take place while first reading a non-blank input symbol $\sigma$ and then making a (possible) series of pop operations without reading any other input symbol, and push operations add only single symbols without altering any existing stack content.

Any endmarker 1ppda $M$ with $Q$, $\Sigma$, $\Gamma$, and $\delta$ is said to be \emph{in an ideal shape} if $M$ makes only moves specified by the following  conditions (1)--(5).
(1) Scanning $\sigma\in\Sigma$, $M$ preserves the topmost stack symbol (called a \emph{stationary operation}).  (2) Scanning  $\sigma\in\Sigma$,  $M$ pushes a new symbol $b$ ($\in\Gamma^{(-)}$) without changing any symbol stored in the stack. (3) Scanning  $\sigma\in\Sigma$, $M$ pops the topmost stack symbol. (4) Without scanning any input symbol (i.e., making a $\lambda$-move), $M$ pops the topmost stack symbol. (5) The stack operation (4) comes only after either (3) or (4). For later convenience, we refer to these five conditions as the \emph{ideal-shape conditions}.

More formally, the ideal-shape conditions are described using $\delta$ in the following way.
Let $p,q\in Q$, $\sigma\in\check{\Sigma}$, $u\in\Gamma^*$, and $a\in\Gamma$.

\begin{enumerate}\vs{-1}
  \setlength{\topsep}{-2mm}%
  \setlength{\itemsep}{0mm}%
  \setlength{\parskip}{0cm}%

\item If $\delta(q,\sigma,a \mmid p,u)\neq0$, then $u$ is in $\{\lambda,ba,a\}$ for a certain $b\in\Gamma^{(-)}$.

\item If $\delta(q,\lambda,a \mmid p,u)\neq0$ and $a\neq Z_0$, then $u=\lambda$.

\item If $\delta(q,\sigma,a \mmid p,ba)\neq0$ for a certain $b\in\Gamma^{(-)}$, then $\delta[p,\lambda,b]=0$

\item If $\delta(q,\sigma,a \mmid p,a)\neq0$, then $\delta[p,\lambda,a]=0$.
\end{enumerate}\vs{-1}

\n Notice that Conditions 1--2 are related to the stack operations of (1)--(4) described above and that Conditions 3--4 are related to Condition (5).
It is important to note that Conditions 1--4 make $M$ have push size $2$. For no-endmarker 1ppda's, in contrast, we can define the same notion of ``ideal shape'' by requiring Conditions 1--4 to hold \emph{only for nonempty inputs}.

Lemma \ref{transition-simple} states that any (no-)endmarker 1ppda can be converted into its error-equivalent (no-)endmarker 1ppda in an ideal shape.

\begin{lemma}\label{transition-simple}{\rm [Ideal Shape Lemma]}
Let $n\in\nat^{+}$. Any $n$-state endmarker 1ppda $M$ with stack alphabet size $m$ and push size $e$ can be converted into another error-equivalent endmarker 1ppda $N$ in an ideal shape with at most $48en^2m^2(2m)^{2enm}$ states and at most stack alphabet size $4enm(2m)^{2enm}$.
The above statement is also true for no-endmarker 1ppda's.
\end{lemma}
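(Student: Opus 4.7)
The plan is to normalize $M$ in three successive phases, each preserving $M$'s exact acceptance/rejection probability on every input; I denote the successive machines by $M_1$, $M_2$, and $N$. \emph{Phase A (halting control):} First apply Lemma \ref{halt-on-after-reading} to obtain an error-equivalent 1ppda $M_1$ that halts only on or after reading $\dollar$ and empties its stack upon halting; this ensures that every local $\lambda$-chain examined below is well-defined and terminates almost surely at either a non-$\lambda$ configuration or at the right endmarker. \emph{Phase B (push-size reduction):} For each push string $u = b_1 b_2 \cdots b_k$ ($k \le e$) appearing in $M_1$'s transitions, introduce a new compound stack symbol $[b_1 b_2 \cdots b_k]$ and rewrite the transitions so that a single compound is pushed onto the symbol below. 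Transitions scanning a compound $[b_1 \cdots b_k]$ simulate $M_1$'s behavior with $b_1$ on top; popping the compound leaves $[b_2 \cdots b_k]$ exposed (or removes it entirely when $k = 1$); any further push simply places another length-$1$ compound on top. The resulting 1ppda $M_2$ has push size $1$ and stack alphabet of size $O(em^e)$, with essentially unchanged state size.

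\emph{Phase C ($\lambda$-chain absorption):} The main step is to absorb every non-pop $\lambda$-move into the preceding $\sigma$-transition. For each state-top pair $(q,b)$ of $M_2$, analyze the local $\lambda$-chain generated from $(q,b)$ as a recursive Markov chain that, by the termination assumption, terminates almost surely in one of three ways: (i)~absorption at some state $p^*$ with a finite stack above $b$ whose topmost symbol $c^*$ admits no $\lambda$-move from $p^*$; (ii)~popping $b$ and absorbing at some state $p^*$; or (iii)~halting in $Q_{acc}\cup Q_{rej}$. The absorption probabilities satisfy a finite fixed-point system over the $(q,b)$ pairs, and the possible exit configurations are packaged into finitely many ``summary'' bundle symbols via a subset-style enumeration over configuration types of size $\Theta(enm)$; this is the source of the $(2m)^{2enm}$ factor. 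Now build $N$ by replacing every push- or stationary-type $\sigma$-transition of $M_2$ with a single compiled $\sigma$-transition whose output samples from the corresponding absorption distribution, leaving pop-type $\sigma$-transitions untouched and keeping only the pop $\lambda$-moves that continue chains after a fresh pop exposes a new top. Conditions 1--2 of the ideal shape then hold by construction of $\delta'$; Conditions 3--4 hold because every compiled $\sigma$-transition ends at an absorbing state-top pair that, by definition, admits no further $\lambda$-move. The no-endmarker case is handled identically, except that a $\lambda$-move may still fire on the initial empty input, as allowed by the weakened ideal-shape requirements there.

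\emph{Main obstacle.} The technical heart of the proof is Phase C: the $\lambda$-chain from $(q,b)$ can push and pop compounds above $b$ arbitrarily many times before being absorbed, so naively its exit stack is of unbounded length. Two things must therefore be done carefully: (a)~solve the recursive fixed-point equations \emph{exactly} so as to preserve the error probability (not merely the language recognized), and (b)~package absorbing configurations into finitely many summary bundles via a subset-style enumeration of configuration types. These two requirements together are responsible for the double-exponential state and stack-alphabet bounds in the lemma, and they are precisely what allows the same construction to apply uniformly to both endmarker and no-endmarker 1ppda's.
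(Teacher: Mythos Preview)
Your high-level decomposition (halting control, push-size reduction, $\lambda$-chain absorption) is in the right spirit, but Phase~C as written has a genuine gap. The crucial observation you are missing is that the paper's standing convention---\emph{every} computation path of a 1ppda halts, not merely almost every path---yields a \emph{deterministic} bound on how far a $\lambda$-chain can raise the stack: if the stack grew by more than $e|Q||\Gamma|$ during a run of consecutive $\lambda$-moves, some (state, top-symbol) pair would repeat with a net push in between, producing an infinite path. Consequently the exit stack above $b$ has length at most $2enm$, and the $(2m)^{2enm}$ factor in the lemma is nothing more than the number of strings of length $\le 2enm$ over an alphabet of size $\le 2m$. There is no recursive Markov-chain fixed point to solve and no power-set construction: the paper simply takes the probability $\delta^*[q,\lambda,a\mmid p,\lambda,w]$ as a defined real number and, in its stage~(2), records the bounded-length string $w$ as a single new stack symbol $[w]$.

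Your phrasing---``terminates almost surely'', ``solve the recursive fixed-point equations exactly'', and especially ``subset-style enumeration over configuration types of size $\Theta(enm)$''---does not supply a mechanism that turns the exit stack into a single pushed symbol, which is exactly what ideal shape demands of a $\sigma$-transition. A subset of $\Theta(enm)$ objects does not encode a string of length $\Theta(enm)$; and without the deterministic height bound, the exit stack really is unbounded, so your compiled $\sigma$-transition cannot have push size $1$. Note also that the paper performs the analogue of your Phase~C \emph{before} push-size reduction: its stage~(1) uses the height bound on the original machine to restrict all $\lambda$-moves to pops, and only then does stage~(2) encode the resulting bounded push strings as single symbols. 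Reversing the order as you do is not fatal, but you would have to re-derive the height bound for the already-enlarged stack alphabet of $M_2$ and argue that the final complexity still matches the statement; this needs to be made explicit rather than hidden behind ``summary bundles''.
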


The error-equivalent result of Lemma \ref{transition-simple} immediately leads to the following corollary.

\begin{corollary}
Lemma \ref{transition-simple} holds also for 1dpda's and 1npda's.
\end{corollary}

The proof of Lemma \ref{transition-simple} partly follows the arguments of Hopcroft and Ullman \cite[Chapter 10]{HU79} and of Pighizzini and Pisoni
\cite[Section 5]{PP15}.
In the proof, the desired conversion of a given 1ppda $M$ into its error-equivalent ideal-shape 1ppda $N$ is carried out stage by stage. Firstly, $M$ is converted  into another 1ppda $M_1$ whose $\lambda$-moves are limited only to pop operations.
This $M_1$ is further modified into another 1ppda $M_2$, which satisfies (2)--(3) of the ideal-shape conditions together with the extra condition that all $\lambda$-moves are limited to either pop operations or replacements of a topmost stack symbol with one symbol.
From this $M_2$, we can construct another 1ppda $M_3$, which satisfies (1)--(3) of the ideal-shape conditions. Finally, we modify $M_3$ into another 1ppda $M_4$, which further forces the ideal shape condition (5) to meet. This last machine $M_4$ becomes the desired 1ppda $N$.

For readability, we sidestep the proof of Lemma \ref{transition-simple} and continue the transformation of 1ppda's. The actual proof of the lemma will be found in Appendix.


It is important to note that the transformation of a general 1ppda to an error-equivalent 1ppda in an ideal shape is quite costly in terms of stack-state complexity. This high cost affects the overall cost of transforming one endmarker 1ppda to another error-equivalent non-endmarker 1ppda. Refer to Section \ref{sec:final-proof}.

Before closing this section, we give a short remark.
Steps (3) and (4) in the proof of Lemma \ref{transition-simple} can be combined so that we can further remove any $\lambda$-move by introducing a new set of instructions. More precisely, we set a new  transition function $\delta$ as a map from $(Q-Q_{halt})\times \Sigma\times (\Gamma \cup \PP(\Gamma^{(-)}) \times Q\times \Gamma^*$ to $[0,1]$, where the notation  $\delta(q,\sigma,\{a_1,a_2,\ldots,a_k\}\mmid p,\lambda)$ means that we repeat the pop operation as long as the topmost stack symbols are in $\{a_1,a_2,\ldots,a_k\}$ if possible (otherwise, we abort the computation). For this $\delta$, we demand that $\delta(q,\sigma,\{a_1,a_2,\ldots,a_k\} \mmid p,u)>0$ implies $u=\lambda$.

\subsection{Usefulness of the Ideal-Shape Form}\label{sec:usefulness}

We have introduced  to pushdown automata the notion of ``ideal shape'' form, which controls the behaviors of their push and pop operations. To demonstrate the usefulness of this specific form in handling those machines, we intend to present a simple and quick example of how to apply the ideal-shape lemma (Lemma \ref{transition-simple}) for unbounded-error endmarker 1ppda's.

\begin{proposition}
The family of all languages recognized by endmarker 1ppda's with unbounded-error probability is closed under reversal.
\end{proposition}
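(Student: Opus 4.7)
The plan is to leverage the Ideal Shape Lemma (Lemma~\ref{transition-simple}) to reduce the reversal problem to a structural manipulation of transitions. First, I would convert the given bounded-error endmarker 1ppda $M$ recognizing $L$ into an error-equivalent 1ppda $\tilde{M}$ in ideal shape, and further invoke Lemma~\ref{halt-on-after-reading} to ensure that $\tilde{M}$ halts only on or after reading $\dollar$, with an empty stack at halting. These two preprocessing steps yield a machine whose input-driven moves are of three rigid types (stationary, single-symbol push, single-symbol pop), whose $\lambda$-moves are purely pops that appear only after prior pops, and whose every halting computation path terminates with an emptied stack.

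Next, I would construct a new endmarker 1ppda $M^R$ that simulates $\tilde{M}$ in reverse while reading $x^R$. The guiding principle is that, in the ideal-shape form, every computation path on an input $x$ consists of a well-bracketed sequence of matched push/pop events interspersed with stationary moves; each event can be attributed to a unique input position after the $\lambda$-pop chains are folded into their preceding input-driven pop. Reversing the input amounts to reversing this bracket structure: every push becomes a pop, every pop becomes a push, and stationary moves remain stationary. The inner states of $M^R$ encode a pair consisting of the ``current'' state of $\tilde{M}$ and a guessed ``future'' state to which $\tilde{M}$ would transition at the matching event; this guess is verified when the matching event is reached.

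The technical heart is the definition of the probabilistic transition function $\delta^R$. For each input-driven push $\delta(q,\sigma,a\mmid p,ba)$ of $\tilde{M}$, $M^R$ executes a matching pop when reading $\sigma$ on its reversed input, with the same numerical weight; conversely, each input-driven pop is matched by a push carrying the same weight. To absorb the $\lambda$-pop chains of $\tilde{M}$, I would use a closure operator analogous to the $\delta^*$ construction in the proof of Lemma~\ref{transition-simple}, thereby folding each maximal $\lambda$-pop chain into a single probabilistic step of $M^R$. A straightforward induction on computation length shows that, for every $x\in\Sigma^*$, the acceptance probability of $M^R$ on $x^R$ equals that of $\tilde{M}$ on $x$, so the cutpoint isolation (hence the bounded-error guarantee) is preserved; combined with Proposition~\ref{adding-endmarker}, this also means the endmarker form is retained.

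The main obstacle will be the state-information bookkeeping. Since the order in which probabilistic choices are resolved is reversed, $M^R$ must, when pushing at the event mirroring $\tilde{M}$'s pop, record onto its own stack a ``promise'' about the state to which $\tilde{M}$ transitions at the corresponding future pop; this promise must be consumed exactly at the mirroring event and must remain untouched in between. The ideal-shape restrictions are what guarantee that no intervening operation modifies a promise buried beneath later pushes, and that each $\lambda$-pop chain can be collapsed without entangling with input-driven moves. Once these conditions are in place, the per-input acceptance probabilities are preserved exactly, so $L(M^R)=L^R$ with the same error bound.
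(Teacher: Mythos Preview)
Your plan captures the right structural duality---pushes and pops swap under reversal, and the ideal-shape form makes that swap clean---but it omits the one device that separates the probabilistic case from the nondeterministic one. When you stipulate that each reversed transition carries ``the same numerical weight'' as its forward counterpart, you are reading the transition table of $\tilde{M}$ backwards: from a configuration with state $p$ and stack top $b$, the outgoing reverse weights are the values $\delta(q,\sigma,a\mmid p,\cdot)$ summed over all predecessors $(q,\sigma,a)$. That sum is not $1$ in general (the transpose of a stochastic matrix is rarely stochastic), so $M^R$ as you describe it fails the probability requirement of Section~\ref{sec:first-formulation} and is not a 1ppda at all. Your stack-promise mechanism only certifies that a guessed backward path is consistent with some forward path; it does nothing to repair the per-step mass deficit or surplus, and the ``straightforward induction'' you invoke therefore cannot even begin.

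The paper supplies exactly this missing idea, and with lighter bookkeeping than your state-pair scheme. Its reversed machine carries in each inner state a guess of the \emph{input symbol} to be read next: states are pairs $(p,\tau)$ with $\tau\in\check{\Sigma}$. When the guessed $\tau$ is actually read, the machine performs the reversed transition with the forward weight and simultaneously commits a fresh guess; when the tape head instead sees some $\sigma\neq\tau$, the machine immediately enters $q_{acc}$ or $q_{rej}$ with probability $\tfrac{1}{2}$ each. These fair-coin branches contribute symmetrically to acceptance and rejection and hence cancel out of the error analysis; they are precisely the normalization device your proposal lacks. Your construction could be salvaged by grafting on an analogous cancellation step, but as written it stops one idea short.
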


\begin{proof}
Let $L$ denote any language and let $M =(Q,\Sigma,\{\cent,\dollar\}, \Gamma,\delta,q_0,Z_0, Q_{acc},Q_{rej})$ be any 1ppda that recognizes $L$ with unbounded-error probability. By the ideal shape lemma (Lemma \ref{transition-simple}), we can assume that $M$ is in an ideal shape.
Lemma \ref{postpone-halting} also helps us assume that, on each halting computation path of $M$, $M$ always empties its stack before entering a halting state. We aim at constructing another 1ppda $N$ that takes $x^R$ as its input and enters the same type of halting states as $M$ does on the input $x$ with appropriate error probability.
To simplify the following description of $N$'s computation, we further assume that the input tape of $N$ contains $\cent x\dollar$ and that $N$ moves its tape head from right to left, starting at the right endmarker $\dollar$ and ending at the left endmarker $\cent$. See Fig.~\ref{fig:reversing-machine} for the opposite directions of the tape head moves.

A basic idea of the following construction of $N$ is to reverse a computation of $M$ step by step by probabilistically guessing the previous step of $M$ and by verifying the validity of this guessing at the next step. If the guessing is incorrect, then $N$ immediately enters both an accepting state and a rejecting state with an equal probability to cancel out
this incorrect guessing.
It is important to note that this simulation may change the acceptance  probability and the rejection probability significantly, however, the final decision (i.e., acceptance or rejection) of $N$ is always the same as that of $M$.


\begin{figure}[t]
\centering
\includegraphics*[height=2.8cm]{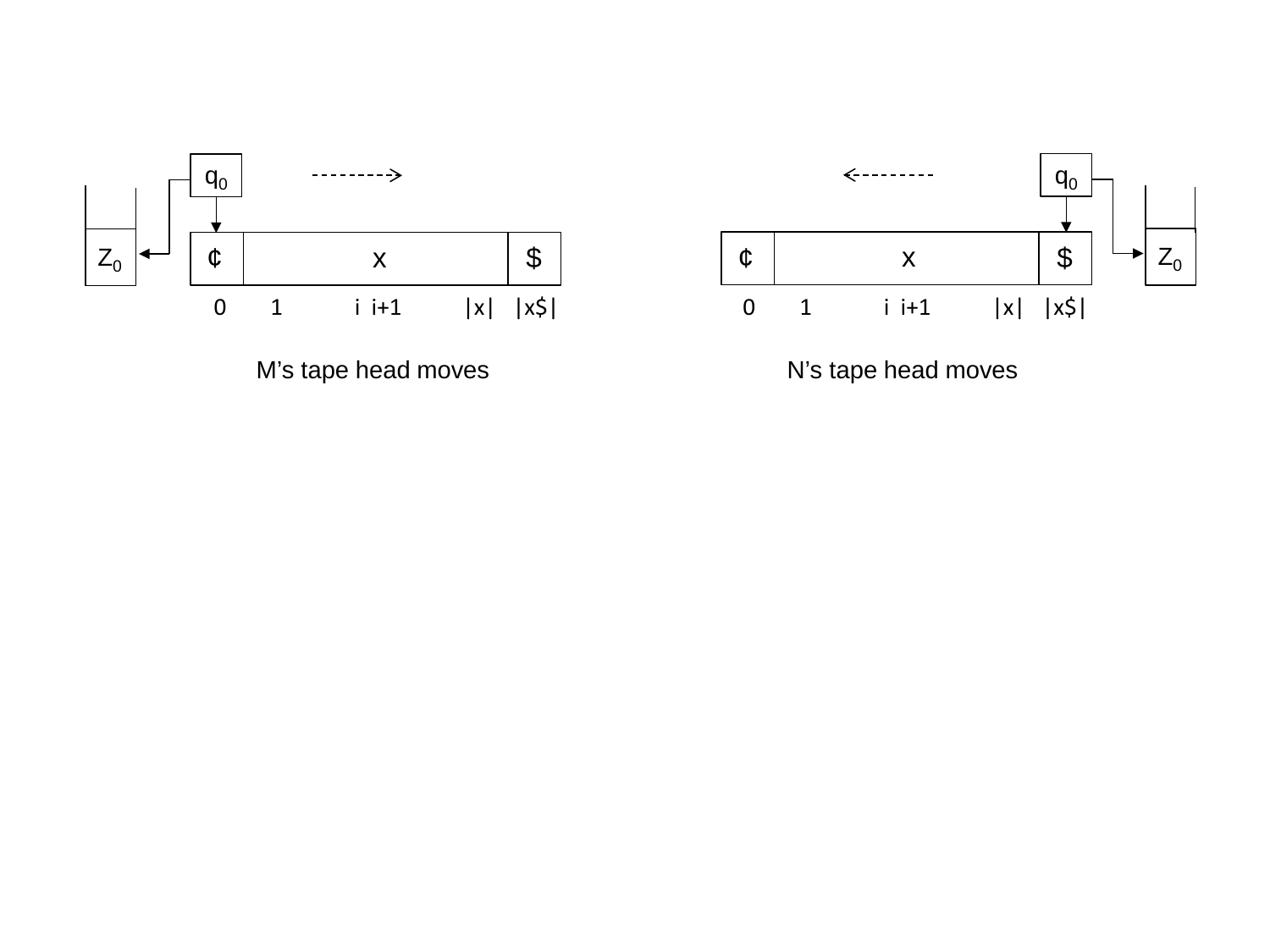}
\caption{Two completely opposite moving directions of the tape heads of $M$ and $N$.}\label{fig:reversing-machine}
\end{figure}


Formally, let $N=(Q',\Sigma,\{\cent,\dollar\}, \Gamma, \delta', q_0,Z_0, Q'_{acc},Q'_{rej})$. Recall that $\check{\Sigma}=\Sigma\cup\{\cent,\dollar\}$. Here, we set $Q'= (Q\times \check{\Sigma} \times Q_{halt}) \cup (Q\times \check{\Sigma} \times Q_{halt}\times \Gamma)$, $Q'_{acc}=\{(q_{acc},\cent,q_{acc})\}$, and   $Q'_{rej}=\{(q_{rej},\cent,q_{rej})\}$. Before defining the desired probabilistic transition function $\delta'$, we first define an intermediate  function $\hat{\delta}$ as follows. Let $\sigma,\tau\in \Sigma\cup\{\dollar\}$ and $a,b,c\in\Gamma$.

\s

(1) Since $N$ intends to reverse $M$'s computation, $N$ needs to simulate both an accepting computation path and a rejecting computation path of $M$. At the first step, $N$ begins in the initial state $q_0$ and splits into two inner states $(q_{acc},\dollar,q_{acc})$ and $(q_{rej},\dollar,q_{rej})$ with an equal probability so that $N$ starts simulating $M$.

\begin{enumerate}\vs{-1}
\item[1.] $\hat{\delta}(q_0,\dollar,Z_0 \mmid (r_0,\dollar,r_0),Z_0) = 1/2$ for all $r_0\in\{q_{acc},q_{rej}\}$.
\end{enumerate}\vs{-1}

\s

(2) Assume that  $M$ pops $a$ (Line 2) or keeps $a$ intact (Line 3) in inner state $q$ while reading $\sigma$. Notice that, since $M$ is in an ideal shape, $M$ cannot replace $a$ by any different symbol.
If $N$ is in inner state $(p,\tau,r_0)$ for a certain symbol $\tau$, then we force $N$ to change this inner state to $(q,\sigma,r_0)$ exactly when reading $\tau$. The case where $N$ reads a different input symbol other than $\tau$ will be handled in (6). When $M$ pops $a$, $N$ replaces its current top stack symbol, say, $c$ by $ac$ (Line 2). When $M$ maintains $a$, $N$ does the same (Line 3).

\begin{enumerate}\vs{-1}
  \setlength{\topsep}{-2mm}%
  \setlength{\itemsep}{0mm}%
  \setlength{\parskip}{0cm}%

\item[2.] $\hat{\delta}((p,\tau,r_0),\tau,c \mmid (q,\sigma,r_0),ac)=\delta(q,\sigma,a \mmid p,\lambda)$.

\item[3.] $\hat{\delta}((p,\tau,r_0),\tau,a \mmid (q,\sigma,r_0),a) = \delta(q,\sigma,a \mmid p,a)$.
\end{enumerate}\vs{-1}

Assuming that $\sigma=x_{(i)}$ and $\tau=x_{(i+1)}$,
if $(q,i,aw)\vdash^{>0}_{M}(p,i+1,w)$ for stack content $aw$, $N$ changes its corresponding configuration  $((p,\tau,r_0),i+1,cw')$ satisfying $w=cw'$ into another configuration $((q,\sigma,r_)),i,acw')$ by Line 2.
When $(q,i,aw)\vdash^{>0}_{M}(p,i+1,bw)$, $N$'s configuration  $((p,\tau,r_0),i+1,aw)$ is changed into $((q,\sigma,r_0),i,aw)$ by Line 3.

\s

(3) In the case where $M$ pushes a symbol $b$ onto $c$ while reading $\sigma$, $N$ first pops $b$, guesses a stack symbol $c$ (Line 4), and deterministically checks whether the top stack symbol is indeed $c$ by making a $\lambda$-move (Line 5).

\begin{enumerate}\vs{-1}
  \setlength{\topsep}{-2mm}%
  \setlength{\itemsep}{0mm}%
  \setlength{\parskip}{0cm}%

\item[4.] $\hat{\delta}((p,\tau,r_0),\tau,b \mmid (q,\sigma,r_0,c),\lambda) =\delta(q,\sigma,c \mmid p,bc)$.

\item[5.] $\hat{\delta}((q,\sigma,r_0,c),\lambda,c\mmid (q,\sigma,r_0),c)=1$.
\end{enumerate}\vs{-1}

Assuming that $\sigma=x_{(i)}$ and $\tau=x_{(i+1)}$, if  $(q,i,cw)\vdash^{>0}_{M}(p,i+1,bcw)$, then $((p,\tau,r_0),i+1,bcw)$ is changed to $((q,\sigma,r_0,c),i,cw)$ (Line 4), followed by $((q,\sigma,r_0),i,cw)$ (Line 5). It may be possible that the sum  $\sum_{c\in\Gamma} \hat{\delta}(q,\sigma,c\mmid p,bc)$ exceeds $1$.

\s

(4) In the case where $M$ makes a $\lambda$-move from inner state $q$ to $p$, $M$ must pop the top stack symbol (Line 6) due to the ideal shape conditions. In this case, $N$ follows a step similar to Line 2.

\begin{enumerate}\vs{-1}
\item[6.] $\hat{\delta}((p,\tau,r_0),\lambda,c \mmid (q,\tau,r_0),ac) =\delta(q,\lambda,a \mmid p,\lambda)$.
\end{enumerate}\vs{-1}

\s

(5) When $N$ reaches $\cent$ in inner state $(q_0,\cent,r_0)$, $N$ deterministically enters the inner state $(r_0,\cent,r_0)$.

\begin{enumerate}\vs{-1}
\item[7.] $\hat{\delta}((q_0,\cent,r_0),\cent,c \mmid (r_0,\cent,r_0),c) = 1$.
\end{enumerate}\vs{-1}

\s

(6) For any other tuples $(q,\tau,r_0,\sigma,c)$, we force $N$ to enter an accepting state and a rejecting state with an equal probability so that we eliminate any influence of computation paths that alter the final decision of $N$.

\begin{enumerate}\vs{-1}
\item[8.] $\hat{\delta}((q,\tau,r_0),\sigma,c \mmid \hat{p},c) = 1/2$ if $\sigma\neq\tau$ and  $\hat{p}\in\{q_{acc},q_{rej}\}$.
\end{enumerate}\vs{-1}

Notice that $\hat{\delta}$ may not be a probabilistic transition function. Therefore, we further modify $\hat{\delta}$ into a proper probabilistic transition function in the following fashion.
Let
\[
\alpha_0 =\max\{\hat{\delta}[(p,\tau,r_0),\tau,b], \hat{\delta}[(p,\tau,r_0),\lambda,b], \hat{\delta}[(p,\tau,r_0,c),\lambda,c] \mid r_0\in\{q_{acc},q_{rej}\}, p\in Q, b,c\in\Gamma, \tau\in\check{\Sigma}\}.
 \]
If $\alpha_0\leq 1$, then we set $\delta'$ to be $\hat{\delta}$. Otherwise, we set $\delta'((p,\tau,r_0),u,b\mmid (q,\sigma,r_0),v)$ to be $\hat{\delta}((p,\tau,r_0),u,b\mmid (q,\sigma,r_0),v)/\alpha_0$ for $u\in\Sigma_{\lambda}$ and $v\in\Gamma^{\leq 2}\cup\{\lambda\}$. If $\delta'[(p,\tau,r_0),u,b]<1$ at this point, then we add extra transitions of the form $\delta'((p,\tau,r_0),u,b\mmid \hat{p},b) = \frac{1}{2}(1-\delta'[(p,\tau),u,b])$ for any $\hat{p}\in\{q_{acc},q_{rej}\}$ so that $\delta'[(p,\tau,r_0),u,b]$ becomes $1$.

We can prove by the length of computation subpaths that, for each halting computation path of $M$ with probability $\gamma$, there exists a corresponding computation path of $N$ moving the tape head from right to left  with probability $\gamma\cdot \max\{1,\alpha_0\}$.
It thus follows that $p_{M,acc}(x)>p_{M,rej}(x)$ iff $p_{N,acc}(x^R) > p_{N,rej}(x^R)$. Therefore, $N$ correctly recognizes $L^R$ with unbounded-error probability.
\end{proof}

\section{From Endmarker 1ppda's to No-Endmarker 1ppda's}\label{sec:deleting-endmarker}

Following the preparatory steps of Section \ref{sec:modifications}, let us prove Proposition \ref{deleting-endmarker}. The proof of this proposition is composed of several key ingredients, each of which uses a different idea to place more restrictions on 1ppda's. In particular, the proof will proceed in three stages by firstly removing $\cent$ (Section \ref{sec:remove-left}), secondly removing a final series of $\lambda$-moves after reading $\dollar$ (Section \ref{sec:remove-empty}), and finally removing $\dollar$ (Section \ref{sec:remove-endmarker}). In Section \ref{sec:final-proof}, we will combine those transformations and prove Proposition \ref{deleting-endmarker}. 
\subsection{Removing the Left Endmarker}\label{sec:remove-left}

We begin with removing the left endmarker $\cent$ from the 1ppda's input tape. What we need here is to construct a 1ppda without $\cent$, which simulates a  given endmarker 1ppda. If we had made the first set of moves ``$\lambda$-moves'', then we could have easily simulated the given machine. By our definition of 1ppda's, however, any 1ppda cannot start with a $\lambda$-move in order to simulate the machine with $\cent$.

For clarity, we call by a \emph{no-left-endmarker 1ppda} any 1ppda whose input tape uses only the right endmarker $\dollar$ and a tape head starts at the leftmost symbol,  instead of $\cent$, written on the input tape.
As a special case, when the input is $\lambda$, the leftmost symbol becomes $\dollar$. Fig.~\ref{fig:no-left-endmarker} illustrates the difference between an endmarker 1ppda and a no-left-endmarker 1ppda.

\begin{lemma}\label{remove-left}
Let $M$ denote any $n$-state endmarker 1ppda with stack alphabet size $m$ and push size $e$ over input alphabet $\Sigma$.
There is a 1ppda $N$ that uses no left endmarker, has $2n+1$ states,  $(|\Sigma|+1)m$ stack alphabet size, and the same push size $e$, and is error-equivalent to $M$. In the case of $|\Sigma|\leq n$, for example, $N$'s stack alphabet size is upper-bounded by $(n+1)m$.
\end{lemma}

To obtain a precise stack-state complexity of transformation in Lemma \ref{remove-left}, we provide an explicit construction of the desired machine.
A basic idea of the construction of $N$ is to remember the leftmost input symbol while the target machine $M$ reads $\cent$ and makes a (possible) series of $\lambda$-moves.

\vs{-2}
\begin{proofof}{Lemma \ref{remove-left}}
We begin with a 1ppda $M= (Q,\Sigma,\{\cent,\dollar\}, \Gamma,\Theta_{\Gamma},\delta, q_0,Z_0, Q_{acc},Q_{rej})$ equipped with the two endmarkers with $|Q|=n$, $|\Gamma| =m$, and push size $e$.
We want to construct its error-equivalent 1ppda $N$ that uses no left endmarker $\cent$. We express  $N$ as $(Q',\Sigma,\{\dollar\}, \Gamma',\Theta_{\Gamma'}, \delta',\hat{q}_0, Z_0,Q'_{acc},Q'_{rej})$ and hereafter describe each component of $N$ in detail. Let $x=x_{(1)}x_{(2)}\cdots x_{(n)}$ denote any input string of length $|x|$ over $\Sigma$.

Let us recall the notations $\Gamma^{(-)}$,  $\Sigma_{\lambda}$, and $[\Sigma,\Gamma]$ from Section \ref{sec:numbers}.
We set $Q'= Q\cup\{\bar{q}\mid q\in Q\}\cup \{\hat{q}_0\}$ and $\Gamma'=\Gamma\cup [\Sigma,\Gamma]$.
Concerning $[\Sigma,\Gamma]$, we use the following convention for its elements $[\sigma,a]$: if $a=\lambda$,  $[\sigma,a]$ expresses $\lambda$ for any $\sigma$. Clearly, $|Q'|=2n+1$ and $|\Gamma'|=(|\Sigma|+1)m$ hold.

The probabilistic transition function $\delta'$ is formally given in the following way.
Let  $p,q\in Q$, $\sigma\in\Sigma$, $\xi\in\Sigma_{\lambda}$, $a,b\in\Gamma$, $w\in\Gamma^*$,  $k\in\nat$, and $a_i\in \Gamma^{(-)}\cup \{\lambda\}$ for all $i\in[k]$.

\s

(1) By our formalism, at the first step, $N$ should read the leftmost symbol $x_{(1)}$ of the string $x\dollar$ written on the input tape, instead of making a $\lambda$-move.
When $x$ is $\lambda$, in what follows, we read $\dollar$ for $x_{(1)}$.
During this first step as well as its (possible) subsequent  $\lambda$-moves, $N$ takes the same steps as $M$ does on $\cent x_{(1)}$. For this purpose, $N$ needs to remember $x_{(1)}$ as a part of its inner state until $M$ actually reads $x_{(1)}$ after making a series of transitions (including $\lambda$-moves) related to $\cent$.
Note that, after $M$ reads $\cent$ at the first step, $M$'s tape head should move to $x_{(1)}$.
While $M$ makes a $\lambda$-move, in particular, $N$ simulates this move without moving the tape head. Once $M$ reads $x_{(1)}$ completely, $N$ mimics $M$'s steps on $x_{(1)}$ without reading any new input symbol.

Assuming $x\neq\lambda$, since $N$ begins with reading the leftmost symbol $x_{(1)}$ instead of $\cent$ (Line 1), we need to remember $x_{(1)}$ until $M$ actually reads $x_{(1)}$ after a series of $\lambda$-moves (Lines 2--3).
While $M$ makes a series of $\lambda$-moves, $N$ simulates them without moving the tape head (Lines 2--3). When $M$ makes the last $\lambda$-move in this series, $N$ simulates $M$'s move of reading the symbol $x_{(1)}$ (Line 4). This entire procedure must be conducted using inner states in $\{\bar{q}\mid q\in Q\}\cup \{\hat{q}_0\}$. Let $\sigma\in\Sigma$.

\begin{enumerate}\vs{-1}
  \setlength{\topsep}{-2mm}%
  \setlength{\itemsep}{0mm}%
  \setlength{\parskip}{0cm}%

\item[1.]  $\delta'(\hat{q}_0,\sigma, Z_0 \mmid \bar{p}, [\sigma,a_k]\cdots [\sigma,a_1][\sigma,Z_0]Z_0) = \delta(q_0,\cent,Z_0 \mmid p, a_k\cdots a_1Z_0)$.

\item[2.] $\delta'(\bar{q},\lambda,[\sigma,b] \mmid \bar{p}, [\sigma,a_k]\cdots [\sigma,a_1][\sigma,b]) = \delta(q,\lambda,b \mmid p,a_k\cdots a_1 b)$.

\item[3.] $\delta'(\bar{q},\lambda,[\sigma,b] \mmid \bar{p},\lambda) = \delta(q,\lambda,b \mmid p,\lambda)$.

\item[4.] $\delta'(\bar{q},\lambda,[\sigma,b] \mmid p,w) = \delta(q,\sigma,b \mmid p,w)$.
\end{enumerate}\vs{-1}

\n Whenever $x$ is $\lambda$, $N$ reads $\dollar$ at the first step (Line 5). We then apply Line 6 instead of Line 4.

\begin{enumerate}\vs{-1}
  \setlength{\topsep}{-2mm}%
  \setlength{\itemsep}{0mm}%
  \setlength{\parskip}{0cm}%

\item[5.]  $\delta'(\hat{q}_0,\dollar,Z_0 \mmid \bar{p}, [\dollar,a_k]\cdots [\dollar,a_1][\dollar,Z_0]Z_0) = \delta(q_0,\dollar,Z_0 \mmid p, a_k\cdots a_1Z_0)$.

\item[6.] $\delta'(\bar{q},\lambda,[\dollar,b] \mmid p,w) = \delta(q,\dollar,b \mmid p,w)$.
\end{enumerate}\vs{-1}

\n Note that Line 6 is needed because there are possibly bracketed symbols of the form $[\dollar,b]$ in the stack.


\begin{figure}[t]
\centering
\includegraphics*[height=3.0cm]{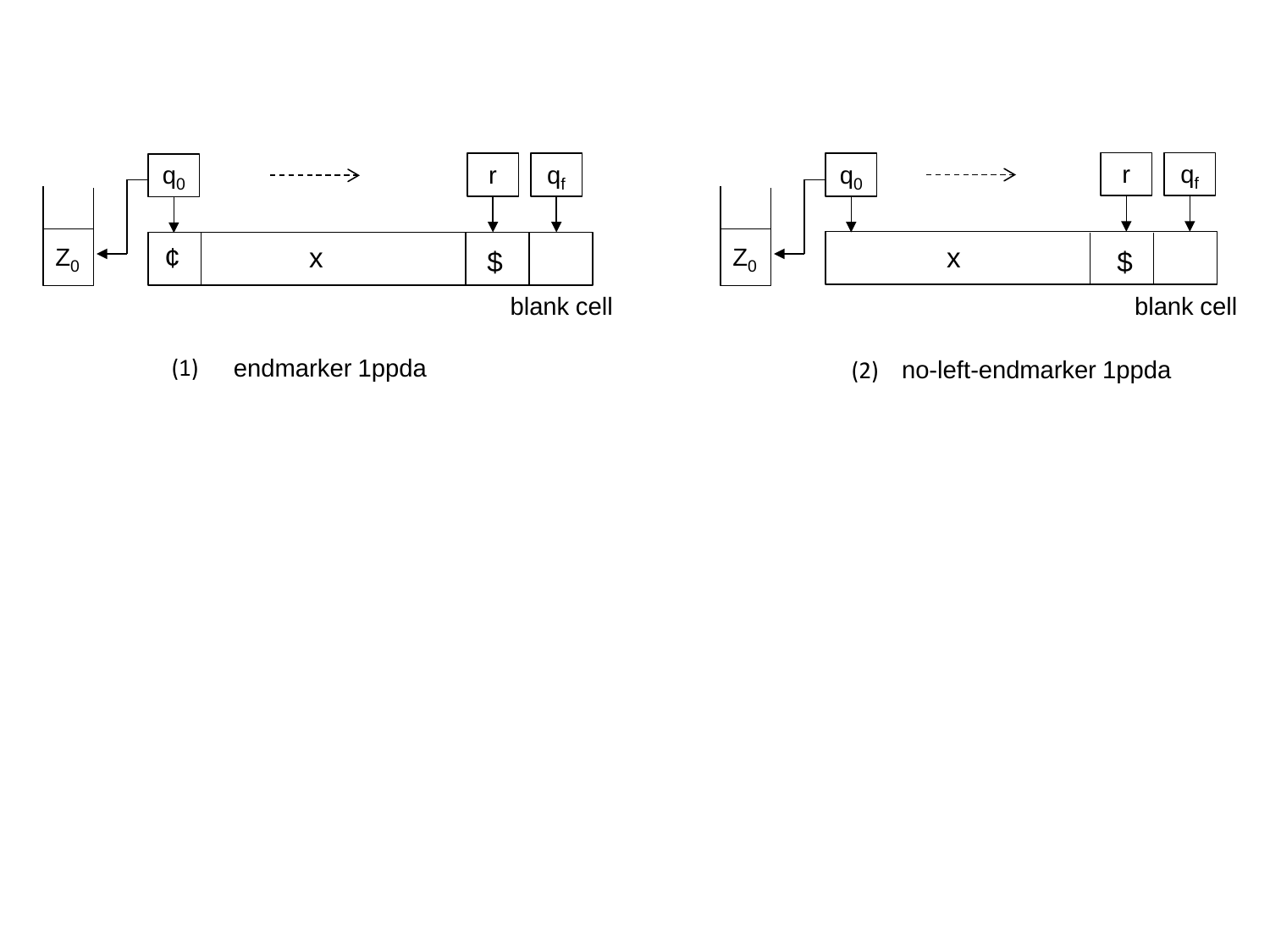}
\caption{(1) An endmarker 1ppda starts with reading the left endmarker $\cent$ in the initial state $q_0$. The machine may make a series of consecutive $\lambda$-moves even after reading $\dollar$. (2) A no-left-endmarker 1ppda  starts with reading the leftmost  symbol of input $x$ in the initial state $q_0$. The machine may make a series of $\lambda$-moves after processing $\dollar$.}\label{fig:no-left-endmarker}
\end{figure}


\s

(2) Once $M$ moves away from $x_{(1)}$, $N$ mimics $M$'s behaviors step by step (Lines 7--8) by ignoring $\sigma$ in $[\sigma,a]$ whenever $[\sigma,a]$ is a topmost stack symbol  (Line 7). In the case of $x=\lambda$, after reading $\dollar$ (Lines 5--6), $N$ precisely simulates $M$ (Line 8). Let $\sigma\in\Sigma\cup\{\dollar\}$ and  $\xi\in\Sigma_{\lambda}$.

\begin{enumerate}\vs{-1}
  \setlength{\topsep}{-2mm}%
  \setlength{\itemsep}{0mm}%
  \setlength{\parskip}{0cm}%

\item[7.] $\delta'(q,\xi,[\sigma,a] \mmid p,w) = \delta(q,\xi,a \mmid p,w)$.

\item[8.] $\delta'(q,\xi,a \mmid p,w) = \delta(q,\xi,a \mmid p,w)$.
\end{enumerate}\vs{-1}

This completes the construction of $N$.
\end{proofof}

\subsection{Removing the Final Series of $\lambda$-Moves}\label{sec:remove-empty}

Next, let us focus on a no-left-endmarker 1ppda $M = (Q, \Sigma,\{\dollar\}, \Gamma,\Theta_{\gamma},\delta, q_0,Z_0,Q_{acc},Q_{rej})$.
Clearly, if $M$ enters halting states before reading the right endmarker $\dollar$ for all but finitely many inputs, then there is no trouble removing $\dollar$ by remembering all those exceptional inputs using inner states.
Hence, we hereafter assume that $M$ reaches $\dollar$ on infinitely many inputs.
Recall that, in our endmarker model, a 1ppda is permitted to make a final series of $\lambda$-moves (i.e., a series of $\lambda$-moves just after reading the right endmarker $\dollar$).
In the lemma stated below, we wish to remove this final series of $\lambda$-moves and force $M$ to halt immediately after reading $\dollar$. See Fig.~\ref{fig:stack-history}.
Furthermore, the lemma requires the condition that the given 1ppda $M$ is in an ideal shape.

In the statement of the lemma, however, we need to slightly relax the ideal-shape conditions (1)--(5) and introduce a new notion of ``almost ideal shape''. A 1ppda $N$ is said to be \emph{in an almost ideal shape} if there exists a subset $\Gamma_{Z_0}$ of $\Gamma^{(-)}$ for which (a)  the conditions (1)--(5) hold for all the steps of $N$ except for the first step, (b) at the first step, $N$ in the initial state $q_0$ reads a tape symbol $a$ and pushes either $\bar{Z}_0$ or $b\bar{Z}_0$ ($b\in\Gamma^{(-)} - \Gamma_{Z_0}$) for a stack symbol $\bar{Z}_0\in Q_{Z_0}$, (c) during a computation of $N$, all symbols in $\Gamma_{Z_0}$ are treated as if they are new ``bottom markers'' (i.e., they are not popped or pushed again after the first step).

\begin{lemma}\label{final-lambda-move}
Let $M$ be any $n$-state 1ppda with stack alphabet size $m$ using no left endmarker in an ideal shape.  Along each halting computation path, assume that $M$ enters a halting state only after emptying the stack.
There exists an error-equivalent no-left-endmarker 1ppda $P$ in an almost ideal shape that has $2\bar{n}(\bar{n}2^{\bar{n}}+1)$ states and stack alphabet size $m^2 2^{\bar{n}}$ but makes no $\lambda$-move after reading the right endmarker $\dollar$, where $\bar{n}=n(m+1)$.
\end{lemma}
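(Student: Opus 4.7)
The plan is to \emph{pre-sample} the outcome of $M$'s post-$\dollar$ $\lambda$-pop cascade during the input-reading phase and to store the sampled outcome inside the stack; then at reading $\dollar$, $P$ can immediately transition to a halting state with no further $\lambda$-move. This is feasible precisely because $M$ is in an ideal shape, so every post-$\dollar$ $\lambda$-move is a pop, and hence the final halting state is a random variable determined by $M$'s state at $\dollar$ together with the stack content at that moment.

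Concretely, each stack symbol $a \in \Gamma$ will be enriched to $(a,g)$ with $g\colon Q \to \{q_{acc},q_{rej}\}$, yielding a stack alphabet of size $m\cdot 2^n$ as claimed. The invariant maintained is that whenever $(a,g)$ is the current top of $P$'s stack, the value $g(q)$ is distributed exactly like the halting state that $M$'s post-$\dollar$ cascade would reach from state $q$ with the corresponding underlying stack. When $P$ simulates a push of $a$ onto a top $(b,g_b)$, $P$ simultaneously samples a fresh $g_a$ and pushes $(a,g_a)$; the rule is to draw, independently for each $q\in Q$, a state $p_q$ from $M$'s $\lambda$-pop distribution $\delta(q,\lambda,a \mmid \cdot,\lambda)$ and to set $g_a(q):=g_b(p_q)$. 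This entire sampling is encoded in a single $P$-transition whose probability of producing any particular $g_a$ is a product over $q\in Q$ of $M$'s pop-probabilities. The base case (the symbol first pushed above $Z_0$) is sampled analogously, using $M$'s $\dollar$-transition on $Z_0$ together with the final $\lambda$-transition from $(p,Z_0)$. When $P$ pops, the symbol immediately below is already correctly annotated, so no resampling is needed. Finally, on reading $\dollar$ in state $q$ with top $(a,g_a)$, $P$ performs a single stationary move into the halting state $g_a(q)$ and terminates.

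Correctness is established by induction on the current stack depth. Assuming $g_b(p)$ matches the distribution of $M$'s post-$\dollar$ outcome from state $p$ with the stack $w$ below, the sampling rule yields $\Pr[g_a(q)=h]=\sum_p \delta(q,\lambda,a \mmid p,\lambda)\,\Pr[g_b(p)=h]$, which is exactly the one-step unfolding of $M$'s pop cascade from $(q,aw)$. Crucially, at reading $\dollar$ only the single entry $g_a(q^{*})$ is queried (for the current $M$-state $q^{*}$), so only the per-$q$ marginal distributions of $g_a$ have to match $M$, and the joint distribution across distinct $q$'s is irrelevant; this is what allows us to use independent per-$q$ sampling and to get away with a stack alphabet of size only $m\cdot 2^n$.

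The main obstacle I anticipate is preserving the ideal-shape conditions on $P$ while weaving the pre-sampling into $P$'s transitions. Because each simulated push of $M$ is now coupled with the choice of a fresh $g_a$, and because condition~(5) restricts when a $\lambda$-pop may legally follow a non-$\lambda$ move, some transitions must be routed through auxiliary states of the form $(q,r)$ with $r\in Q\cup\{\bot\}$---which is exactly what yields the claimed bound $n(n+1)$ on $P$'s state count. A secondary delicacy is the bottom marker $Z_0$: since $Z_0$ is never popped, the final halting transition from $(p,Z_0)$ must be folded into the $g$-function of the lowest symbol pushed above $Z_0$, and $P$'s very first input-reading step has to produce this correctly annotated symbol on its own.
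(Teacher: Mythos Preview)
Your approach differs genuinely from the paper's. The paper invokes a Nasu--Honda construction to build a free 1pfa $K$ on $2^{|Q|}$ states that simulates $M$'s post-$\dollar$ $\lambda$-pop cascade \emph{in reverse} (reading the stack bottom-to-top); $P$ then runs $M$ and $K$ in lockstep, each stack cell $[a,r]$ recording $K$'s state just \emph{before} $a$ was processed so that a pop rolls $K$ back to $r$ in a single step, and acceptance is read off from $P$'s state pair $(p,p')$ at $\dollar$ via the condition $p'\in T(p)$. Your pre-sampling idea avoids the Nasu--Honda matrix machinery altogether and is conceptually more direct: rather than maintain a deterministic summary of the cascade, you randomly commit in advance to one outcome per possible entry state.

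There is, however, a real gap. Your recursion $g_a(q):=g_b(p_q)$ with $p_q\sim\delta(q,\lambda,a\mmid\cdot,\lambda)$ makes $g_a(q)$ distributed as the outcome of the \emph{pure $\lambda$-pop cascade} from $(q,awZ_0)$. But $M$'s behaviour upon reading $\dollar$ begins with its $\dollar$-transition, and in an ideal-shape machine with empty-stack halting that transition (on a non-$Z_0$ top) is a pop whose distribution $\delta(q,\dollar,a\mmid\cdot,\lambda)$ is in general different from $\delta(q,\lambda,a\mmid\cdot,\lambda)$. Hence your final rule ``enter $g_a(q)$ on reading $\dollar$'' produces the wrong halting law. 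The natural repair---have $P$ simulate $M$'s $\dollar$-pop first and then look up $g_b(p)$ on the newly exposed top $(b,g_b)$---costs a $\lambda$-move after $\dollar$, exactly what the lemma forbids. The paper escapes this because the datum $r$ in $[a,r]$ pertains to the level \emph{below} $a$; the analogous fix for your scheme is to annotate each pushed cell with the pre-sampled $g$ of the symbol \emph{underneath} it, so that on $\dollar$ the $\dollar$-pop sample and the lookup can be fused into one stationary step. That, however, rewrites the recursion (the sampling must now use the current top's symbol rather than the one being pushed), changes the base case, and it is not evident that your auxiliary states $(q,r)$ with $r\in Q\cup\{\bot\}$ suffice to carry all of this through while keeping $P$ in ideal shape.
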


Using the inequality $x+1\leq 2x$ for all real numbers $x\geq1$, we obtain $2\bar{n}(\bar{n}2^{\bar{n}}+1) \leq 16n^2m^22^{2nm}$ and $m^2 2^{\bar{n}}\leq m^22^{2nm}$. These bounds will be used in Section \ref{sec:final-proof}.

Lemma \ref{final-lambda-move} is the most essential part of our proof of Proposition \ref{deleting-endmarker}.
The subsequent proof of the lemma is inspired by \cite[Lemma 2]{KGF97}, in which
Ka\c{n}eps, Geidmanis, and Freivalds demonstrated how to convert each bounded-error unary 1ppda into an equivalent bounded-error ``inputless''  1ppda, where a \emph{unary machine} uses only a unary input alphabet.
However, our setting (including our assumption) is quite different from \cite[Lemma 2]{KGF97}, and therefore we intend to provide the complete proof of the lemma.
A key idea of Ka\c{n}eps et. al. relies on a result of Nasu and Honda \cite[Section 6]{NH68} concerning a certain property of \emph{one-way probabilistic finite automata} (or 1pfa's) on reversed inputs. In essence, for any given 1pfa $M$ with no $\lambda$-move, Nasu and Honda constructed another 1pfa $N$ that accepts $x^R$ whenever $M$ accepts $x$, with the same probability, as long as we carefully choose which inner states of $N$ to start and terminate.

The ideal-shape conditions of $M$ guarantee that any final series of $\lambda$-moves supports only pop operations, and thus $M$ constantly reduces the stack size.
A series of pop operations (of the form $\delta_{M}(q,\lambda,\sigma \mmid p,\lambda)$) after reading $\dollar$ can be seen as a series of probabilistic transitions consuming all stack symbols one by one from the topmost symbol to the bottom symbol. In other words, this process is completely described as an appropriate 1pfa taking a stack content as its input and making no $\lambda$-move.
Such a 1pfa $N$ is expressed as
$(Q,\Gamma^{(-)},\{Z_0\},\delta_N)$ \emph{with neither initial state nor halting states}, where $\Gamma^{(-)}$ is used as an input alphabet, $Z_0$ is treated as a new right endmarker (but no left endmarker), and $\delta_{N}(q,\sigma \mmid p) = \delta_{M}(q,\lambda,\sigma \mmid p,\lambda)$ for any tuple $(p,q,\sigma)$ with $\sigma\neq\lambda$.
Notice that $\Gamma^{(-)}\cup\{Z_0\}$ coincides with $\Gamma$.
We force $N$ to run until $Z_0$ is completely read. For our convenience, we call such a machine a \emph{free 1pfa}.
Given such a free 1pfa $N$, the notation  $p_{N}(q,w,p)$ denotes the probability that $N$ starts in inner state $q$, reads $w$ completely, and then enters inner state $p$.
Formally, for any $w=w_1w_2\cdots w_n\in(\Gamma^{(-)})^*\cup (\Gamma^{(-)})^*Z_0$, $p_{N}(q,w,p)$ is defined to be $\sum_{p_2,\ldots,p_n\in Q} ( \prod_{i\in[n]} \delta_N(p_i,w_i \mmid p_{i+1}) )$ with $p_1=q$ and $p_{n+1}=p$.

The acceptance/rejection probability of $M$ on input $x$ is given by the probability distribution produced by $M$ after completely reading $x\dollar$, without making $\lambda$-moves after reading $\dollar$, multiplied by the probability of $N$'s processing (as its own input) the stack content generated by $M$ on $x\dollar$. However, to simulate $N$ properly, we need to read the stack content of $M$ \emph{from the bottom symbol to the top symbol}, which is the ``reverse'' of the input given to $N$. What we need for the construction of the desired 1ppda $P$ in Lemma \ref{final-lambda-move} is another free 1pfa $K$ ``mimicking'' $N$ on the input $x$ but $K$ must read the reverse of $x$.

We thus introduce a supportive lemma, stated as Lemma \ref{Nasu-Honda}. This lemma ensures that we can construct such a free 1pfa $K$ that ``mimics'' the behavior of $N$ by reading a ``reversed'' input if we choose appropriate inner states to start and terminate.


\begin{figure}[t]
\centering
\includegraphics*[height=3.6cm]{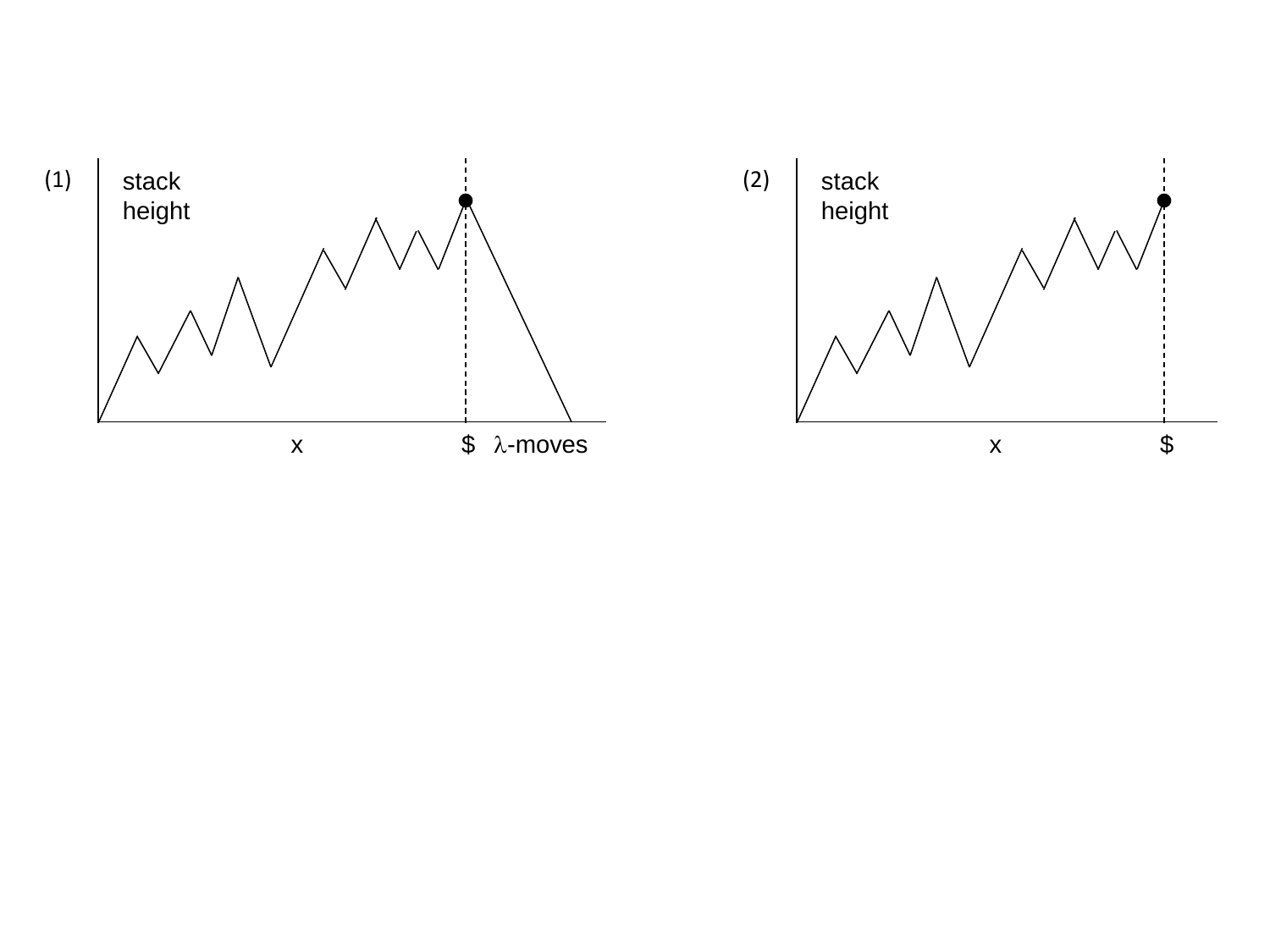}
\caption{(1) After reading $\dollar$, the underlying 1ppda makes a series of $\lambda$-moves to empty the stack. After the stack becomes empty, the 1ppda enters halting states. (2) The 1ppda enters halting states at reading $\dollar$, and thus the 1ppda makes no $\lambda$-move after reading $\dollar$.}\label{fig:stack-history}
\end{figure}


\begin{lemma}\label{Nasu-Honda}
Take $N$ and $p_{N}$ that are constructed from $M$ as described above. There exists a free 1pfa  $K=(Q_K,\Gamma^{(-)},\{Z_0\},\delta_K)$ with  $|Q_K|=2^{|Q|}$ that satisfies the following: for any pair $p,q\in Q$, there exist an inner state $s(p)\in Q_K$ and a set $T(q)\subseteq Q_K$ satisfying $p_{N}(q,w,p) = \sum_{r\in T(q)}p_{K}(s(p),w^R,r)$ for any $w\in (\Gamma^{(-)})^* \cup (\Gamma^{(-)})^*Z_0$, where $p_{K}$ is defined from $K$ similarly to $p_{N}$.
\end{lemma}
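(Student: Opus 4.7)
The plan is to construct $K$ through a subset construction governed by independent Bernoulli trials on the states of $N$. Take $Q_{K}=2^{Q}$, so $|Q_{K}|=2^{|Q|}$ as required, and for each $S\subseteq Q$ and each $\sigma\in\Gamma^{(-)}\cup\{Z_0\}$ define
\[
c_q(S,\sigma)=\sum_{p\in S}\delta_{N}(q,\sigma\mmid p)\qquad (q\in Q).
\]
Because $N$ is a free 1pfa, $\sum_{p\in Q}\delta_{N}(q,\sigma\mmid p)=1$, so every $c_q(S,\sigma)\in[0,1]$. I then set
\[
\delta_{K}(S,\sigma\mmid S')=\prod_{q\in S'}c_q(S,\sigma)\cdot\prod_{q\notin S'}\bigl(1-c_q(S,\sigma)\bigr),
\]
which corresponds to deciding independently, for each $q\in Q$, whether to place $q$ into $S'$. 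Expanding the product shows $\sum_{S'\subseteq Q}\delta_{K}(S,\sigma\mmid S')=1$, so $K$ is a valid free 1pfa.

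For the correspondence, I will set $s(p)=\{p\}$ and $T(q)=\{S\subseteq Q:q\in S\}$. Under these choices,
\[
\sum_{r\in T(q)}p_{K}(s(p),w^{R},r)=\Pr[q\in S_{|w|}],
\]
where $S_i$ denotes the random state of $K$ after reading the first $i$ symbols of $w^{R}=w_{n}w_{n-1}\cdots w_{1}$ starting from $S_0=\{p\}$. Thus the lemma reduces to identifying this marginal with $p_{N}(q,w,p)$.

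The main work is an induction on $i$ establishing $\Pr[q\in S_{i}]=p_{N}(q,\,w_{n+1-i}w_{n+2-i}\cdots w_{n},\,p)$ for every $q\in Q$. The base case $i=0$ is immediate from $S_0=\{p\}$ and $p_{N}(q,\lambda,p)=[q=p]$. For the inductive step, writing $\sigma=w_{n+1-i}$ and using $\Pr[q\in S_i\mid S_{i-1}=T]=c_q(T,\sigma)$, a straightforward swap of summations gives
\[
\Pr[q\in S_i]=\sum_{p'\in Q}\delta_{N}(q,\sigma\mmid p')\cdot\Pr[p'\in S_{i-1}].
\]
Plugging in the inductive hypothesis and invoking the one-step recurrence $p_{N}(q,\sigma v,p)=\sum_{p'}\delta_{N}(q,\sigma\mmid p')\,p_{N}(p',v,p)$ built into the definition of $p_N$ closes the induction. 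Taking $i=|w|$ finishes the proof.

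The point to watch is that this argument tracks only the \emph{marginal} probabilities $\Pr[q\in S_i]$; no joint independence of the events $\{q\in S_i\}_{q\in Q}$ is ever needed. That simplification is what makes the independent-Bernoulli construction sufficient here, and it is available precisely because the accepting set $T(q)$ is a single-coordinate event. Consequently, once the correct choices of $s(p)$, $T(q)$, and $\delta_K$ are identified there is no real obstacle; the remaining work is a routine Fubini-style calculation and matching against the defining recurrence of $p_N$.
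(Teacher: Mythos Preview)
Your proof is correct, and it reaches the same endpoint as the paper---$s(p)=\{p\}$, $T(q)=\{S:q\in S\}$, $|Q_K|=2^{|Q|}$---but by a genuinely different construction of $\delta_K$. The paper follows Nasu and Honda's original recipe: for each subset $Q_k$ and symbol $\sigma$ it forms the vector $\beta_k(\sigma)=\sum_{j\in Q_k}\alpha_j(\sigma)$ of column sums of $U_\sigma$ restricted to $Q_k$, then iteratively peels off the minimum positive entry to decompose $\beta_k(\sigma)$ as a nonnegative combination of subset indicators $e_{m(i)}$; the coefficients become the outgoing probabilities from $Q_k$, with any deficit routed to $\setempty$. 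This is an algebraic, matrix-level argument that builds a stochastic $V_\sigma$ simulating $U_\sigma^T$. Your construction instead has a clean probabilistic reading: from state $S$, place each $q\in Q$ into $S'$ independently with probability $c_q(S,\sigma)=\sum_{p\in S}\delta_N(q,\sigma\mmid p)$. Because the target event $T(q)$ is a single-coordinate marginal, joint distributions never matter, and your one-line induction on $\Pr[q\in S_i]$ replaces the paper's decomposition machinery entirely. The trade-off is that the Nasu--Honda construction is the historically cited one and yields a sparser transition function (at most $n+1$ nonzero entries per row rather than $2^n$), while your Bernoulli construction is shorter, more transparent, and makes the ``only marginals matter'' insight explicit.
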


Our proof of Lemma \ref{Nasu-Honda} closely follows the proof of Nasu and Honda \cite[Theorem 6.1]{NH68} in order to construct the desired $K$.  For this purpose, we first reformulate a computation of $N$ in a more traditional way using a set of \emph{transition matrices} instead of the transition function $\delta_N$. Given a symbol $\sigma\in\Gamma^{(-)}$, we write $U_{\sigma} = (a_{p,q}^{(\sigma)})_{p,q\in Q}$ for a transition matrix obtained by setting $a_{p,q}^{(\sigma)} = \delta_N(p,\sigma\mmid q)$. Given an inner state $q\in Q$, $\pi_q$ denotes a row vector in which each entry indexed by $r$ is $1$ if $r=q$ and $0$ otherwise. It then follows that $p_N(q,w, p)$ equals $\pi_q U_{w}\pi_p^T$, where $U_w$ is the matrix multiplication $U_{w_1}U_{w_2}\cdots U_{w_n}$ for $w=w_1w_2\cdots w_n$ with  $w_1,w_2,\ldots,w_n\in\Gamma^{(-)}$. Notice that $p_N(q,w,p)$ also equals $(\pi_q U_w \pi_p^T)^T$, which is simply $\pi_p U_w^T \pi_q^T$.
Thus, the transpose $U_w^T$ can ``simulate'' $N$ on $w$ by reading its  reverse $w^R$.
It therefore suffices to construct a stochastic matrix $V_w$ that produces the same acceptance/rejection distribution as  $U_w^T$. However, since the transpose $U_w^T$ may not be stochastic in general, we need to expand the original vector space used for $U_w^T$ to a much larger vector space.
Using a construction similar to Nasu and Honda's, it is possible to build the desired $K$ from $N$.

\vs{-2}
\begin{proofof}{Lemma \ref{Nasu-Honda}}
Let $M= (Q,\Sigma,\{\dollar\}, \Gamma,\Theta_{\Gamma}, \delta,q_0,Z_0, Q_{acc}, Q_{rej})$ denote a no-left-endmarker 1ppda in an ideal shape and let $N=(\Theta,\Gamma^{(-)},\{Z_0\},\delta_N)$.
We use the notation $[\![x\in A]\!]$ to denote the \emph{truth value} of the statement ``$x\in A$''; namely, $[\![x\in A]\!]$ equals $1$ if $x\in A$, and $0$ otherwise. Let $n=|Q|$.
For readability, without loss of generality, we identify $Q$ with the integer set $[n]$ with $q_0=1$. We then set $Q_K=\PP(Q)$ and express $Q_K$ as $\{Q_1,Q_2,\ldots,Q_{2^n}\}$ with $Q_1=\setempty$ and $Q_{2^n}=Q$. Clearly,  $|Q_K|=|\PP(Q)|=2^{|Q|}$ follows.
For each symbol $\sigma\in\Gamma^{(-)}$,
we abbreviate $\delta_{N}(i,\sigma \mmid j)$ as  $\alpha^{(\sigma)}_{ij}$ for each pair $i,j\in Q$ and consider the transition matrix $U_{\sigma}=(\alpha_{ij}^{(\sigma)})_{i,j\in Q}$ of $N$ associated with each symbol $\sigma$.
Moreover, we set $\pi_i=(\pi_{i1},\pi_{i2},\ldots,\pi_{in})$ with  $\pi_{ii}=1$ and $\pi_{ij}=0$ for any distinct pair $i,j\in Q$.
Given any index $j\in Q$, we express the $j$th row  $(\alpha_{j1}^{(\sigma)},\ldots,\alpha_{jn}^{(\sigma)})$ of $U_{\sigma}$ as $\alpha_j(\sigma)$.
As noted just before this proof, we can obtain $p_N(i,w,j) = \pi_i U_{w}\pi_j^T = (\pi_i U_{w}\pi_j^T)^T =  \pi_jU_w^T\pi_i^T$. Since $U_w^T = U_{w_n}^T U_{w_{n-1}}^T \cdots U_{w_1}^T$, it suffices to define the desired stochastic matrix $V_{\sigma}$ so that it can ``simulate'' $U_{\sigma}^T$ for every symbol $\sigma$.

In what follows, we intend to define $V_{\sigma}=(\gamma_{kl}^{(\sigma)})_{k,l\in[2^n]}$ from  $U_{\sigma}^T$. Let $\gamma_k(\sigma)$ denote the $k$th row of $V_{\sigma}$ and assume that $\gamma_k(\sigma)$ has the form $(\gamma_{k1}^{(\sigma)}, \gamma_{k2}^{(\sigma)}, \ldots, \gamma_{k2^n}^{(\sigma)})$ for each index $k\in[2^n]$.
To define $\gamma_k(\sigma)$, we need to consider the supportive vector $\beta_k(\sigma) = \sum_{j\in Q_k}\alpha_j(\sigma)$.
For later use, we write $\beta_k(\sigma)$ for $(\beta_{k1}^{(\sigma)}, \beta_{k2}^{(\sigma)},\ldots,\beta_{kn}^{(\sigma)})$.

As our preparation, we introduce three extra notations for each index $k\in[2^n]$:
(1)  $\min(\beta_k(\sigma))$ denotes the minimum positive value of any entry of the vector $\beta_k(\sigma)$ (if one exists), namely, $\min(\beta_k(\sigma)) = \min\{\beta^{(\sigma)}_{kj}\mid j\in[n], \beta^{(\sigma)}_{kj}>0\}$, (2) $\max(\beta_k(\sigma))$ denotes the maximum value of any entry of $\beta_k(\sigma)$, and
(3) $e_k$ denotes the $\{0,1\}$-vector $([\![1\in Q_k]\!],[\![2\in Q_k]\!], \ldots,[\![n\in Q_k]\!])$. In particular, since $Q_1=\setempty$ and $Q_{2^n}=Q$, we obtain $e_1=(0,0,\ldots,0)$ and $e_{2^n}=(1,1,\ldots,1)$.

Fix $k\in[2^n]$ arbitrarily.
Inductively, we construct a series of supportive vectors $\hat{\beta}_{k}^{(1)}(\sigma), \hat{\beta}_{k}^{(2)}(\sigma), \ldots$ as follows. Initially, we set $\hat{\beta}_{k}^{(1)}(\sigma) =\beta_k(\sigma)$. Given an index $i\in[n-1]$,  $\hat{\beta}_k^{(i)}(\sigma)$ is assumed to have already been defined. If $\hat{\beta}_k^{(i)}(\sigma)={\bf 0}$, then we stop the construction process. Assuming otherwise, let $\hat{\beta}_k^{(i)}(\sigma) = (\tilde{\beta}_1^{(i)},\tilde{\beta}_2^{(i)},\ldots, \tilde{\beta}_n^{(i)})$. We choose an index $m(i)\in [2^n]$ for which $Q_{m(i)}$ equals $\{j\in[n]\mid \tilde{\beta}_j^{(i)}>0\}$.
Notice that such an index always exists.
We then define $\gamma_{km(i)}^{(\sigma)}= \min(\hat{\beta}_k^{(i)}(\sigma))$ and $\hat{\beta}_k^{(i+1)}(\sigma) = \hat{\beta}_k^{(i)}(\sigma) - \gamma_{km(i)}^{(\sigma)}\cdot e_{m(i)}$.
If the construction process terminates after $t$ steps, then
$1\leq t\leq n$ follows. After the process terminates, we set $\gamma_{k1}^{(\sigma)}=1-\max (\beta_{k}(\sigma))$ and $\gamma_{kl}^{(\sigma)}=0$ for any $l\in[2^n]$ that are undefined so far.  Since $\sum_{i\in[t]}\min (\hat{\beta}_k^{(i)}(\sigma)) = \sum_{j\in[n]} \beta_{kj}^{(\sigma)} = \max( \beta_k(\sigma))$, it  follows that $\sum_{l\in[2^n]} \gamma_{kl}^{(\sigma)} = \gamma_{k1}^{(\sigma)} + \sum_{i\in[t]}\min (\hat{\beta}_k^{(i)}(\sigma)) = \gamma_{k1}^{(\sigma)} + \max( \beta_k(\sigma)) = 1$.
This implies that $V_{\sigma}$ is stochastic. We further define $\hat{\pi}_k = (\hat{\pi}_{k1},\hat{\pi}_{k2},\ldots, \hat{\pi}_{k2^n})$ by setting $\hat{\pi}_{kk}=1$ and $\hat{\pi}_{kl}=0$ for any distinct pair $k,l\in[2^n]$.

Finally, we set $\delta_K(Q_k,\sigma \mmid Q_l)$ to be $\gamma^{(\sigma)}_{kl}$ for any $k,l\in [2^n]$.
Notice that, for any $w=w_1w_2\cdots w_n$, $p_K(Q_k,w^R,Q_l)$ equals $\hat{\pi}_k V_{w^R} \hat{\pi}_l^T$, where $V_{w^R} = V_{w_n}V_{w_{n-1}}\cdots V_{w_1}$.
Furthermore, we set $s(j)=\{j\}\in Q_K$ and $T(i)=\{A\in Q_K\mid i\in A\}$.
The desired equality $p_{N}(i,w,j) = \sum_{r\in T(i)}p_{K}(s(j),w^R,r)$ comes from the above definition of $K$.
\end{proofof}

Let us return to Lemma \ref{final-lambda-move} and provide its proof with the help of Lemma \ref{Nasu-Honda}.

\vs{-2}
\begin{proofof}{Lemma \ref{final-lambda-move}}
We begin with
a no-left-endmarker 1ppda $M=(Q,\Sigma,\{\dollar\}, \Gamma,\Theta_{\Gamma},\delta, q_0, Z_0,Q_{acc},Q_{rej})$ in an ideal shape, equipped with $Q_{acc}=\{q_{acc}\}$ and $Q_{rej}=\{q_{rej}\}$. Our goal is to remove all $\lambda$-moves made by $M$ after reading $\dollar$.
To make our description of the desired 1ppda $P$, at reading $\dollar$, we wish to eliminate the specific transitions of (i) pushing another stack symbol and (ii) popping a topmost stack symbol. This is achieved by modifying $M$'s set $Q$ of inner states and probabilistic transition function $\delta$ into $Q'$ and $\delta'$ as follows. We set $Q'=Q\cup (Q\times \Gamma)$ and  (i$'$) $\delta'(q,\dollar,a\mmid (p,ba),a) = \delta(q,\dollar,a\mmid p,ba)$ and $\delta'((p,ba),\lambda,a\mmid t,\lambda) = \delta(p,\lambda,b\mmid r)\delta(r,\lambda,a\mmid t,\lambda)$ and (ii$'$) $\delta'(q,\dollar,a\mmid (p,a),a)=\delta(q,\dollar,a\mmid p,\lambda)$ and $\delta'((p,a),\lambda,a\mmid p,\lambda)=1$. Note that $|Q'|=n+nm = n(m+1)$.   For convenience, we use the same notation $M$ with $Q$ and $\delta$ to describe this modified machine with $Q'$ and $\delta'$ in the following argument. To avoid any confusion, we assume that $|Q|=\bar{n}$ ($=n(m+1)$).

Associated with the machine $M$, we then take a free 1pfa $K=(Q_K,\Gamma^{(-)},\{Z_0\},\delta_K)$ and three functions $s(\cdot)$,  $T(\cdot)$, and $p_{K}(\cdot)$ satisfying Lemma \ref{Nasu-Honda}.
Remember that $K$ can mimic a series of $\lambda$-moves of $M$ by an appropriate choice of inner states. Later, nonetheless, we will use $q'_0=s(q_{acc})$ and $\bar{q}'_0 = s(q_{rej})$ as the special inner states for $K$.

We introduce a new notation to describe a probability associated with a sub-computation of $M$. For $p,q\in Q$, $x\in\Sigma^*\cup\Sigma^*\dollar$, $a\in\Gamma$, and $w,u\in\Gamma^*$, the notation $\mu_{M}(q,x,w\:\|\: p,u)$ denotes the probability that $M$ starts in inner state $q$ with stack content $w$, reads input $x$ completely, and reaches inner state $p$ with stack content $u$.
Since $x$ is completely read, this also includes a possible series of $\lambda$-moves made just after reading the last symbol of $x$. Similarly, $\mu_{P}$ is defined using the machine $P$.

If a configuration of $M$ just after reading $\dollar$ is of the form $(q,|x\dollar|,w)$, then we assign the probability $p_{K}(q'_0,w^R,T(q))$  to this configuration so that the acceptance probability of $M$ passing through $(q,|x\dollar|,w)$ is calculated to be the  probability of reaching $(q,|x\dollar|,w)$ multiplied by $p_{K}(q'_0,w^R,T(q))$.
Similarly, using $p_{K}(\bar{q}'_0,w^R,T(q))$, we can obtain the rejection probability of $M$ passing through $(q,|x\dollar|,w)$.

Let $P=(Q',\Sigma,\{\dollar\}, \Gamma',\Theta_{\Gamma'}, \delta',q'_0,Z_0,Q_{acc},Q_{rej})$ denote the desired no-left-endmarker 1ppda, which simulates $M$ as well as $K$.
The machine $P$ needs to remember both inner states of $M$ and $K$ and an input symbol being processed by $K$ during the intended simulation.
When $M$ makes a move by pushing a stack symbol, say, $b$, $P$ simulates this move and, simultaneously, $P$ runs $K$ on this pushed symbol $b$. Since $K$ uses a different set of inner states, we need to remember the changes of such inner states in its own stack. An inner state of $P$ has the form $(q,r_0)$, indicating that $M$ is in inner state $q$ and $K$ begins with inner state $r_0$.
A stack symbol of $P$ has the form $[r,a,r']$, indicating that $K$ is in inner state $r$ reading $a$ and then enters inner state $r'$.

Formally, we set $\Gamma'= [Q_K,\Gamma,Q_{K}]$, $Q'=(Q\times \{q'_0,\bar{q}'_0\})\cup (Q\times Q_K\times \{q'_0,\bar{q}'_0\})$, $Q'_{acc}=\{(p,r,q_{acc}) \mid p\in Q, r\in T(q_{acc})\}$, and $Q'_{rej}=\{(p,r,q_{rej}) \mid p\in Q, r\in T(q_{rej})\}$.
It then follows that $|Q'| = 2\bar{n}+2\bar{n}^2\cdot 2^{\bar{n}} = 2\bar{n}(\bar{n}2^{\bar{n}}+1)$ and $|\Gamma'|= |\Gamma| |Q_K|^2 = m^2 2^{\bar{n}}$.
The desired probabilistic transition function $\delta'$ is constructed to satisfy the following equations 1--6. Meanwhile, we assume that an input is not $\lambda$.
We wish to define $\delta'$ in Stages (1)--(3). In what follows, let $p,q\in Q$, $p',q'\in Q_K$, $a,b\in\Gamma$, $\sigma_{\lambda}$, and  $r_0\in\{q'_0,\bar{q}'_0\}$.


\s

(1) Let us consider the first step of $M$ in the initial state $q_0$ with the empty  stack. Remember that the first step of $M$ cannot be a $\lambda$-move because $M$ cannot pop $Z_0$.
If $M$ starts in the initial state $q_0$, reads $\sigma$, and enters inner state $p$ by preserving $Z_0$  (Line 1) or by pushing $b$ (Line 2), then $N$ enters $(p,r_0)$ from $q_0$ by pushing $[r_0,\bar{Z}_0,r']$ (Line 1) or $[r',b,r''][r_0,\bar{Z}_0,r']$ (Line 2). These transitions are taken with the corresponding probability of $M$ multiplied by the probability of $K$'s processing $Z_0$ (Line 1) or $Z_0b$ (Line 2). In what follows, $N$ treats $[r_0,\bar{Z}_0,r']$ as if it is a new bottom marker. We define $\Gamma_{Z_0}$ to be the set $\{[r_0,Z_0,r']\mid r_0\in\{q'_0,\bar{q}'_0\}, r'\in Q_K\}$.

\begin{enumerate}\vs{-1}
  \setlength{\topsep}{-2mm}%
  \setlength{\itemsep}{0mm}%
  \setlength{\parskip}{0cm}%

\item[1.] $\delta'(q_0,\sigma,Z_0 \mmid (p,r_0),[r_0,Z_0,r']Z_0) = \delta(q_0,\sigma,Z_0 \mmid p,Z_0)\delta_K(r_0,Z_0 \mmid r')$.

\item[2.] $\delta'(q_0,\sigma,Z_0 \mmid (p,r_0),[r',b,r''][r_0,Z_0,r']Z_0) = \delta(q_0,\sigma,Z_0 \mmid p,bZ_0)\ delta_{K}(r_0,Z_0\mmid r') \delta_K(r',b\mmid r'')$.
\end{enumerate}\vs{-1}

The value $\delta'[q_0,\sigma,Z_0]$ equals the sum of $\sum\sum_{r'\in Q_K} \delta'(q_0,\sigma,Z_0\mmid (p,r_0),[r_0,Z_0,r']Z_0)$ and $\sum\sum_{r'} \delta'(q_0,\sigma,Z_0\mmid (p,r_0),[r_0,b,r']Z_0)$, where the first summation is taken over all pairs $(p,b)$.
The first term is further calculated by Line 1 as $\sum\sum_{r'\in Q_K}\delta(q_0,\sigma,Z_0\mmid p,Z_0) \delta_K(r_0,Z_0\mmid r') = \sum\delta(q_0,\sigma,Z_0\mmid p,Z_0) \cdot \sum_{r'\in Q_K}\delta_K(r_0,Z_0\mmid r') = \sum\delta(q,\sigma,Z_0\mmid p,Z_0)\delta_K[r_0,Z_0] = \sum\delta(q_0,\sigma,Z_0\mmid p,Z_0)$ because of  $\delta_K[r_0,Z_0]=1$. The second term is similarly calculated by Line 2. Thus, we obtain $\delta'[q_0,\sigma,Z_0] = \delta[q_0,\sigma,Z_0]$.

\s

(2) Assuming $\sigma\neq\lambda$ and $a\in\Gamma$, let us consider the case where $M$ is in inner state $q$ reading input symbol $\sigma$ and stack symbol $a$ and $P$ is in inner state $(q,r_0)$ reading stack symbol $[r,a,r']$.
If $M$ pushes $b$ into the stack and $K$ is in inner state $r$ reading $a$, then $P$ enters $(p,r_0)$ and pushes $[r',b,r'']$ with the same probability multiplied by the extra probability $\delta_{K}(r',b\mmid r'')$ (Line 3)
because $K$ needs to process $b$ if $b$ remains in the stack when $M$ reaches $\dollar$.
It is important to remember two inner states $r',r''$ of $K$ in the stack because we need to cancel out this extra probability when $b$ is popped by $M$ in (3). If $M$ does not change the topmost stack symbol $a$, then $P$ only changes $(q,r_0)$ to $(p,r_0)$  (Line 4) since we do not need to simulate $K$'s step.

\begin{enumerate}\vs{-1}
  \setlength{\topsep}{-2mm}%
  \setlength{\itemsep}{0mm}%
  \setlength{\parskip}{0cm}%

\item[3.] $\delta'((q,r_0),\sigma,[r,a,r'] \mmid (p,r_0),[r',b,r''][r,a,r']) =  \delta(q,\sigma,a \mmid p,ba) \delta_K(r',b \mmid r'')$.

\item[4.] $\delta'((q,r_0),\sigma,[r,a,r'] \mmid (p,r_0),[r,a,r']) =  \delta(q,\sigma,a \mmid p,a)$.
\end{enumerate}\vs{-1}

Consider the sum $\sum \sum_{r''\in Q_K} \delta'((q,r_0),\sigma,[r,a,r']\mmid (p,r_0),[r',b,r''][r,a,r'])$, where the first summation is taken over all pairs $(p,b)$. This is further equal to
$\sum \sum_{r''\in Q_K} \delta(q,\sigma,a\mmid p,ba)  \delta_K(r',b\mmid r'') = \sum \delta(q,\sigma,a\mmid p,b)\cdot \sum_{r''\in Q_K} \delta_K(r',b\mmid r'') = \sum \delta(q,\sigma,a\mmid p,ba) \delta_K[r',b] = \sum \delta(q,\sigma,a\mmid p,ba)$ because of $\delta_K[r',b]=1$. Another sum $\sum \delta'((q,r_0),\sigma,[r,a,r']\mmid (p,r_0),[r,a,r'])$ is also equal to $\sum \delta(q,\sigma,a\mmid p,a)$.

\s

(3)
In contrast to (2), when $M$ pops $a$ ($\in\Gamma^{(-)}$), we need to ``cancel out''  the probability of $K$'s processing $a$ because $a$ does not remain in $M$'s  stack at reading $\dollar$. This is a crucial case in our simulation. If $K$ is a reversible machine, we can easily reverse the computation to nullify the effect of $\delta_{K}$ caused by the previous step. Since $K$ is generally not a reversible machine, nonetheless, we retrieve the previous inner states $r,r'$ from the stack and force $K$ to roll back to the previous situation.  For this purpose, whenever $M$ pops symbol $a$, we make $P$ pop $[r,a,r']$ with the same probability.
Let $\sigma\in\Sigma_{\lambda}$.

\begin{enumerate}\vs{-1}
\item[5.] $\delta'((q,r_0),\sigma,[r,a,r'] \mmid (p,r_0),\lambda) =  \delta(q,\sigma,a \mmid p,\lambda)$.
\end{enumerate}\vs{-1}

Obviously, $\sum \delta'((q,r_0),\sigma,[r,a,r']\mmid (p,r_0),\lambda)$ equals $\sum \delta(q,\sigma,a\mmid p,\lambda)$, where the summation is taken over all $p\in Q$.
Combining (2) and (3) therefore helps us conclude that $\delta'[(q,r_0),\sigma,[r,a,r']] = \delta[q,\sigma,a]$.

We briefly explain why  Line 5 actually cancels out the probability $\delta_K(r,a\mmid r')$ that has been introduced by Line 3 in a certain early step.
We express this process of $M$ as $(q,i,aw)\vdash_{M}^{>0} (p,i+1,w)$, provided that $\sigma\neq \lambda$,  and assume that $N$ changes $((q,r_0),i,[r,a,r']w')$ to $((p,r_0),i+1,w')$.
We note that, when $\sigma=\lambda$, a similar argument works for this case as well.
The probability $\gamma$ of a computation path leading to $((q,r_0),i,[r,a,r']w')$ must contain $\delta_K(r,a\mmid r')$ as a multiplicative factor, which is introduced when $[r,a,r']$ is pushed.
In other words, we can express $\gamma$ as $\gamma_0\cdot \delta_K(r,a\mmid r')$ for a certain value $\gamma_0$.
The probability of leading to $((p,r_0),i+1,w')$ is the sum of
$\gamma \delta'((q,r_0),\sigma, [r,a,r']\mmid (p,r_0),\lambda)$ over all $r'\in Q_K$ because the information on $r'$ is removed and all computation paths leading to $((q,r_0),i,[r,a,r']w')$ reach the same configuration $((p,r_0),i+1,w')$.
This is the value
$\gamma\cdot \delta(q,\sigma,a\mmid p,\lambda) = \gamma_0\cdot \sum_{r'\in Q_K} \delta(q,\sigma,a\mmid p,\lambda)\delta_{K}(r,a\mmid r') = \gamma_0 \: \delta(q,\sigma,a\mmid p,\lambda)
\cdot \sum_{r'\in Q_K}\delta_{K}(r,a,\mmid r') = \gamma_0\: \delta(q,\sigma,a\mmid p,\lambda)
\delta_{K}[r,a] = \gamma_0 \: \delta(q,\sigma,a\mmid p,\lambda)$ since  $\delta_{K}[r,a]=1$. Later, we will formalize this argument and show how this works.

\s

(4)
Assume that  $M$ has already produced a stack content of the form $wZ_0$ and $M$ is now reading $\dollar$. Notice that, by this moment,  $P$ must have already finished the simulation of $K$ on the input $(wZ_0)^R$. Recall that $K$ correctly provides the probability of $M$'s making a (possible) series of consecutive $\lambda$-moves after reading $\dollar$.
By the property of $M$, $M$ does not conduct a pop operation at reading $\dollar$. Hence, $M$ either pushes a stack symbol, say, $b$ onto $a$ or preserves the top stack symbol $a$.

\begin{enumerate}\vs{-1}
  \setlength{\topsep}{-2mm}%
  \setlength{\itemsep}{0mm}%
  \setlength{\parskip}{0cm}%

\item[6.] $\delta'((q,r_0),\dollar,[r,a,r'] \mmid (p,r',r_0),[r,a,r']) =\delta(q,\dollar,a \mmid p,a)$.
\end{enumerate}\vs{-1}

\n Since $(p,r',r_0)$ is a halting state of $P$, Line 6 guarantees that, after reading $\dollar$, $P$ does not make any $\lambda$-move.

\s

By the above definition of $\delta'$ together with $\Gamma_{Z_0}$, we remark that $P$ is in an almost ideal shape.  Next, we assert that $P$ correctly simulates $M$ with the same error probability.
Note that the first transitions of $M$ have the forms:  $\delta_{M}(q_0,x_{(1)},Z_0\mmid q_1,Z_0)$ (Line 1) and $\delta_{M}(q_0,x_{(1)},Z_0\mmid q_1,bZ_0)$ (Line 2). In what follows, we focus on the transition of $\delta_M(q_0,x_{(1)},Z_0\mmid q_1,Z_0)$.
Given a series $\boldvec{r}=(r_0,r_1,\ldots,r_{k+1})$ of $Q$ and a string $w=w_kx_{k-1}\cdots w_1w_0$ in $(\Gamma^{(-)})^*Z_0$,  we define $\bar{p}_K(\boldvec{r},w) =\prod_{i=0}^{k}\delta_K(r_i,w_i,r_{i+1})$ and $C(\boldvec{r},w) = [r_k,w_k,r_{k+1}] [r_{k-1},w_{k-1},r_k]
\cdots [r_1,w_1,r_2] [r_0,w_0,r_1]Z_0$ with $w_0=Z_0$.
Let us consider the following condition.
\begin{quote}
Condition (*): $\mu_{P}((q_1,r_0),x,[r_0,Z_0,r_1]Z_0\:\|\: (p,r_0),C(\boldvec{r},w)) = \mu_{M}(q_1,x,Z_0\:\|\: p,w)\cdot \bar{p}_K(\boldvec{r},w)$ holds for any $\boldvec{r}=(r_0,r_1,\ldots,r_{k+1})$ and a string $w=w_kw_{k-1}\cdots w_1w_0$.
\end{quote}

We wish to prove Condition (*) by induction on the length of input $x$.  Meanwhile, for simplicity, we abbreviate the tuple $((q_1,r_0),y,[r_0,Z_0,r_1]Z_0)$ as $H_{y}$. Let us consider the case of $x\sigma$.
The value $\mu_{M}(q_1,x\sigma\sigma,Z_0\:\|\: p,bw)$ is calculated from the following four additive terms: ($a$) $\mu_{M}(q_1,x,Z_0\:\|\: q,bw)\delta(q,\sigma,b\mmid p,b)$, ($b$) $\mu_{M}(q_1,x,Z_0\:\|\: q,au)\delta(q,\sigma,a\mmid p,bau)$ with $w=au$, ($c$) $\mu_{M}(q_1,x,Z_0\:\|\: q,abw)\delta(q,\sigma,a\mmid p,\lambda)$, and ($d$) $\mu_{M}(q_1,x\sigma,Z_0\:\|\: q,aba)\delta(q,\lambda,a\mmid p,\lambda)$.
Notice that $M$ may pop a finite number of symbols by making a series of $\lambda$-moves. Concerning ($d$), with a series of $\lambda$-moves, $M$ can pop more than one symbol.

Let us first consider the term ($b$), which incurs a push operation of $N$ by Line 3.
It then follows by Line 3 that the value $\gamma^{(b)}_x = \mu_{P}(H_{x}\:\|\: (q,r_0),C(\boldvec{r},au)) \delta'((q,r_0),\sigma,[r_k,a,r_{k+1}]\mmid (p,r_0),[r_{k+1},b,r_{k+2}][r_k,a,r_{k+1}])$ equals $\mu_{P}(H_x\:\|\: (q,r_0),C(\boldvec{r},au)) \delta_{M}(q,\sigma,a\mmid p,ba) \delta_K(r_{k+1},b,r_{k+2})$. Let $\boldvec{r}_{+} = (r_0,r_1,\ldots,r_{k+1},r_{k+2})$.
By induction hypothesis, we obtain $\mu_{P}(H_x\:\|\: (q,r_0),C(\boldvec{r},au)) = \mu_{M}(q_1,x,Z_0\:\|\: q,au) \bar{p}_K(\boldvec{r},au)$. Hence, it follows that $\gamma^{(b)}_x = \mu_{M}(q_1,x,Z_0\:\|\; q,au) \delta_M(q,\sigma,a\mmid p,ba)\bar{p}_{K}(\boldvec{r},au)  \delta_K(r_{k+1},b,r_{k+2}) = \mu_M(q_1,x,Z_0\mmid q,au) \delta_M(q,\sigma,a\mmid p,ba)\bar{p}_K(\boldvec{r}_{+},bau)$.

We next consider the term ($c$), which incurs a pop operation given by Line 5. Let us  calculate the  value $\gamma^{(c)}_x = \sum_{r_{k+2}\in Q_K} \mu_P(H_x\:\|\; (q,r_0), C(\boldvec{r}_{+},abw)) \delta'((q,r_0),\sigma,[r_{k+1},a,r_{k+2}]\mmid p,\lambda)$, where
$\boldvec{r}_{+}$ indicates  $(r_0,r_1,\ldots,r_{k+1},r_{k+2})$.   As mentioned in (3), the summation over all $r_{k+1}$ is needed because the information on $r_{k+2}$ is lost after $N$ pops  $[r_{k+1},a,r_{k+2}]$.
By the definition of $\delta'$, $\gamma^{(c)}_x$ equals $\sum_{r_{k+2}\in Q_K} \mu_P(H_x\:\|\: (q,r_0), C(\boldvec{r}_{+},abw)) \delta(q,\sigma,a\mmid p,\lambda)$. By induction hypothesis follows $\mu_P(H_x\:\|\; (q,r_0), C(\boldvec{r}_{+},abw)) = \mu_M(q_1,x,Z_0\:\|\: q,abw) \bar{p}_K(\boldvec{r}_{+},abw)$. Since $\bar{p}_K(\boldvec{r}_{+},abw) = \bar{p}_{K}(\boldvec{r},bw) \delta_K(r_{k+1},a,r_{k+2})$,  $\gamma^{(c)}_x$ coincides with $\mu_{M}(q_1,x,Z_0\:\|\; q,abw) \delta(q,\sigma,a\mmid p,\lambda) \bar{p}_{K}(\boldvec{r},bw) \cdot \sum_{r_{k+2}\in Q_K} \delta_K(r_{k+1},a,r_{k+2})$. This term further equals $\mu_{M}(q_1,x,Z_0\:\|\; q,abw) \delta(q,\sigma,a\mmid p,\lambda) \bar{p}_{K}(\boldvec{r},bw)$ because $\sum_{r_{k+2}\in Q_K} \delta_K(r_{k+1},a,r_{k+2}) = \delta_K[r_{k+1},a]=1$.

Similarly, we can handle the other terms ($a$) and ($d$). We then combine the terms ($a$)--($d$) to conclude that $\mu_{P}(H_x\:\|\: (p,r_0),C(\boldvec{r},w))$ matches $\mu_M(q_1,x,Z_0\:\|\: p,w)\bar{p}_K(\boldvec{r},w)$. A similar argument works in the case of $\delta_M(q_0,x_{(1)},Z_0\:\|\: q_1,bZ_0)$, however, we need to pay a special attention to the following situation: a stack content of $N$ becomes $[r_1,a,r_2]Z_0$ during a computation, $N$ then pops $[r_1,a,r_2]$ (to empty the stack), and the next move is made by Line 1 or Line 2 again.

Overall, the acceptance (resp., rejection) probability of $N$ equals the sum of the probability of $M$ reaching a configuration of the form $(q,|x\dollar|,w)$ multiplied by the probability of $K$'s processing $w$ from $q'_0$ (resp., $\bar{q}'_0$).
This completes the proof of the lemma.
\end{proofof}

\subsection{Removing the Right Endmarker}\label{sec:remove-endmarker}

Throughout Sections \ref{sec:remove-left}--\ref{sec:remove-empty}, we have constructed 1ppda's that neither have the left endmarker nor make any $\lambda$-move after reading the right endmarker $\dollar$. Let $M$ denote  any of those 1ppda's equipped with only the right endmarker $\dollar$.
In this final stage of our construction toward Proposition \ref{deleting-endmarker}, we further remove $\dollar$ from $M$.
Notice that, without the endmarker $\dollar$, the 1ppda's cannot in general empty their stacks before halting.

\begin{lemma}\label{no-all-endmarker}
Let $M$ be any $n$-state 1ppda in an almost ideal shape with stack alphabet size $m$ but using no left endmarker, and assume that there is no $\lambda$-move after reading $\dollar$. Moreover, we assume that $M$ does not alter the stack content at  reading $\dollar$. There exists an error-equivalent 1ppda $N$ with no endmarker using $2n+1$ states, stack alphabet size $(n+1)m$, and push size $3$.
\end{lemma}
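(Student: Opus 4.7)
My plan is to construct a no-endmarker 1ppda $N$ that simulates $M$ step by step while continuously maintaining in its inner state a tentative halting commitment $h\in\{acc,rej\}$. I will take the state set $Q'=\{q'_0\}\cup(Q\times\{acc,rej\})$, which has exactly $2n+1$ elements, with $Q'_{acc}=\{(q,acc):q\in Q\}$ and $Q'_{rej}=\{q'_0\}\cup\{(q,rej):q\in Q\}$. The stack alphabet will be $\Gamma'=\Gamma\cup\{[q,a]:q\in Q,\,a\in\Gamma\}$, of size $(n+1)m$; the undecorated symbols in $\Gamma$ initialise the stack, while a decoration $[q,a]$ is pushed at push-time to record $M$'s state $q$ so that the matching pop can recover the information needed to update $h$ correctly for the newly-revealed top.

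First, I install an initial $\lambda$-transition $\delta'(q'_0,\lambda,Z_0\mmid (q_0,h),Z_0)=\sum_{p\in Q_h}\delta(q_0,\dollar,Z_0\mmid p,Z_0)$, which handles the empty input (whereupon $N$ halts in $(q_0,h)$ with probability matching $M$'s $\dollar$-outcome at $(q_0,Z_0)$) and seeds the commitment for non-empty inputs. Next, for each non-$\lambda$ transition $\delta(q,\sigma,a\mmid q',w)$ and each $\lambda$-pop $\delta(q,\lambda,a\mmid q',\lambda)$ of $M$, I define the corresponding transition of $N$ with probability equal to the product of $M$'s transition probability and the one-step $\dollar$-commit factor $\sum_{p\in Q_{h'}}\delta(q',\dollar,a'\mmid p,a')$ applied to the post-move top $a'$. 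For pushes and stationary moves, $a'$ is evident from the transition rule, so the commit factor is read off directly; for pops, I use the stack decoration to identify $a'$ from the symbol revealed below the popped one, and to align the commit with $M$'s post-pop state. Since the hypothesis ``no $\lambda$-move after $\dollar$'' makes $M$'s $\dollar$-read a single stationary step, the total commit $\sum_{h'}\sum_{p\in Q_{h'}}\delta(q',\dollar,a'\mmid p,a')$ equals $\delta[q',\dollar,a']=1-\delta[q',\lambda,a']$, which dovetails with the $\lambda$-pop branch of $\delta'$ so that the probability requirement $\delta'[(q,h),\sigma,s]+\delta'[(q,h),\lambda,s]=1$ is preserved. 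A short induction on input length then shows that, after $N$ processes the whole input and completes its $\lambda$-moves (which mirror $M$'s $\lambda$-pops), the probability of $N$ being in $(q,h)$ equals $\Pr[M\text{ at configuration }(q,\alpha)]\cdot\sum_{p\in Q_h}\delta(q,\dollar,\mathrm{top}(\alpha)\mmid p,\mathrm{top}(\alpha))$; summing over $q$ yields $p_{N,acc}(x)=p_{M,acc}(x)$, preserving the error $\varepsilon(x)$ exactly.

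The principal obstacle is the pop case: a 1ppda transition rule can read only the current stack top, so when $N$ simulates a pop the newly-revealed top is not directly accessible inside the same transition, yet we need its value to compute the updated $\dollar$-commit. Resolving this via the decoration $[q,a]$---so that each push-time record supplies the information needed to reconstitute the commit at the corresponding pop, without violating the probability requirement or misaligning with $M$'s actual post-pop state---will be the most delicate step of the construction. In particular, I will need to track how the commit information propagates across chains of ideal-shape pops (a non-$\lambda$ pop followed by a cascade of $\lambda$-pops) between consecutive non-pop moves, and verify that the combined commit factors continue to sum correctly so that $N$ remains a valid no-endmarker 1ppda.
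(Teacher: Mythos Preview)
Your high-level strategy---simulate $M$ while continuously sampling a tentative $\dollar$-decision, then marginalise it out and resample at every subsequent step---is exactly the paper's approach. The paper stores the full target state $r$ of $M$'s $\dollar$-transition in the bracketed top symbol $[r,a]$ rather than just $h\in\{acc,rej\}$ in the inner state, but that is an inessential bookkeeping choice; the invariant the paper calls Condition~(*) is precisely the one your induction aims at, and for pushes and stationary moves your construction and the paper's coincide.

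The gap is the pop case, which you correctly flag but do not resolve. Recording $M$'s \emph{push-time state} in the decoration $[q,a]$ does not recover what you need: after a pop the fresh commit factor $\sum_{p\in Q_{h'}}\delta(q''',\dollar,a'\mmid p,a')$ depends on the newly revealed top $a'$, and a single transition can see only the symbol being popped, not the one underneath. Knowing the state $q$ in which the popped symbol was once pushed does not determine $a'$ (many different $a'$ can sit below a symbol pushed from state $q$), so your decoration carries the wrong information. The paper's fix is structural: only the \emph{top} symbol is bracketed, the rest of the stack is plain $\Gamma$; after popping $[r,a]$ the machine enters an auxiliary state $p^{(+)}$ and makes one extra $\lambda$-step that reads the freshly exposed plain symbol $b$ and rewrites it as $[s,b]$ with probability $\delta(p,\dollar,b\mmid s,b)$. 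Those $n$ auxiliary states $p^{(+)}$ are what consume the ``other half'' of the $2n{+}1$ budget. Your partition $\{q'_0\}\cup(Q\times\{acc,rej\})$ leaves no uncommitted intermediate states, so if the input ends immediately after a pop (or mid-cascade of $\lambda$-pops) your machine is stranded with a stale $h$. A minor side point: under the paper's conventions the first move on a nonempty input must read the leftmost symbol, so your initial $\lambda$-transition out of $q'_0$ should be folded into the first genuine input-reading step.
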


We briefly remark on our proof strategy for Lemma \ref{no-all-endmarker}.
When $N$ is reading an input symbol, we first simulate the corresponding move of $M$. We then predict  an encounter of $\dollar$ at the next move and, based on that prediction, we  simulate the last transition of $M$ made at reading $\dollar$. In the next step, if the next input symbol is indeed $\dollar$ as we have predicted, then we immediately decide either ``accept'' or ``reject'' according to the current inner state. Otherwise, since our prediction is incorrect, we need to probabilistically ``nullify'' the simulation of the last transition of $M$ on $\dollar$ and then probabilistically simulate the next move of $M$. We continue this simulation until $M$ actually encounters $\dollar$.
As described above, if we try to probabilistically ``rewind'' the simulation of the last transition of $M$ on $\dollar$ by taking a strategy of nondeterministically choosing the previous inner state and reversing the last transition of $M$, then the resulting error probability may differ from the original one. Thus, we cannot take this simple strategy in the following proof.


\vs{-2}
\begin{proofof}{Lemma \ref{no-all-endmarker}}
Let $M=(Q,\Sigma,\{\dollar\}, \Gamma,\Theta_{\Gamma}, \delta,q_0, Z_0,Q_{acc},Q_{rej})$ denote any 1ppda in an almost ideal shape with no left endmarker. We assume that, for all inputs, $M$ makes no $\lambda$-move right after reading $\dollar$ and, more importantly, $M$ does not alter any stack content at reading $\dollar$.
Note that, when reading $\dollar$ at the final move, $M$ must make the final decision of whether it accepts or rejects an input without changing the stack content.
This  last transition must be of the form $\delta(p,\dollar,a \mmid r,a)$ for a certain triplet $(p,r,a)$ and we call this value $\delta(p,\dollar,a \mmid r,a)$ the \emph{decision probability} at $(p,r,a)$.
Notice that $\sum_{r\in Q}\delta(p,\dollar,a \mmid r,a)=1$ holds because $M$ does not make any $\lambda$-move in inner state $p$ reading $\dollar$.
To eliminate this final move, we must generate the transition probability of this particular move \emph{without reading $\dollar$}.
By the almost ideal shape conditions, we also assume that $\Gamma$ contains a special set $\Gamma_{Z_0}$ whose elements work as if they are bottom markers.  
Our goal is to define a new no-endmarker machine $N=(Q',\Sigma,\Gamma',\Theta_{\Gamma'}, \delta',q'_0,Z_0, Q'_{acc},Q'_{rej})$ that simulates $M$ with the same error probability. Hereafter, we explain how to construct the desired machine $N$.
Concerning the key elements $Q'_{acc}$ and $Q'_{rej}$ of $N$, we set $Q'_{acc}= Q_{acc}\cup\{q_0\}$ and $Q'_{rej}=Q_{rej}$ if $\lambda\in L(M)$, and $Q'_{acc}= Q_{acc}$ and $Q'_{rej}=Q_{rej}\cup\{q_0\}$ otherwise.
We further define $Q'= Q\cup \{p^{(+)}\mid p\in Q\}$ and $\Gamma' = \Gamma\cup  [Q,\Gamma]$.
Note that $|Q'|=2n+1$ and $|\Gamma'|=m+nm = m(n+1)$.

\s

(1)
To simulate the first step of $M$, predicting that the next step is the final step, we produce the decision probability associated with this final step. Since this prediction may be false, we need to remember the predicted final move using the stack.
At every step, we cancel out the decision probability taken at the previous step by simply ``ignoring'' the predicated final step (i.e., summing up all the possible final steps) to nullify the previously-taken final step.  After nullifying the supposed final step, we apply a new probabilistic transition and produce the next decision probability. To recover the previous decision probability, since the computation is not in general reversible, we need to remember a triplet $(p,r,a)$. This process is formally expressed as follows. Let $\sigma\in\Sigma$ and $p,q\in Q$.

\begin{enumerate}\vs{-1}
  \setlength{\topsep}{-2mm}%
  \setlength{\itemsep}{0mm}%
  \setlength{\parskip}{0cm}%

\item[1.] $\delta'(q_0,\sigma,Z_0 \mmid p,[r,\bar{Z}_0]Z_0) = \delta(q_0,\sigma,\bar{Z}_0Z_0 \mmid p,\bar{Z}_0) \delta(p,\dollar,\bar{Z}_0 \mmid r,Z_0)$ if $\bar{Z}_0\in \Gamma_{Z_0}$.

\item[2.] $\delta'(q_0,\sigma,Z_0 \mmid p,[r,a]\bar{Z}_0Z_0) = \delta(q_0,\sigma,Z_0 \mmid p,a\bar{Z}_0Z_0) \delta(p,\dollar,a \mmid r,a)$ if  $a\in\Gamma^{(-)}-\Gamma_{Z_0}$.
\end{enumerate}\vs{-1}

For an inner state $p$ that has entered from inner state $q$ in a single step, $N$'s topmost stack symbol $[r,a]$ indicates that the last applied transition is of the form $\delta(p,\dollar,a \mmid r,a)$.
For instance, from Line 1, it follows that $\sum_{r\in Q}\delta'(q_0,\sigma,Z_0 \mmid p,[r,\bar{Z}_0]Z_0) = \sum_{r\in Q} \delta(q_0,\sigma,Z_0\mmid p,\bar{Z}_0Z_0)\delta(p,\dollar,\bar{Z}_0\mmid r,\bar{Z}_0) = \delta(q_0,\sigma,Z_0\mmid p,\bar{Z}_0Z_0)\cdot \sum_{r\in Q}\delta(p,\dollar,\bar{Z}_0\mmid r,\bar{Z}_0) = \delta(q_0,\sigma,Z_0\mmid p,\bar{Z}_0Z_0)$. Therefore, we take all inner states $r$ of $N$ into consideration, $N$ can cancel out the probability $\delta(p,\dollar,Z_0\mmid r,\bar{Z}_0Z_0)$ from its computation. A similar result follows from Line 2. We generalize this argument as follows.

Let us recall the notation $\mu_M(q,x,a\:\|\: p,w)$ from the proof of Lemma \ref{final-lambda-move}. Similarly, we can define $\mu_{N}$.
We want to enforce the following condition between $\mu_{M}$ and $\mu_{N}$.
 \begin{quote}
Condition (*): $\mu_{N}(q_0,x,Z_0\:\|\: p,[r,a]w) = \mu_{M}(q_0,x,Z_0\:\|\: p,aw) \delta(p,\dollar,a \mmid r,a)$ holds for an appropriately chosen quintuple $(p,x,r,a,w)$ with $x\neq\lambda$.
\end{quote}
From Condition (*), it follows that $\mu_{M}(q_0,x\dollar,Z_0\:\|\:  p,aw) = \mu_N(q_0,x,Z_0\:\|\: p,aw) \cdot \sum_{r\in Q} \delta(p,\dollar,a\mmid r,a) = \sum_{r\in Q}\mu_{N}(q_0,x,Z_0\:\|\: p,[r,a]w)$ since $\sum_{r\in Q} \delta(p,\dollar,a\mmid r,a)=1$.
This equation makes it possible to ``nullify'' the previously taken probability $\delta(p,\dollar,a\mmid r,a)$ out of the computation of $N$ by simply  ``ignoring'' the choice of $r$ made at the previous step (or equivalently, taking all choices at once).

Assuming that Condition (*) is true, when $N$ reads off the entire input $x$, $N$ accepts and rejects $x$ with the same error probability as $M$ does.
More precisely, if
$M$ halts with probability $\gamma$ reaching state $r$ with stack content $aw$, then $\gamma$ equals $\mu_{M}(q_0,x,Z_0\:\|\:p,aw)\delta(p,\dollar,a \mmid r,a)$, and thus
$N$ halts in inner state $p$ with probability $\gamma$.

In Lines 3--6 shown below, we follow the convention that, when either $\delta(q,\dollar,a \mmid r,a)=0$ or $\delta(q,\dollar,Z_0 \mmid r,Z_0)=0$, the left sides of the corresponding equations must take value $0$. Let $\sigma\in\Sigma_{\lambda}$, $a,b\in\Gamma$,
$p,q\in Q$, and $w\in\Gamma^*$.

\s

(2)
If $M$ pushes $b$ onto topmost symbol $a$ of the stack, then we replace $[r,a]$ by $[s,b]a$ (Line 3). In contrast, if $M$ preserves $a$, then $N$ keeps $[r,a]$ and follows $M$'s move (Line 4).

\begin{enumerate}\vs{-1}
  \setlength{\topsep}{-2mm}%
  \setlength{\itemsep}{0mm}%
  \setlength{\parskip}{0cm}%

\item[3.] $\delta'(q,\sigma,[r,a] \mmid p,[s,b]a) = \delta(q,\sigma,a \mmid p,ba) \delta(p,\dollar,b \mmid s,b)$ if $b\in\Gamma^{(-)}-\{\bar{Z}_0\}$ and $\sigma\neq\lambda$.

\item[4.] $\delta'(q,\sigma,[r,a] \mmid p,[r,a]) =  \delta(q,\sigma,a \mmid p,a)$ if $a\in\Gamma^{(-)}$ and $\sigma\neq\lambda$.
\end{enumerate}\vs{-1}

\s

(3) When $M$ pops $a$, $N$ cancels out the previous decision probability, pops $[r,a]$ instead, replaces any new topmost symbol, say, $b$ by $[s,b]$, and multiplies the corresponding decision probability (Lines 5--6). Let $\sigma\in\Sigma_{\lambda}$.

\begin{enumerate}\vs{-1}
  \setlength{\topsep}{-2mm}%
  \setlength{\itemsep}{0mm}%
  \setlength{\parskip}{0cm}%

\item[5.] $\delta'(q,\sigma,[r,a] \mmid p^{(+)},\lambda) = \delta(q,\sigma,a \mmid p,\lambda)$ if $a\in\Gamma^{(-)}-\{\bar{Z}_0\}$.

\item[6.] $\delta'(p^{(+)},\lambda,b \mmid p,[s,b]) = \delta(p,\dollar,b \mmid s,b)$.
\end{enumerate}\vs{-1}

\n For all other cases of $(q,\sigma,d,p,e)$ not listed above, we automatically set $\delta'(q,\sigma,d \mmid p,e)=0$.

It is important to note that  $\delta'(q,\sigma,[r,\bar{Z}_0] \mmid p,\lambda)=0$ holds for any triplet $(q,\sigma,p)$ because $\bar{Z}_0$ plays as a bottom marker.
Moreover, Condition (*) implies that $\delta'$ correctly simulates $\delta$ until $\dollar$ is read. From this fact, we instantly obtain the lemma.

For the completion of the proof, nevertheless, we still need to verify Condition (*). In an argument similar to the one in the proof of Lemma \ref{final-lambda-move}, we wish to prove Condition (*) by induction on the length of nonempty input $x$.
When $|x|=1$, Lines 1--2 and 5--6 with $\sigma=\lambda$ clearly imply Condition (*). Next, let us consider the case where our input is of the form $x\sigma$ for
$\sigma\in\Sigma$.
In this case, the value $\mu_{M}(q_0,x\sigma,Z_0\:\|\:p,bw)$ with $b\in\Gamma^{(-)}$ is the sum of the following four terms: over all $q\in Q$,
(i) $\mu_{M}(q_0,x,Z_0\:\|\:q,bw)\delta(q,\sigma,b \mmid p,b)$,
(ii)  $\mu_{M}(q_0,x,Z_0\|q,au)\delta(q,\sigma,a \mmid p,bau)$ with $w=au$, (iii)  $\mu_{M}(q_0,x,Z_0\:\|\:q,abw)\delta(q,\sigma,a \mmid p,\lambda)$, and (iv)  $\mu_{M}(q_0,x\sigma,Z_0\:\|\:q,abw)\delta(q,\lambda,a \mmid p,\lambda)$.

To calculate the value $\mu_{N}(q_0,x\sigma,Z_0\:\|\:p,[s,b]w)$, we first focus on the term (ii) given by Line 3. We wish to compute the value $\gamma^{(b)}_x = \mu_{N}(q_0,x,Z_0\:\|\:q,[r,a]u)\delta'(q,\sigma,[r,a] \mmid p,[s,b]a)$ with $w=au$.
By induction hypothesis, we obtain
$\mu_{N}(q_0,x,Z_0\:\|\:p,[r,a]u) = \mu_{M}(a_0,x,Z_0\:\|\:q,au)\delta(p,\dollar,a \mmid r,a)$.
From this, it follows that $\gamma^{(b)}_x$ equals $\sum_{r\in Q} \mu_{M}(q_0,x,Z_0\:\|\:q,au) \delta(q,\dollar,a \mmid r,a) \delta(q,\sigma,p,ba) \delta(p,\dollar,b \mmid s,b)$, which further coincides with
$\mu_{M}(q_0,x,Z_0\:\|\:q,au)  \delta(q,\sigma,a\mmid p,ba) \delta(p,\dollar,b \mmid s,b) \cdot \sum_{r\in Q}\delta(q,\dollar,a \mmid r,a)$.
We thus conclude that, since $\sum_{r\in Q}\delta(q,\dollar,a\mmid r,a)=1$, $\gamma^{(b)}_x$ equals $\mu_{M}(q_0,x,Z_0\:\|\:q,au)  \delta(q,\sigma,a\mmid p,ba) \delta(p,\dollar,b \mmid s,b)$.

Next, we consider the term (iii) given by Lines 5--6. In this case, we target the value $\gamma^{(c)}_x = \sum_{r\in Q}\mu_N(q_0,x,Z_0\:\|\: q,[r,a]bw) \delta'(q,\sigma,[r,a]\mmid p^{(+)},\lambda) \delta'(p^{(+)},\lambda,b\mmid p,[s,b])$.
Note that the summation over all $r$ is necessary because the information on $r$ is completely lost after popping $[r,a]$. The value $\gamma^{(c)}_x$ is calculated to be $\sum_{r\in Q} \mu_N(q_0,x,Z_0\:\|\: q,abw)\delta(q,\dollar,a\mmid r,a) \delta(q,\sigma,a\mmid p,\lambda) \delta(p,\dollar,b\mmid s,b)$, which further equals  $\mu_M(q_0,x,Z_0\:\|\: q,abw)\delta(q,\sigma,a\mmid p,\lambda) \delta(p,\dollar,b\mmid s,b)\cdot \sum_{r\in Q}\delta(q,\dollar,a\mmid r,a)$. Since $\sum_{r\in Q}\delta(q,\dollar,a\mmid r,a)=\delta[q,\dollar,a]=1$, $\gamma^{(c)}_x$ becomes $\mu_M(q_0,x,Z_0\:\|\: q,abw)\delta(q,\sigma,a\mmid p,\lambda) \delta(p,\dollar,b\mmid s,b)$.

The other terms (i) and (iv) are, in essence, handled similarly. By combining all the terms (i)--(iv), we then obtain $\mu_N(q_0,x,Z_0\:\|\: p,[s,b]w) = \mu_N(q_0,x,Z_0\:\|\: p,bw) \delta(p,\dollar,b\mmid s,b)$.
Therefore, Condition (*) is true, as requested.
\end{proofof}

\subsection{Proof of Proposition \ref{deleting-endmarker}}\label{sec:final-proof}

Let us combine all transformations constructed in Sections \ref{sec:remove-left}--\ref{sec:remove-endmarker} together with the help of Section \ref{sec:modifications} to form the complete proof of Proposition \ref{deleting-endmarker}.
As the final process, in this subsection, we intend to complete the proof of the proposition.

\begin{proofof}{Proposition \ref{deleting-endmarker}}
The proof of the proposition begins with taking an arbitrary 1ppda $M = (Q,\Sigma,\{\cent,\dollar\}, \Gamma, \Theta_{\Gamma}, \delta, q_0,Z_0, Q_{acc},Q_{rej})$  with the two endmarkers. Let  $|Q|=n$, $|\Gamma| =m$, and $M$'s push size be $e$.

We first apply Lemma \ref{postpone-halting} and then obtain an endmarker  1ppda $M_1$ with $Q_1$, $\Gamma_1$, and $e_1$ satisfying that $|Q_1|=2(n+1)$ and $|\Gamma_1|=m$, and $e_1=e$. Lemma \ref{transition-simple} further helps us construct from $M_1$ another endmarker 1ppda $M_2$ in an ideal shape with $Q_2$, $\Gamma_2$, and $e_2=2$ satisfying that $|Q_2|\leq 192e(n+1)^2m^2(2m)^{4e(n+1)m}\leq 762en^2m^2(2m)^{8enm}$ and $|\Gamma_2|\leq 8e(n+1)m(2m)^{4e(n+1)m}\leq 16enm(2m)^{8enm}$ since $n+1\leq 2n$.

Sequentially, we apply Lemmas \ref{remove-left}--\ref{no-all-endmarker}.
By Lemma \ref{remove-left}, we obtain a no-left-endmarker 1ppda $M_3$ with $Q_3$, $\Gamma_3$, and $e_3$ satisfying that $|Q_3|\leq 2|Q_2|+1 \leq 4|Q_2|\leq  3048en^2m^2(2m)^{8enm}$, $|\Gamma_3|\leq (|\Sigma|+1)|Q_3| \leq 2|\Sigma||Q_3| \leq 32|\Sigma|enm(2m)^{8enm}$, and $e_3=2$.
For convenience, we set $s=3048en^2m^2(2m)^{8enm}$ and $t=32enm(2m)^{8enm}$.
Notice that $st = 97536e^2n^3m^3(2m)^{16enm}$.  Lemmas \ref{final-lambda-move}--\ref{Nasu-Honda} introduce another no-left-endmarker 1ppda in an almost ideal shape with $Q_3$, $\Gamma_3$, and $e_3$ but making no final $\lambda$-moves. These parameters satisfy that $|Q_3|\leq 16s^2t^22^{2st}$, $|\Gamma_3|\leq t^22^{2st}$, and $e_3=3$.

Finally, by Lemma \ref{no-all-endmarker}, we obtain a no-endmarker 1ppda $M_4$ with $Q_4$, $\Gamma_4$, and $e_4$ satisfying that $|Q_4|\leq  64s^2t^22^{2st}$,  $|\Gamma_4|\leq 32s^2t^42^{2st}\leq 32s^4t^42^{4st}$, and $e_4=3$. If we write $d$ for $97536e^2n^3m^3(2m)^{16enm}$, then $|Q_4|\leq 64|\Sigma|^2d^22^{2d}$ and $|\Gamma_4|\leq 32|\Sigma|^4d^42^{4d}$ follow.

This completes the proof of the proposition.
\end{proofof}

\section{Quick Conclusion and Open Problems}\label{sec:open-problem}

Two different formulations have been used in many textbooks as well as scientific papers to describe various models of pushdown automata in the past: pushdown automata \emph{with two endmarkers} and those \emph{with no endmarkers}.
These two formulations have been well-known to be ``equivalent'' in recognition power for the deterministic and the nondeterministic models of one-way pushdown automata.
Concerning the probabilistic model, the past literature \cite{MO98,HS10,Yam17,Yam19} used both formulations to obtain useful structural properties of probabilistic pushdown automata but no proof of the  ``equivalence'' between the two formulations (stated as the \emph{no endmarker theorem}) has been published so far. To fill this void, this paper has given the missing formal proof of converting between the two formalisms and, more importantly, it has provided the first explicit upper bound on the stack-state complexity of transforming one formalism to another for probabilistic pushdown automata.

Among all questions that we have left unsettled in this paper, we wish to list three interesting and important questions concerning the no endmarker theorem (Theorem \ref{no-endmarker}).

\begin{enumerate}\vs{-1}
  \setlength{\topsep}{-2mm}%
  \setlength{\itemsep}{0mm}%
  \setlength{\parskip}{0cm}%

\item It is of importance to determine the stack-state complexity of transforming endmarker 1ppda's to error-equivalent no-endmarker 1ppda's. However, this requirement of ``error-equivalence'' seems to be too restrictive. If we weaken the requirement by allowing small additional errors in the process of transformations, can we significantly reduce the stack-state complexity of transformation?

\item We may ask whether there exist much more concise no-endmarker 1ppda's than what we have constructed in Section \ref{sec:deleting-endmarker}. More specifically, for example, is there any no-endmarker 1ppda that has only polynomially many states and exponential stack alphabet size?

\item As we have noted in Section \ref{sec:intro-no-endmarker}, since the deterministic and the nondeterministic models of pushdown automata can be viewed as special cases of the probabilistic model, Proposition  \ref{deleting-endmarker} also provides an upper bound of the stack-state complexity of transformation for the deterministic and nondeterministic models. However, it seems  likely that a much better upper bound can be achieved on the stack-state complexity of transformation by other proof arguments (based on, e.g., grammars).
\end{enumerate}


\let\oldbibliography\thebibliography
\renewcommand{\thebibliography}[1]{%
  \oldbibliography{#1}%
  \setlength{\itemsep}{-2pt}%
}
\bibliographystyle{alpha}

\section*{Appendix}

In this Appendix, we provide the complete proof of Lemma \ref{transition-simple}.

\vs{-2}
\begin{proofof}{Lemma \ref{transition-simple}}
Let $M = (Q,\Sigma,\{\cent,\dollar\},\Gamma, \Theta_{\Gamma},\delta,q_0, Z_0,Q_{acc},Q_{rej})$ be any endmarker 1ppda and let  $|Q|=n$, $|\Gamma| =m$, and $M$'s push size to be $e$. We first apply Lemma \ref{postpone-halting} to obtain another 1ppda that halts only on or after reading $\dollar$.
Recalling the notation used in the proof of Lemma \ref{postpone-halting},
we notice that the obtained machine in the proof uses only a distinguished set $Q^{(\dollar)}$ of inner states after reading $\dollar$ and that, when it halts, its stack becomes empty.
From the proof, we also recall the two halting states $q'_{acc}$ and $q'_{rej}$. Here, we set $Q_{acc}=\{q'_{acc}\}$ and $Q_{rej}=\{q'_{rej}\}$.
For readability, we call the obtained 1ppda also by $M$ and assume that $|Q| = 2n+2$ and $|\Gamma| =m$. Let $\Gamma_{\lambda} = \Gamma\cup\{\lambda\}$.
Starting with this newly obtained machine $M$, we perform  a series of conversions to satisfy the desired ideal-shape conditions.
In the process of such conversions, we conveniently denote by  $M_i=(Q_i,\Sigma,\{\cent,\dollar\}, \Gamma_i,\Theta_{\Gamma_i}, \delta_i,q_{i,0},Z_{i,0}, Q_{i,acc},Q_{i,rej})$ the 1ppda with push size $e_i$ obtained at each stage $i\in[4]$.

To make our description simpler, we further introduce the succinct notation $\delta^*[q,\lambda,a \mmid p,\lambda,w]$ (where  $q\in Q-Q_{halt}$, $p\in Q$,  $a\in\Gamma$, and $w\in\Gamma^*$) for the total probability of the event that, starting in inner state $q$ with stack content $az$ (for an ``arbitrary'' string $z$ satisfying $az\in (\Gamma^{(-)})^*Z_0$), $M$ makes a (possible) series of consecutive $\lambda$-moves without accessing any symbol in $z$, and eventually reaches inner state $p$ with stack content $wz$.
This notation is formally defined as $\delta^*[q,\lambda,a \mmid p,\lambda,w] = \sum_{t\geq 0}\delta^{(t)}[q,\lambda,a \mmid p,\lambda,w]$, where $\delta^{(t)}$ is introduced in the following inductive way. Let $q\in Q-Q_{halt}$, $a\in\Gamma$, and  $w\in\Gamma^*$.

\renewcommand{\labelitemi}{$\circ$}
\begin{enumerate}\vs{-1}
  \setlength{\topsep}{-2mm}%
  \setlength{\itemsep}{0mm}%
  \setlength{\parskip}{0cm}%

\item[(a)] $\delta^{(0)}[q,\lambda,a \mmid q,\lambda,a]=1$ and $\delta^{(0)}[q,\lambda,a \mmid p,\lambda,w]=0$ for all pairs  $(p,w)\neq (q,a)$.

\item[(b)] $\delta^{(t+1)}[q,\lambda,a \mmid p,\lambda,\lambda] =0$ and
    $\delta^{(t+1)}[q,\lambda,a \mmid p,\lambda,w] = \sum_{r,b,u,u'} \delta^{(t)}[q,\lambda,a \mmid r,\lambda,bu]\delta(r,\lambda,b \mmid p,u')$ if $w\neq\lambda$, where $b\in\Gamma$, $r\in Q$, and $w=u'u\in(\Gamma^{(-)})^*Z_0$.
\end{enumerate}\vs{-1}

\n Notice that, in particular, $\delta^*[q,\lambda,a\mmid p,\lambda,\lambda]=0$ follows.

By our definition of 1ppda's, all computation paths must terminate eventually on every input $x$. Thus, any series of consecutive $\lambda$-moves makes the stack height increase by no more than $e|Q||\Gamma|$ because, otherwise, the series goes into a loop and thus produces an infinite computation path. It thus follows that (*) $\delta^*[q,\lambda,a \mmid p,\lambda,w]\neq0$ implies $|w|\leq e|Q||\Gamma| =e(n+2)m\leq 2enm$.


\s
(1)
We first convert the original 1ppda $M$ to another error-equivalent 1ppda, say,  $M_1$ whose $\lambda$-moves are restricted only to pop operations; namely, $\delta_1(q,\lambda,a \mmid p,w)=0$ for all elements $p,q\in Q$, $a\in\Gamma$, and $w\in\Gamma^+$.
For this purpose, we need to remove any $\lambda$-move by which $M$ changes topmost stack symbol $a$ to a certain nonempty string $w$. We also need to remove all transitions of the form $\delta(q,\lambda,Z_0 \mmid p,Z_0)$, which violates the requirement of $M_1$ with respect to pop operations.
Following the proof of Lemma \ref{postpone-halting}, we assume that, once $M$ reads $\dollar$, it makes only $\lambda$-moves with inner states from a special set $Q^{(\dollar)}$ and eventually empties the stack.

Formally, we define $Q_1= Q \cup Q^{(\dollar)}$, $Q_{1,acc} = Q_{acc}\cup \{q^{(\dollar)}\mid q\in Q_{acc}\}$, $Q_{1,rej}= Q_{rej}\cup \{q^{(\dollar)}\mid q\in Q_{rej}\}$, and $\Gamma_1 = \Gamma^{(-)}\cup \{Z_{0},Z_{1,0}\}$, where $Z_{1,0}$ is a new bottom marker not in $\Gamma$.
Moreover, we set $q_{1,0}$ to be $q_0$.
It then follows that $|Q_1|=2|Q|=4(n+1)$ and $|\Gamma_1|=m+1$. By the aforementioned statement (*), the push size $e_1$ of $M_1$ is at most $2enm$. The probabilistic transition function $\delta_1$ is constructed formally in the following substages (i)--(iii).
For any value of $\delta_1$ not listed below, it is defined to be $0$.

\s

(i) Recall that, in this paper, the first step of any 1ppda must be a non-$\lambda$-move. At the first step, $M_1$ changes $Z_{1,0}$ to $uZ_0Z_{1,0}$ (for an appropriate string $u\in(\Gamma^{(-)})^*$) so that, after this step,  $M_1$ simulates $M$ using $Z_0$ as an ``ordinary'' stack symbol but with no access to $Z_{1,0}$ (except for the final step of $M_1$). This process is expressed as follows.

\begin{enumerate}\vs{-1}
  \setlength{\topsep}{-2mm}%
  \setlength{\itemsep}{0mm}%
  \setlength{\parskip}{0cm}%

\item[1.] $\delta_1(q_0,\cent,Z_{1,0} \mmid p,uZ_0Z_{1,0}) =   \delta(q_0,\cent,Z_0 \mmid p,uZ_0)$ for $u$ satisfying $uZ_0\in\Theta_{\Gamma}$.
\end{enumerate}\vs{-1}

(ii) Assume that $az$ is $M$'s stack content and that $M$ makes a (possible) series of consecutive $\lambda$-moves by which $M$ never accesses any stack symbol in $z$.
Consider the case where $M$ changes $a$ to $bu$ by the end of this series and, at reading symbol $\sigma\in\Sigma\cup\{\dollar\}$, $M$ replaces $b$ with $u'$ at the next step in order to produce $w$ ($=u'u$). In this case, we merge this entire process into one single non-$\lambda$-move.

\begin{enumerate}\vs{-1}
  \setlength{\topsep}{-2mm}%
  \setlength{\itemsep}{0mm}%
  \setlength{\parskip}{0cm}%

\item[2.] $\delta_1(q,\sigma,a \mmid p,w) =  \sum_{r,b,u,u'}  \delta^*[q,\lambda,a \mmid r,\lambda,bu]  \delta(r,\sigma,b \mmid p,u')$ if $\sigma\in\Sigma$, $q\notin Q^{(\dollar)}$, and $a\in\Gamma$, where $w=u'u\in \Gamma^*$ and $b\in\Gamma$.

\item[3.] $\delta_1(q,\dollar,a \mmid p^{(\dollar)},w) = \sum_{r,b,u,u'}  \delta^*[q,\lambda,a \mmid r,\lambda,bu]  \delta(r,\dollar,b \mmid p,u')$ if $a\in\Gamma$, where $w=u'u\in \Gamma^*$ and $b\in\Gamma$.
\end{enumerate}\vs{-1}

\n Since we obtain $\delta_1(q,\sigma,a\mmid p,\lambda) = \sum_{r,b}  \delta^*[q,\lambda,a \mmid r,\lambda,b]  \delta(r,\dollar,b \mmid p,\lambda)$, non-$\lambda$-moves of $M_1$ may contain pop operations. Lines 1--3 indicate that the new push size $e_1$ is exactly $e+1$, and thus $\Theta_{\Gamma_1}\subseteq \Gamma_1^{\leq e_1}$ follows.

Let us consider the case where, with a certain probability, $M$ produces a stack content $bz$ by the end of the series of $\lambda$-moves and then pops $b$ at the next step without reading any input symbol. In this case, we also merge the entire process into a single $\lambda$-move of pop operation as described below.

\begin{enumerate}\vs{-1}
  \setlength{\topsep}{-2mm}%
  \setlength{\itemsep}{0mm}%
  \setlength{\parskip}{0cm}%

\item[4.] $\delta_1(q,\lambda,a \mmid p,\lambda) =  \sum_{r,b}  \delta^*[q,\lambda,a \mmid r,\lambda,b] \delta(r,\lambda,b \mmid p,\lambda)$ for  $q\in Q-Q_{halt}$ and $a\in\Gamma^{(-)}$, where $r\in Q$ and $b\in\Gamma$.
\end{enumerate}\vs{-1}

(iii) Assume that $M$ has already read $\dollar$ but $M$ is still in a non-halting state, say, $q^{(\dollar)}$ in $Q^{(\dollar)}$ making a series of consecutive $\lambda$-moves.
Unless $M$ reaches $Z_0$, similarly to Line 4., we merge this entire series of $\lambda$-moves and one pop operation into a single $\lambda$-move of pop operation.

\begin{enumerate}\vs{-1}
  \setlength{\topsep}{-2mm}%
  \setlength{\itemsep}{0mm}%
  \setlength{\parskip}{0cm}%

\item[5.]  $\delta_1(q^{(\dollar)},\lambda,a \mmid p^{(\dollar)},\lambda) =   \sum_{r,b}  \delta^*[q^{(\dollar)},\lambda,a \mmid r^{(\dollar)},\lambda,b]  \delta(r^{(\dollar)},\lambda,b \mmid p^{(\dollar)},\lambda)$ for  $a\in\Gamma^{(-)}$, $r^{(\dollar)}\notin Q_{halt}$, and $b\in\Gamma$, where $r\in Q$ and $b\in\Gamma$.
\end{enumerate}\vs{-1}

\n Once $M$  enters an inner state $p^{(\dollar)}$ in $Q_{halt}$ at scanning $Z_0$, we remove $Z_0$ and enter the same halting state.
Recall that $Z_0$ is no longer the bottom marker of $M_1$.

\begin{enumerate}\vs{-1}
  \setlength{\topsep}{-2mm}%
  \setlength{\itemsep}{0mm}%
  \setlength{\parskip}{0cm}%

\item[6.] $\delta_1(q^{(\dollar)},\lambda,Z_0 \mmid p^{(\dollar)},\lambda) =    \delta^*[q^{(\dollar)},\lambda,Z_0 \mmid p^{(\dollar)},\lambda,Z_0]$ for $q^{(\dollar)}\notin Q_{halt}$ and $p^{(\dollar)}\in Q_{halt}$.
\end{enumerate}\vs{-1}

\n Substages (i)--(iii) shown above guarantee that $M_1$'s $\lambda$-moves are limited to pop operations.


(2) We next convert $M_1$ to another error-equivalent 1ppda $M_2$ that conducts only the following types of moves:
($a$) $M$ pushes only one symbol without changing the exiting stack content, ($b$) $M$ replaces the topmost stack symbol by a (possibly different) single stack symbol, and
($c$) $M$ pops the topmost stack symbol.
We also demand that all $\lambda$-moves of $M_2$ are limited only to either ($b$) or ($c$). From these conditions, we obtain $e_2=2$.

To describe the intended conversion, we here introduce two special notations.
The first notation $[w]$ represents a new stack symbol that encodes  an element $w$ in $\Gamma_1^{\leq e_1}$.
To make $M_2$ remember the topmost stack symbol of $M_1$, we introduce  another notation $[q,a]$ indicating that $M_1$ is in inner state $q$ reading $a$ from the stack.

We set $Q_2=[Q_1,\Gamma_1]$, $\Gamma_2=\{[w]\mid w\in\Gamma_1^{\leq e_1}, \exists b\in\Gamma_1\, \exists u\in\Gamma_1^* \, [buw\in \Theta_{\Gamma_1}]\}\cup \{[a]\mid a\in\Gamma_1\}$.
Let $Z_{2,0}=[Z_{1,0}]$ denote a new bottom marker and let $q_{2,0}= [q_{1,0},Z_{1,0}]$ denote a new initial state. It then follows that $|Q_2|=|Q_1||\Gamma_1|\leq 6nm$ and $|\Gamma_2|\leq e_1|\Theta_{\Gamma_1}|+|\Gamma_1|\leq e_1|\Gamma_1|^{e_1+1}+|\Gamma_1|\leq e(n+2)m(2m)^{e(n+2)m+1} +2m \leq 2enm(2m)^{2enm}$ since $|\Theta_{\Gamma_1}|\leq |\Gamma_1|^{\leq e_1} = |\Gamma_1|^{e_1+1}$.

\s

(i) Consider the case where $M_1$ accesses $(\sigma,a)\in \check{\Sigma}\times\Gamma$, substitutes $bu$ (with $b\in\Gamma$ and $u\in\Gamma^*$) for $a$, and enters inner state $p$. In this case,
$M_2$ accesses $(\sigma,[w])$ with a current topmost stack symbol of the form $[w]\in \Gamma_2$, changes $[w]$ to $[u][w]$ (or $[w]$ if $u=\lambda$), and enters the inner state $[p,b]$.
This process is described as follows. Let $p,q\in Q_1$, $\sigma\in\check{\Sigma}$, $b\in\Gamma_1$, and $u,w\in\Gamma_1^{\leq e_1}$.

\begin{enumerate}\vs{-1}
  \setlength{\topsep}{-2mm}%
  \setlength{\itemsep}{0mm}%
  \setlength{\parskip}{0cm}%

\item[1.] $\delta_2([q,a],\sigma,[w] \mmid  [p,b],[u][w]) = \delta_1(q,\sigma,a \mmid p,bu)$ if $u\neq\lambda$.

\item[2.] $\delta_2([q,a],\sigma,[w] \mmid [p,b],[w])  =\delta_1(q,\sigma,a \mmid p,b)$.
\end{enumerate}\vs{-1}

(ii) Assume that $M_1$ accesses $(\lambda,a)$ and pops $a$ by entering inner state $p$ from $q$. When $w\neq\lambda$, $M_2$ accesses $(\sigma,[bw])$ in the inner state $[q,a]$ and changes $[bw]$ to $[w]$ by entering the inner state $[p,b]$.
The function $\delta_2$ is formally defined as follows.

\begin{enumerate}\vs{-1}
  \setlength{\topsep}{-2mm}%
  \setlength{\itemsep}{0mm}%
  \setlength{\parskip}{0cm}%

\item[3.] $\delta_2([q,a],\lambda,[bw] \mmid [p,b],[w]) =\delta_1(q,\lambda,a \mmid p,\lambda)$ if $w\neq\lambda$.

\item[4.] $\delta_2([q,a],\lambda,[b] \mmid [p,b],\lambda) = \delta_1(q,\lambda,a \mmid p,\lambda)$ if $b\neq Z_{1,0}$.
\end{enumerate}\vs{-1}

\n Notice that Lines 3--4 are legitimate definitions because $\delta_2[[q,a],\lambda,[bw]]$ and $\delta_2[[q,a],\lambda,[b]]$ coincide with $\delta_1[q,\lambda,a]$.

\s

(iii) Finally, we need to deal with the special case of $[q_0,Z_{1,0}]$. Because  $Z_{1,0}$ is the bottom marker of $M_1$, a slightly different treatment is required for $Z_{1,0}$ as shown below. Notice that $\delta_1(q,\sigma,Z_{1,0}\mmid p,wZ_{1,0})=0$ if $(1,\sigma)\neq (q_0,\cent)$.

\begin{enumerate}\vs{-1}
  \setlength{\topsep}{-2mm}%
  \setlength{\itemsep}{0mm}%
  \setlength{\parskip}{0cm}%
%
\item[5.] $\delta_2([q_0,Z_{1,0}],\cent,[Z_{1,0}]\mmid [u,Z_0][Z_{1,0}]) = \delta_1(q_0,\cent,Z_{1,0}\mmid p,buZ_0Z_{1,0})$.
\end{enumerate}\vs{-1}

\n It then follows from Lines 1--5 that $e_2$ equals $2$.

\s

(3) We further convert $M_2$ to $M_3$ so that $M_3$ satisfies all the conditions of (2) and, moreover, $M_3$ conducts no stationary operation that replaces any topmost symbol with a ``different'' single symbol. This is done by remembering the topmost stack symbol without writing it down into the stack. For this purpose, we use a new symbol of the form $[q,a]$ (where $q\in Q_2$ and $a\in\Gamma_2$) to indicate that $M_2$ is in inner state $q$, reading $a$ in the stack.
Let $Q_3=[Q_2,\Gamma_2]$, $q_{3,0}=[q_{2,0},Z_{2,0}]$, and $\Gamma_3=\Gamma_2\cup \{Z_{3,0}\}$, where $Z_{3,0}$ is a new bottom marker of $M_3$.
We then obtain $|Q_3|=|Q_2||\Gamma_2| \leq 2en^2m^2(2m)^{enm}$ and $|\Gamma_3|\leq 2enm(2m)^{2enm}$.

Furthermore, similarly to $\delta^*[q,\lambda,a\mmid p,\lambda,w]$, we define $\delta_2^*[q,\lambda,a\mmid p,\lambda,w]$ from $\delta_2$ to merge a series of consecutive $\lambda$-moves.
If $M_2$ makes a series of $\lambda$-moves that changes $(q,a)$ to $(r,d)$,  followed by a non-$\lambda$-move of changing $(r,d)$ to $(p,b)$ (resp., $(p,bd)$) while reading symbol $\sigma$, then $M_3$ accesses $(\sigma,c)$ for a certain topmost symbol $c$, and changes its inner state $[q,a]$ to $[p,b]$ without changing $c$ (resp., with changing $c$ to $ac$). In contrast, if $M_2$ changes $(q,a)$ to $(p,b)$ after a series of $\lambda$-moves, then $\delta_3$ satisfies Condition 4 of the ideal shape form. Note that a pop operation of $M_3$ is performed only by making a $\lambda$-move.

Formally, $\delta_3$ is constructed as follows. Let $p,q\in Q_2$ and
$a,b\in\Gamma_2$, $c\in\Gamma_3$, and $\sigma\in\check{\Sigma}$.

\begin{enumerate}\vs{-1}
  \setlength{\topsep}{-2mm}%
  \setlength{\itemsep}{0mm}%
  \setlength{\parskip}{0cm}%

\item[1.] $\delta_3([q,a],\sigma,c \mmid [p,b],c)= \sum_{r,d} \delta_2^*[q,\lambda,a \mmid r,\lambda,d] \delta_2(r,\sigma,d \mmid p,b)$.

\item[2.] $\delta_3([q,a],\sigma,c \mmid [p,b],ac) = \sum_{r,d} \delta_2^*[q,\lambda,a \mmid r,\lambda,d] \delta_2(r,\sigma,d \mmid p,bd)$.


\item[3.] $\delta_3([q,a],\lambda,c \mmid [p,c],\lambda) = \sum_{r,d} \delta_2^*[q,\lambda,a \mmid r,\lambda,d] \delta_2(r,\lambda,d \mmid p,\lambda)$.

\item[4.] $\delta_3([q,a],\lambda,c\mmid [p,b],c) = \sum_{r,d}\delta_2^*[q,\lambda,a\mmid r,\lambda,d] \delta_2(r,\lambda,d\mmid p,b)$.
\end{enumerate}\vs{-1}


(4) We convert $M_3$ to $M_4$ to satisfy all the conditions of (3) and to make only $\lambda$-moves of pop operations that follow only a (possible) series of pop operations.
Let $Q_4=Q_3\cup\{p'\mid p\in Q_{halt}\}$, $\Gamma_4=\Gamma_3$,  $q_{4,0}=q_{3,0}$, and $Z_{4,0}=Z_{3,0}$.
It follows that $|Q_4|\leq 2|Q_3|\leq 48en^2m^2(2m)^{2enm}$ and $|\Gamma_4|\leq 4enm(2m)^{2enm}$.
The probabilistic transition function $\delta_4$ is constructed as follows. A basic idea of our construction is that, when $M_3$ makes a pop operation after a certain non-pop operation, we combine them as a single move. As a new notation, we set $\hat{p}$ to be $p$ if $p\notin Q_{3,halt}$, and $p'$ otherwise. Let $\sigma\in\check{\Sigma}_{\lambda}$.

\begin{enumerate}\vs{-1}
  \setlength{\topsep}{-2mm}%
  \setlength{\itemsep}{0mm}%
  \setlength{\parskip}{0cm}%

\item[1.] $\delta_4(q,\sigma,a \mmid \hat{p},\lambda) = \sum_{r} \delta_3(q,\sigma,a \mmid r,a) \delta_3(r,\lambda,a \mmid p,\lambda)$.

\item[2.] $\delta_4(q,\sigma,a \mmid \hat{p},a) = \sum_{r} \delta_3(q,\sigma,a \mmid r,ba) \delta_3(r,\lambda,b \mmid p,\lambda)$ if $\sigma\neq\lambda$.

\item[3.] $\delta_4(q,\sigma,a \mmid p,ba) = \delta_3(q,\sigma,a \mmid p,ba)$ if $\sum_{s\in Q_3}\delta_3(p,\lambda,b \mmid s,\lambda)=0$ and $\sigma\neq\lambda$.
\end{enumerate}\vs{-1}
We then obtain $e_4=2$.

\s

(5) Finally, we set the desired 1ppda $N$ to be $M_4$. It is clear that $N$ satisfies all the ideal shape conditions. Thus, the proof of the lemma is completed.
\end{proofof}

\end{document}